\newaliascnt{theoremcounter}{definition}
\newtheorem{theoremOK}[theoremcounter]{Theorem}
\newaliascnt{factcounter}{definition}
\newtheorem{factOK}[factcounter]{Fact}
\crefname{factOK}{Fact}{Facts}
\newaliascnt{lemmacounter}{definition}
\newtheorem{lemmaOK}[lemmacounter]{Lemma}
\crefname{lemmaOK}{Lemma}{Lemmas}  
\newaliascnt{examplecounter}{definition}
\newtheorem{exampleOK}[examplecounter]{Example}
\newaliascnt{remarkcounter}{definition}
\newtheorem{remarkOK}[remarkcounter]{Remark}
\newaliascnt{corollarycounter}{definition}
\newtheorem{corollaryOK}[corollarycounter]{Corollary}
\newaliascnt{claimcounter}{definition}
\newtheorem{claimOK}[claimcounter]{Claim}
\crefname{figure}{Figure}{Figures}
\crefname{formula}{Formula}{Formulas}
\crefname{equality}{Equality}{Equalities}
\crefname{equalities}{Equalities}{Equalities}
\crefname{equation}{Rule}{Rules}
\newcommand{\mso}{{\upshape MSO}\xspace}
\newcommand{\unbound}{\ensuremath{\mathsf U}\xspace}
\newcommand{\msou}{{\upshape MSO+}\unbound}
\newcommand{\wmsou}{{\upshape WMSO+}\unbound}
\newcommand{\ignore}[1]{}
\knowledge\Alphabet{math notion}
\knowledge\AlphabetXx{math notion} 
\knowledge\Counters{math notion}
\newcommand{\EXP}[1]{#1\textsf{-EXP}}
\newcommand{\coNEXP}[1]{\textsf{co-}#1\textsf{-NEXP}}
\newrobustcmd{\set}[1]{\left\{{#1}\right\}}
\newcommand{\setof}[2]{\set{#1 \;|\; #2}}
\newrobustcmd{\tuple}[1]{\langle#1\rangle}
\newrobustcmd{\sem}[1]{\kl[\sem]{\llbracket}#1\kl[\sem]{\rrbracket}}
\knowledge\sem{math notion}
\newrobustcmd{\semdt}[2]{\kl[\semdt]{\llbracket}#2\kl[\semdt]{\rrbracket}_{#1}}
\knowledge\semdt{math notion}
\newrobustcmd{\ctree}[1]{\kl[\ctree]{\llparenthesis}#1\kl[\ctree]{\rrparenthesis}}
\knowledge\ctree{notion}
\newrobustcmd{\lang}[1]{\kl[\lang]{\Ll}(#1)}
\knowledge\lang{math notion}
\newrobustcmd{\STRErobust}{\kl[\STRE]{\textsf{STRE}}}
\newcommand{\STRE}{\STRErobust\xspace}
\knowledge\STRE{math notion}
\newrobustcmd{\SUProbust}{\kl[\SUP]{\textsf{SUP}}}
\newcommand{\SUP}{\SUProbust\xspace}
\knowledge\SUP{math notion}
\newrobustcmd{\makeshallow}[1]{\mathsf{mkshallow}(#1)}
\newrobustcmd{\collect}[1]{\mathsf{collect}(#1)}
\newrobustcmd{\concat}{\mathop{+\!\!+}}
\newrobustcmd{\Alphabet}{\kl[\Alphabet]{\mathbb{A}}}
\newrobustcmd{\AlphabetXx}{\kl[\AlphabetXx]{\mathbb{A}_\Xx}}
\newrobustcmd{\UnivAlphabet}{\kl[\Alphabet]{\mathbb{U}}}
\newrobustcmd{\Counters}{\kl[\Counters]{\mathbb{C}}}
\newrobustcmd{\Nat}{\mathbb{N}}
\newrobustcmd{\Aa}{\mathcal{A}}
\newrobustcmd{\Bb}{\mathcal{B}}
\newrobustcmd{\Cc}{\mathcal{C}}
\newrobustcmd{\Class}{\protect{\ensuremath{\mathfrak{C}}}}
\newrobustcmd{\Gg}{\mathcal{G}}
\newrobustcmd{\Hh}{\mathcal{H}}
\newrobustcmd{\Ii}{\mathcal{I}}
\newrobustcmd{\Ll}{\mathcal{L}}
\newrobustcmd{\Mm}{\mathcal{M}}
\newrobustcmd{\Nn}{\mathcal{N}}
\newrobustcmd{\Rr}{\mathcal{R}}
\newrobustcmd{\Ss}{\mathcal{S}}
\newrobustcmd{\Xx}{\mathcal{X}}
\newcommand{\down}{\downarrow}
\newcommand{\downwardclosure}[1]{{#1}{\kl[\downwardclosure]\down}}
\knowledge\downwardclosure{notion}
\newcommand{\embedsinto}{\mathrel{\kl[\embedsinto]{\sqsubseteq}}}
\knowledge\embedsinto{notion}
\newcommand{\hole}{\kl[\hole]{\scalebox{0.6}{$\square$}}}
\knowledge\hole{notion}
\newcommand{\opt}[1]{{#1}^{\kl[\opt]{?} \!}}
\knowledge\opt{notion}
\newcommand{\iter}[1]{{#1}^{\kl[\iter]* \!}}
\knowledge\iter{notion}
\newcommand{\replace}[2]{#1[#2]}
\newcommand{\alt}{\;|\;}
\newcommand{\pr}{\kl[\pr]{\mathit{pr}}}
 \knowledge\pr{math notion}
\newcommand{\rootLab}{\mathit{root}}
\newrobustcmd{\rank}{\kl[\rank]{\mathit{rank}}}
\knowledge\rank{math notion}
\newrobustcmd{\otyp}{\kl[\otyp]{\mathsf{o}}}
\knowledge\otyp{math notion}
\newrobustcmd{\arr}{\mathbin{\kl[\arr]{\to}}}
\knowledge\arr{math notion}
\newrobustcmd{\ord}{\kl[\ord]{\mathit{ord}}}
\knowledge\ord{math notion}
\newrobustcmd{\BT}[1]{\kl[\BT]{\mathsf{BT}({#1})}}
\knowledge\BT{math notion}
\newrobustcmd{\lamdots}{.\cdots{}.}
\newrobustcmd{\tconst}[1]{\kl[\tconst]{\ensuremath{\overline{#1}}}}
\knowledge\tconst{math notion}
\newrobustcmd{\tvar}[1]{\kl[\tvar]{\ensuremath{\overline{#1}}}}
\knowledge\tvar{math notion}
\newrobustcmd{\tlambda}[1]{\kl[\tlambda]{\ensuremath{\overline{\LaTeXlambda #1}}}}
\newrobustcmd{\tlx}{\tlambda x}
\knowledge\tlambda{math notion}
\newrobustcmd{\tat}{\kl[\tat]{@}}
\knowledge\tat{math notion}
\newcommand{\ensuremathsf}[1]{\ensuremath{\mathsf{#1}}}
\newrobustcmd{\leta}{\ensuremathsf{a}}
\newrobustcmd{\letb}{\ensuremathsf{b}}
\newrobustcmd{\letbone}{\ensuremathsf{b_1}}
\newrobustcmd{\letbtwo}{\ensuremathsf{b_2}}
\newrobustcmd{\letc}{\ensuremathsf{c}}
\newrobustcmd{\letcone}{\ensuremathsf{c_1}}
\newrobustcmd{\letctwo}{\ensuremathsf{c_2}}
\newrobustcmd{\lete}{\ensuremathsf{e}}
\newrobustcmd{\letq}{\ensuremathsf{q}}
\newrobustcmd{\stap}{\ensuremathsf{p}}
\newrobustcmd{\varN}{\ensuremathsf{N}}
\newrobustcmd{\varf}{\ensuremathsf{f}}
\newrobustcmd{\varg}{\ensuremathsf{g}}
\newrobustcmd{\vart}{\ensuremathsf{t}}
\newrobustcmd{\varx}{\ensuremathsf{x}}
\newrobustcmd{\vary}{\ensuremathsf{y}}
\newrobustcmd{\varz}{\ensuremathsf{z}}
\newrobustcmd{\nonA}{\ensuremathsf{A}}
\newrobustcmd{\nonS}{\ensuremathsf{S}}
\newmathcommand\applyto{\kl[\applyto]{\textcolor{white}{{\cdot}}}}
 \knowledge\applyto{notion}
\newrobustcmd{\mar}{\kl[\mar]{\mathit{mar}}}
\knowledge\mar{math notion}
\newcommand{\subst}[3]{{#1[#2 / #3]}}
\newrobustcmd{\reify}[1]{\kl[\reify]{\ensuremath{#1^\bullet}}}
\knowledge\reify{math notion}
\newcommand{\reducesto}{\to_\beta}
\newcommand{\reducesplus}{\to_{\beta^+}}
\newcommand{\headreducesto}{\to_{\mathsf{h}\beta}}
\newcommand{\limbp}[1]{\mathsf{BT}^+(#1)}
\newrobustcmd{\NT}[1]{\kl[\NT]{\Ll({#1})}}
\knowledge\NT{math notion}
\newrobustcmd{\ndd}{\kl[\ndd]{\ensuremath{\mathsf{nd}}}}
\knowledge\ndd{math notion}
\newrobustcmd{\su}{\mathsf{succ}}
\newrobustcmd{\size}[1]{\lvert #1\rvert}
\newrobustcmd{\successor}[1]{\to_{#1}}
\newrobustcmd{\tsuccessor}[1]{\to^*_{#1}}
\newrobustcmd{\simul}{\sqsupseteq}   
\newrobustcmd{\tsimul}{\ensuremath{\mathrel{\kl[\tsimul]{\sqsupseteq}}}}
\knowledge\tsimul{math notion}
\newrobustcmd{\parent}{\kl[\parent]{\mathit{par}}}
\knowledge\parent{math notion}
\newrobustcmd{\child}{\kl[\child]{\mathit{ch}}}
\knowledge\child{math notion}
\newrobustcmd{\Dirs}{\kl[\Dirs]{\mathit{Dirs}}}
\knowledge\Dirs{math notion}
\newrobustcmd{\val}{\kl[\val]{\mathit{val}}}
\knowledge\val{math notion}
\newrobustcmd{\inc}{\kl[\inc]{\mathtt{i}}}
\knowledge\inc{math notion}
\newrobustcmd{\reset}{\kl[\reset]{\mathtt{r}}}
\knowledge\reset{math notion}
\newrobustcmd{\e}{\kl[\e]{\varepsilon}}
\knowledge\e{math notion}
\newrobustcmd{\muN}{\kl[\muN]{\mu}^{\kl[\muN]N}}
\knowledge\muN{math notion}
\newrobustcmd{\dup}{\kl[\dup]{\uparrow}}
\knowledge\dup{math notion}
\newrobustcmd{\ddown}{\kl[\ddown]{\downarrow}}
\knowledge\ddown{math notion}
\newrobustcmd{\de}{\kl[\de]{\circlearrowleft}}
\knowledge\de{math notion}
\newrobustcmd{\act}{c}
\newrobustcmd{\Rup}{\kl[\Rup]{\Uparrow}}
\knowledge\Rup{math notion}
\newrobustcmd{\Rdown}{\kl[\Rdown]{\Downarrow}}
\knowledge\Rdown{math notion}
\newrobustcmd{\HORS}{\kl{HORS}\xspace}
\newcommand{\tree}[1]{\mathsf{tree}(#1)}
\newrobustcmd{\gameon}[2]{(#1, #2)}
\let\LaTeXlambda\lambda
\renewrobustcmd\lambda{\kl[\lambda]{\LaTeXlambda}}
\knowledge{\lambda}{math notion}
\newrobustcmd\exptype[1]{^{\kl[\exptype]{#1}}}
\knowledge\exptype{math notion}
\newif\ifstartedinmathmode
\newcommand*{\st}{\relax\ifmmode\startedinmathmodetrue\else\startedinmathmodefalse\fi\ifstartedinmathmode{\,.\,}\else{\protect{s.t.\xspace~}}\fi}
\def\itemizename{itemize}
\def\enumeratename{enumerate}
\def\ifinlist{%
  \ifx\@currenvir\itemizename
    \expandafter\@firstoftwo
  \else
    \ifx\@currenvir\enumeratename
      \expandafter\@firstoftwo
    \else
      \expandafter\@secondoftwo
    \fi
  \fi}
\renewcommand{\leq}{\leqslant}
\renewcommand{\geq}{\geqslant}
\newcommand{\cl}{\mathit{cl}}
\newenvironment{proofof}[1]
  {\trivlist\PRstyle\item[]{\bfseries Proof of #1:}\newline}{\QED\endtrivlist}
\newcommand\synonym[1]{\knowledge{#1}{synonym}}
\knowledge{\varN}{notion}
\begin{document}

\setcounter{page}{127}
\publyear{22}
\papernumber{2145}
\volume{188}
\issue{3}


    \finalVersionForARXIV

\title{Cost Automata, Safe Schemes, and Downward Closures}

\author{David Barozzini\thanks{Author supported by the National Science Centre, Poland (grant no.\ 2016/22/E/ST6/00041).}\\
	Institute of Informatics \\
	University of Warsaw \\ Warsaw, Poland\\
	dbarozzini{@}mimuw.edu.pl
\and	Lorenzo Clemente\thanksas{1}\\
	Institute of Informatics \\
	University of Warsaw \\ Warsaw, Poland\\
	clementelorenzo@gmail.com
\and	Thomas Colcombet\thanks{Author supported by the European Research Council (ERC) under the European
                 Union’s Horizon 2020 research and innovation programme (grant agreement No.670624),
		         and the DeLTA ANR project (ANR-16-CE40-0007).}\\
	IRIF-CNRS-Université de Paris\\
	Paris, France\\
	thomas.colcombet@irif.fr
\and	Paweł Parys\thanksas{1}\thanks{Address for correspondence: Institute of Informatics, University of Warsaw,
                        Warsaw,  Poland.  \newline \newline
                    \vspace*{-6mm}{\scriptsize{Received September 2021; \ accepted March  2023.}}}
    \\
	Institute of Informatics \\
	University of Warsaw \\ Warsaw, Poland\\
	parys{@}mimuw.edu.pl}

\maketitle

\runninghead{D. Barozzini et al.}{Cost Automata, Safe Schemes, and Downward Closures}

\vspace{-5ex}
\begin{abstract}
	In this work we prove decidability of the model-checking problem for safe recursion schemes against properties defined by alternating B-automata.
	We then exploit this result to show how to compute downward closures of languages of finite trees recognized by safe recursion schemes.

	Higher-order recursion schemes are an expressive formalism used to define languages of finite and infinite ranked trees by means of fixed points of lambda terms.
	They extend regular and context-free grammars, and are equivalent in expressive power to the simply typed $\lambda Y$-calculus and collapsible pushdown automata.
	Safety in a syntactic restriction which limits their expressive power.

	The class of alternating B-automata is an extension of alternating parity automata over infinite trees;
	it enhances them with counting features that can be used to describe boundedness properties.
\end{abstract}

\begin{keywords}
	Cost logics, cost automata, downward closures, higher-order recursion schemes, safe recursion schemes
\end{keywords}


\section{Introduction}

	Higher-order functions are nowadays widely used not only in functional programming languages such as Haskell
	and the OCAML family,
	but also in mainstream languages such as Java, JavaScript, Python, and C++.
	\emph{\kl{Recursion schemes}} are faithful and algorithmically manageable abstractions of the control flow of higher-order programs~\cite{KobayashiPrograms}.
	A deterministic recursion scheme normalizes into a possibly infinite B\"ohm tree,
	and in this respect recursion schemes can equivalently be presented as simply-typed lambda-terms
	using a higher-order fixpoint combinator $Y$~\cite{schemes-lY}.
	There are also (nontrivial) inter-reductions between recursion schemes and the equi-expressive
	formalisms of collapsible higher-order pushdown automata~\cite{collapsible-equivalence}
	and ordered tree-pushdown automata~\cite{Ordered-Tree-Pushdown}.
	In another semantics, also used in this paper,
	nondeterminstic recursion schemes are recognizers of languages of finite trees,
	and in this view they are also known as higher-order OI grammars~\cite{Damm82,Kobele:Salvati:IC:2015},
	generalising indexed grammars~\cite{AhoIndexed} (which are recursion schemes of order two)
	and ordered multi-pushdown automata~\cite{OrderedMultiPushdown}.
	
	The most celebrated algorithmic result in the analysis of recursion schemes
	is decidability of the model-checking problem against properties expressed in monadic second-order logic (\mso):
	given a recursion scheme $\Gg$ and an \kl{\mso sentence} $\varphi$,
	one can decide whether the B\"ohm tree generated by $\Gg$ satisfies $\varphi$~\cite{Ong:LICS:2006}.
	This fundamental result has been reproved several times,
	that is, using collapsible higher-order pushdown automata~\cite{collapsible},
	intersection types~\cite{KobayashiOngtypes},
	Krivine machines~\cite{KrivineWS},
	order-reducing transformations~\cite{order-red},
	and it has been extended in diverse directions such as
	global model checking~\cite{globalMC},
	logical reflection~\cite{reflection},
	effective selection~\cite{selection},
	and a transfer theorem via models of lambda-calculus~\cite{ModelSM}.
	When the input property is given as an \mso formula,
	the model-checking problem is non-elementary already for trees of order $0$ (regular trees)~\cite{Stockmeyer:PhD:1974};
	when the input property is presented as a parity tree automaton
	(which is equi-expressive with \mso on trees, but less succinct),
	the \mso model-checking problem for recursion schemes of order $n$
	is complete for $n$-fold exponential time~\cite{Ong:LICS:2006}.
	Despite these hardness results,
	the model-checking problem can be solved efficiently on multiple nontrivial examples,
	thanks to the development of several recursion-scheme model checkers~%
	\cite{KobayashiPrograms, HorSat, GTRecS, TravMC2, PrefaceTool}.

\paragraph{Unboundedness problems I: Diagonal problem and downward closures.}
	Recently, an increasing interest has arisen for model checking quantitative properties going beyond the expressive power of \mso.
	The \emph{\kl{diagonal problem}} is an example of a quantitative property not expressible in \mso.
	Over words, the problem asks, for a given set of letters $\Sigma$ and a language of finite words $\Ll$,
	whether for every $n \in \Nat$ there is a word in $\Ll$ where every letter from $\Sigma$ occurs at least $n$ times.
	The \kl{diagonal problem} for languages of finite words recognized by recursion schemes 
	is decidable~%
	\cite{ClementeParysSalvatiWalukiewicz:LICS:2016, diagonal-safe, Parys:FSTTCS:2017}.

	The class of languages of finite words recognized by \kl{recursion schemes} form a so-called \emph{full trio} (i.e., it is closed under regular transductions)
	and for full trios decidability of the \kl{diagonal problem} has interesting algorithmic consequences,
	such as computability of downward closures~\cite{Zetzsche:ICALP:2015,CzerwinskiMartensRooijenZeitounZetzsche:DMTCS:2017}
	and decidability of separability by piecewise testable languages~\cite{sep-piecewise-test}.
	
	The problem of computing downward closures is an important problem in its own right.
	The \emph{\kl{downward closure}} of a language $\Ll$ of finite trees
	is the set $\downwardclosure \Ll$ of all trees that can be \kl{homeomorphically embedded} into some tree in $\Ll$.
	By Higman's lemma~\cite{Higman}, the embedding relation on finite ranked trees is a well quasi-order.
	Consequently, the downward closure $\downwardclosure \Ll$ of an arbitrary set of trees $\Ll$ is always a regular language.
	The downward closure of a language offers a nontrivial regular abstraction thereof:
	even though the actual count of letters is lost,
	their limit properties are preserved,
	as well as their order of appearance.
	We say that the \kl{downward closure} is \emph{computable}
	when a finite automaton for $\downwardclosure\Ll$ can be effectively constructed (which is not true in general).
	Downward closures are computable for a wide class of languages of finite words
	such as those recognized by
	context-free grammars~\cite{BachmeierLuttenbergerSchlund:2015,Courcelle:1991,vanLeeuwen:1978},
	Petri nets~\cite{DownwardPN:ICALP:2010},
	stacked counter automata~\cite{Zetzsche:STACS:2015},
	context-free FIFO rewriting systems and 0L-systems~\cite{AbdullaBoassonBouajjani:ICALP:2001},
	second-order pushdown automata~\cite{Zetzsche:ICALP:2015},
	higher-order pushdown automata~\cite{diagonal-safe},
	and (possibly unsafe) recursion schemes over words~\cite{ClementeParysSalvatiWalukiewicz:LICS:2016}.
	Over finite trees, it is known that downward closures are computable
	for the class of regular tree languages \cite{Goubault-Larrecq:Schmitz:ICALP:2016}.
	We are not aware of such computability results for other classes of languages of finite trees.

\paragraph{Unboundedness problems II: B-automata.}

	In another line of research,
	\kl{B-automata}, and among them \emph{\kl{alternating B-automata}}, have been put forward as a quantitative extension to
	\mso~\cite{DBLP:conf/lics/BojanczykC06,DBLP:conf/icalp/Colcombet09,reg-cost-finite-trees,reg-cost-infinite-words,weak-cost,quasi-weak-automata}.
	They extend alternating automata over infinite trees~\cite[Chapter 9]{AutomataLogicsInfiniteGames:2002}
	by nonnegative integer counters that can be incremented or reset to zero.
	The extra counters do not constrain the availability of transitions during a run
	(unlike in other superficially similar models, such as counter machines),
	but are used in order to define the acceptance condition:
	an infinite tree is \emph{\kl(aut){$n$-accepted}} if~$n$ is a bound
	on the values taken by the counters during an accepting run of the automaton over it.

\AP	The \intro{universality problem} consists in deciding whether for every tree there is a bound~$n$ for which it is \kl(aut){$n$-accepted}.
\AP	The \intro{boundedness problem} asks whether there exists a bound~$n$ for which all trees are \kl(aut){$n$-accepted}.
	These two problems are closely related. Their decidability is an important open problem in the field,
	and proving decidability of the \kl{boundedness problem}
	would solve the long standing nondeterministic Mostowski index problem~\cite{ColcombetLoding:ICALP:2008}.
	However, though open in general, the \kl{boundedness problem} is known to be decidable over finite words~\cite{DBLP:conf/icalp/Colcombet09},
	finite trees~\cite{reg-cost-finite-trees},
	infinite words~\cite{reg-cost-infinite-words},
	as well as over infinite trees for its weak~\cite{weak-cost} and the more general quasi-weak~\cite{quasi-weak-automata} variant.

\AP	Another expressive formalism for unboundedness properties beyond \mso is \msou,
	which extends \mso by a novel quantifier ``$\unbound X.\varphi$''~\cite{BojanczykU}
	stating that there exist arbitrarily large finite sets $X$ satisfying $\varphi$.
	This logic is incomparable with B-automata.
	The model-checking problem of recursion schemes against its weak fragment \wmsou,
	where monadic second-order quantifiers are restricted to finite sets, is decidable~\cite{wmsou-schemes}.

\paragraph{Contributions.}

	Our first contribution is decidability of the \kl{model-checking problem} of properties expressed by \kl{alternating B-automata}
	for an expressive class of recursion schemes called \emph{\kl{safe recursion schemes}}.
	As generators of infinite trees,
	\kl{safe recursion schemes} are equivalent to higher-order pushdown automata without the collapse operation~\cite{easy-trees}
	and are strictly less expressive than general \kl{(unsafe) recursion schemes}~\cite[Theorem 1.1]{ho-new}.
	Here, the \kl{model-checking problem} asks whether a concrete infinite tree
	(the \kl{B\"ohm tree generated} by a \kl{safe recursion scheme}) is accepted by the \kl{B-automaton} for some bound.
	This problem happens to be significantly simpler than the universality/boundedness problems described above.
	The proof goes by reducing the \kl(scheme){order} of the \kl{safe recursion scheme}
	similarly as done by Knapik, Niwiński, and Urzyczyn~\cite{easy-trees} to show decidability of the \mso model-checking problem,
	at the expense of making the property automaton two-way.
	We then rely on the fact that two-way alternating B-automata
	can effectively be converted to equivalent one-way alternating B-automata~\cite{quasi-weak-logic}.
	Our result is incomparable with the seminal decidability result of Ong~\cite{Ong:LICS:2006},
	since \begin{inparaenum}[(1)]\item alternating B-automata are strictly more expressive than \mso,
	however \item we obtain it under the more restrictive safety assumption. \end{inparaenum}
	Whether the safety assumption can be dropped while preserving decidability of the model-checking problem
	against \kl{B-automata} properties,
	thus strictly extending Ong's result to the more general setting of boundedness properties, remains open.
	
	Our second contribution is to define the following generalization of the diagonal problem from words to trees:
	given a language of finite trees $\Ll$ and a set of letters $\Sigma$,
	decide whether for every $n \in \Nat$ there is a tree $T\in\Ll$ such that every letter from $\Sigma$ occurs at least $n$ times on every branch of $T$.
	This generalization is designed in order to reduce computation of
	\kl{downward closures} to the \kl{diagonal problem}, in the same fashion as for finite words.
	Our proof strategy is to represent downward-closed sets of trees $\downwardclosure \Ll$ by \kl{simple tree regular expressions},
	which are a subclass of regular expressions for finite trees~%
	\cite{Goubault-Larrecq:Schmitz:ICALP:2016,FinkelGoubaultLarrecq:Completions:STACS:2009}.
	By further analysing and simplifying the structure of these expressions,
	computation of the downward closure can be reduced to finitely many instances of the diagonal problem.
	Unlike in the case of finite words,
	we do not know whether for full trios of finite trees there exists a converse reduction
	from the diagonal problem to the problem of computing downward closures.

	Our third contribution is decidability of the diagonal problem for languages of finite trees recognized by safe recursion schemes
	(and thus computability of downward closures of those languages).
	The diagonal problem can directly be expressed in a logic called \emph{\kl{weak cost monadic second-order logic}} (\kl{WCMSO})~\cite{weak-cost},
	which extends weak \mso with atomic formulas of the form $\size X < \varN$
	stating that the cardinality of the \kl{monadic variable} $X$ is smaller than $\varN$.
	Since \kl{WCMSO} can be translated to \kl{alternating B-automata}~\cite{weak-cost},
	the diagonal problem reduces to the \kl{model-checking problem} of \kl{safe recursion schemes} against \kl{alternating B-automata},
	which we have shown decidable in the first part.
	Note that it seems difficult to express the \kl{diagonal problem} using \kl{alternating B-automata} directly,
	and indeed the fact that alternating B-automata can express all \kl{WCMSO} properties is nontrivial.
	It is worth stressing that this connection between these two unboundedness problems (the diagonal problem and model-checking of B-automata) is new and has not been observed before.
	
	This paper is based on a conference paper~\cite{conference-version}, showing the same results; we add here missing proofs and some examples.

\paragraph{Outline.}

	In \cref{sec:preliminaries}, we define \kl{recursion schemes} and \kl{B-automata}.
	In \cref{sec:model-checking}, we present our first result,
	namely decidability of model checking of safe recursion schemes against B-automata.
	In \cref{sec:ideals}, we introduce the \kl{diagonal problem}, and we show how it can be used to compute \kl{downward closures}.
	In \cref{sec:downward-closure}, we solve the diagonal problem for safe recursion schemes.
	We conclude in \cref{sec:conclusions} with some open problems.


\section{Preliminaries}
\label{sec:preliminaries}

\paragraph{Recursion schemes.}

\AP	A \intro{ranked alphabet} is a (usually finite) set $\Alphabet$ of \intro{letters},
	together with a function $\intro*\rank\colon\Alphabet\to\Nat$, 
	assigning a \intro{rank} to every \kl{letter}.
	When we define trees below, we require that a \kl{node} labeled by a \kl{letter} $a$ has exactly $\rank(a)$ children.
	In the sequel, we usually assume some fixed finite \kl{ranked alphabet} $\intro*\Alphabet$ that contains a distinguished letter $\bot$ of rank $0$.

\AP	The set of \intro[type]{(simple) types} is constructed
	from a unique ground type $\intro*\otyp$ using a binary operation $\intro*\arr$; 
	namely $\otyp$ is a type, and if $\alpha$ and $\beta$ are types, so is $\alpha\arr\beta$.
	By convention, $\arr$ associates to the right, that is, $\alpha\arr\beta\arr\gamma$ is understood as $\alpha\arr(\beta\arr\gamma)$.
	A \kl{type} $\otyp\arr\dots\arr\otyp$ with $k$ occurrences of $\arr$ is also written as $\otyp^k\arr o$.
\AP	The \intro(type){order} of a \kl{type} $\alpha$, denoted $\intro*\ord(\alpha)$ is defined by induction: 
	$\ord(\otyp)=0$ and $\ord(\alpha_1\arr\dots\arr\alpha_k\arr\otyp)=\max_i(\ord(\alpha_i))+1$ for $k\geq 1$.
	
\AP	We coinductively define both \intro{lambda-terms} and a two-argument relation ``$M$ is a \intro{lambda-term of type} $\alpha$'' as follows:%
	\footnote{Cf.~the works~\cite{Berarducci-Dezani:TCS:1999,KennawayKlopSleepDeVries:TCS:1997}
	for analogous definitions in the literature on infinite lambda calculus.
	Note that we use \kl{letters} (constants) from a \kl{ranked alphabet}, which is a minor modification that suits our needs.}
	\begin{compactitem}
	\itemAP	a \kl{letter} $a\in\Alphabet$ is a \kl{lambda-term of type} $\otyp^{\rank(a)}\arr\otyp$;
	\itemAP	for every \kl{type} $\alpha$ there is a countable set $\{x,y,\dots\}$ of \intro{variables of type~$\alpha$} which can be used as \kl{lambda-terms of type} $\alpha$;
	\itemAP	if $M$ is a \kl{lambda-term of type} $\beta$ and $x$ a \kl{variable of type} $\alpha$, then $\intro*\lambda x.M$
	is a \kl{lambda-term of type} $\alpha\arr\beta$; this construction is called a \intro{lambda-binder};
	\itemAP	if $M$ is a \kl{lambda-term of type} $\alpha\arr\beta$, and $N$ is a \kl{lambda-term of type} $\alpha$, then $M\intro*\applyto N$ is a \kl{lambda-term of type} $\beta$, called an \intro{application}.
	\end{compactitem}
	Note that this definition is coinductive, meaning that lambda-terms may be infinite.
\AP	As usual, we identify \kl{lambda-terms} up to
	\intro{alpha-conversion} (i.e., renaming of bound \kl(lambda){variables}).
	Notice that, according to our definition, every \kl{lambda-term}
	(and in particular every \kl(lambda){variable}) has a particular \kl{type} associated with it.
	We use here the standard notions of \intro(lambda){free variable}, \intro{subterm},
	(capture-avoiding) \intro{substitution}, and \intro{beta-reduction}.
\AP	A \intro{closed} \kl{lambda-term} does not have \kl(lambda){free variables}.
	For a \kl{lambda-term} $M$ \kl{of type} $\alpha$, the \intro(lambda){order} of $M$, denoted $\ord(M)$, is defined as $\ord(\alpha)$.
\AP	A \kl{lambda-term} $M$ is a \intro{first-order lambda-term}\label{page:first-order}
	if every \kl{subterm} of $M$ (including $M$ itself) has \kl(lambda){order} at most $1$
	and every \kl(lambda){free variable} of $M$ has \kl(lambda){order} $0$.
\AP	An \intro{applicative term} is a \kl{lambda-term} not containing \kl{lambda-binders} (it contains only \kl{letters}, \kl{applications}, and \kl(lambda){variables}).

\AP	A \kl{lambda-term} $M$ is \intro{superficially safe} if all \kl(lambda){free variables} $x$ thereof satisfy $\ord(x) \geq \ord(M)$.
	A \kl{lambda-term} $M$ is \intro(lambda){safe} if
	for every \kl{subterm} thereof of the form $K\applyto L$
	(i.e., an \kl{application}),
	the \kl{subterm} $L$ is \kl{superficially safe}.\footnote{%
		Some definitions of safe lambda-terms add the following requirement:
		if $K\applyto L$ is a \kl{subterm} of $M$, and $K$ is not an \kl{application}, then also $K$ is required to be \kl{superficially safe}~\cite{schemes-lY,safe}.
		This does not change anything when it comes to \kl(scheme){safety} of $\Lambda(\Gg)$ for a \kl{recursion scheme} $\Gg$:
		if $K\applyto L$ is a \kl{subterm} of $\Lambda(\Gg)$, and $K$ is not an \kl{application}, then $K$ is either \kl{closed} or a \kl(lambda){variable},
		so it is always \kl{superficially safe}.}
	For example, if $\leta,\varx,\varx'$ are \kl{of type} $\otyp$ and $\vary,\vary'$ are \kl{of type} $\otyp\arr\otyp$,
	then the \kl{lambda-term} $(\lambda\vary.\leta)\applyto(\lambda\varx.\vary'\applyto\leta)$ is \kl(lambda){safe},
	but the \kl{lambda-term} $(\lambda\vary.\leta)\applyto(\lambda\varx.\varx')$ is not \kl(lambda){safe}:
	$\varx'$ is an \kl(lambda){order}-$0$ \kl(lambda){free variable} in the \kl(lambda){order}-$1$ \kl{subterm} $(\lambda\varx.\varx')$ located on the argument position of an \kl{application}.
	Intuitively, \kl(lambda){safety} is a syntactic restriction that guarantees that (under appropriate assumptions)
	there is no need to rename bound \kl(lambda){variables} when performing \kl{substitution},
	since variable capture is guaranteed not to happen for safe \kl{lambda-terms}.
	This simplifies the analysis of \kl{lambda-terms}, and allows constructions by induction on the \kl(scheme){order}, as done in Knapik et al.~\cite{easy-trees}.
	Safe \kl{lambda-terms} are semantically less expressive than their unrestricted counterpart.

\AP	A \intro{(higher-order, deterministic) recursion scheme} over the alphabet $\Alphabet$ is a tuple $\Gg=\tuple{\Alphabet,\allowbreak\Nn,\allowbreak X_0,\allowbreak\Rr}$, 
	where $\Nn$ is a finite set of \kl[type]{typed} \intro{nonterminals},
	 $X_0\in\Nn$ is the \intro{initial nonterminal}, 
	and $\Rr$ is a function assigning to every \kl{nonterminal}
	$X\in\Nn$ \kl{of type} $\alpha_1\arr\cdots\arr\alpha_k\arr\otyp$ a finite \kl{lambda-term} of the form $\lambda x_1\lamdots\lambda x_k.K$, 
	of the same \kl[of type]{type} $\alpha_1\arr\cdots\arr\alpha_k\arr\otyp$,
	in which $K$ is an \kl{applicative term} with \kl(lambda){free variables} in $\Nn\uplus\set{x_1,\dots,x_k}$.
	We refer to $\Rr(X)$ as the \intro{rule} for $X$.
\AP	The \intro(scheme){order} of a \kl{recursion scheme} $\ord(\Gg)$ is the maximum \kl(type){order} of its \kl{nonterminals}.

\AP	The \kl{lambda-term} \intro(lambda){represented}
	by a \kl{recursion scheme} $\Gg$ as above, denoted $\intro*\Lambda(\Gg)$,
	is the limit of applying recursively the following operation to $X_0$:
	take an occurrence of some \kl{nonterminal} $X$, and replace it with $\Rr(X)$
	(the \kl{nonterminals} should be chosen in a fair way, so that every \kl{nonterminal} is eventually replaced).
	Thus, $\Lambda(\Gg)$ is a (usually infinite) regular \kl{lambda-term}
	obtained by unfolding the \kl{nonterminals} of $\Gg$ according to their definition.
	We remark that when substituting $\Rr(X)$ for a \kl{nonterminal} $X$
	there is no need for any renaming of \kl(lambda){variables} (\kl{capture-avoiding substitution}),
	since $\Rr(X)$ does not contain \kl(lambda){free variables} other than \kl{nonterminals}.
	We only consider \kl{recursion schemes} for which $\Lambda(\Gg)$ is well-defined (e.g.~by requiring that $\Rr(X)$ is not a single \kl{nonterminal}).
\AP	A \kl{recursion scheme} $\Gg$ is \intro(scheme){safe} if $\Rr(X)$ is safe for all nonterminals $X$. 
	When this is the case, then also the lambda-term $\Lambda(\Gg)$ is \kl(lambda){safe}.

\AP	A \intro{tree} is a \kl{closed} \kl{applicative term} \kl{of type} $\otyp$.
	Note that such a \kl[lambda-term]{term} is coinductively of the form $a\applyto M_1\applyto \cdots\applyto M_r$,
	where $a\in\Alphabet$ is of \kl{rank} $r$, and where $M_1,\dots,M_r$ are again \kl{trees}.
	Thus, a \kl{tree} defined this way can be identified with a tree understood in the traditional sense:
	$a$ is the label of its root, and $M_1, \dots, M_r$ are subtrees rooted at the $r$ children of the root, from left to right.
	For \kl{trees} we employ the usual notions of \intro{node}, \intro{root}, \intro{leaf}, \intro{child}, \intro{parent}, \intro{branch}, and \intro{subtree}.
\AP	A \kl{tree} is \intro{regular} if it has finitely many distinct \kl{subtrees}.

\AP	The \intro{B\"ohm tree} of a \kl{closed} \kl{lambda-term} $M$ \kl{of type} $\otyp$, denoted $\intro*\BT M$,
	is the \kl{tree} defined coinductively as follows:
	if there is a sequence of \kl{beta-reductions} from $M$ to a \kl{lambda-term} of the form $a\applyto M_1\applyto\dots\applyto M_r$
	(where $a\in\Alphabet$ is a \kl{letter}),
	then $\BT M=a\applyto(\BT{M_1})\applyto\dots\applyto(\BT{M_r})$;
	otherwise $\BT M=\bot$, where $\bot\in\Alphabet$ is a distinguished \kl{letter} of \kl{rank} $0$.
	It is a classical result that $\BT M$ exists,
	and is uniquely defined \cite{Berarducci-Dezani:TCS:1999,KennawayKlopSleepDeVries:TCS:1997}.
\AP	The \intro{tree generated by} a \kl{recursion scheme} $\Gg$,
	denoted $\BT \Gg$, is $\BT {\Lambda(\Gg)}$.

\AP	We say that a \kl{closed} \kl{lambda-term} $N$ \kl{of type} $\otyp$ is \intro(lambda){normalizing} if $\BT{N}$ does not contain the special \kl{letter} $\bot$;
\AP	a \kl{recursion scheme} $\Gg$ is \intro(scheme){normalizing} if $\Lambda(\Gg)$ is
	\kl(lambda){normalizing}.
	This notion is analogous to productivity in grammars:
	in a \kl(scheme){normalizing} \kl{recursion scheme} / \kl{lambda-term} the reduction process always terminates producing a new \kl{node}.
	It is possible to transform every \kl{recursion scheme} $\Gg$ into a \kl(scheme){normalizing} \kl{recursion scheme} $\Gg'$
	\kl{generating} the same \kl{tree} as $\Gg$, up to renaming $\bot$ into some non-special \kl{letter} $\bot\!'$ (cf.~\cite[Section~5]{haddad-fics}).\label{page:fully-convergent}
	Moreover, the construction preserves \kl(scheme){safety} and the \kl(scheme){order}.

	\begin{exampleOK}\label{ex:scheme}
		Consider the \kl{ranked alphabet} $\Alphabet$ containing two \kl{letters} $\leta,\ndd$ of \kl{rank} 2,
		two \kl{letters} $\letb_1, \letb_2$ of \kl{rank} 1, and two \kl{letters} $\bot,\letc$ of \kl{rank} $0$.
		Let $\Gg$ be the \kl{recursion scheme} consisting of an \kl{initial nonterminal} $\nonS$ of \kl(type){order}-0 \kl[of type]{type} $\otyp$
		and an additional \kl{nonterminal} $\nonA$ of \kl(type){order}-2 \kl[of type]{type}
		$(\otyp \arr \otyp) \arr (\otyp \arr \otyp) \arr \otyp \arr \otyp \arr \otyp$,
		together with the following two \kl{rules}:
		\begin{align*}
			\Rr(\nonS) &= \nonA\applyto \letb_1\applyto \letb_2\applyto \letc\applyto \letc, \\
			\Rr(\nonA) &= \lambda \varf.\lambda \varg.\lambda \varx.\lambda \vary.
				\ndd\applyto (\leta\applyto \varx\applyto \vary)\applyto (\nonA\applyto \varf\applyto \varg\applyto (\varf\applyto \varx)\applyto (\varg\applyto \vary)).
		\end{align*}
		Then, $\BT\Gg$ is the infinite non-\kl{regular} \kl{tree}
		\begin{align*}
			\ndd\applyto(\leta\applyto\letc\applyto\letc)\applyto(\ndd\applyto(\leta\applyto(\letb_1\applyto\letc)\applyto(\letb_2\applyto\letc))\applyto
			(\ndd\applyto(\leta\applyto(\letb_1\applyto(\letb_1\applyto\letc))\applyto(\letb_2\applyto(\letb_2\applyto\letc)))\applyto\dots)),
		\end{align*}
		depicted in \cref{fig:ex:scheme}.
	\end{exampleOK}

	\begin{figure*}
		\centering
		\begin{minipage}[b]{0.48\textwidth}
			\centering
			\def\svgscale{0.5}\import{pics/}{tree_1.pdf_tex_ok}\vspace{-1ex}
			\caption{The \kl{tree} $\BT\Gg$ (\cref{ex:scheme})}
			\label{fig:ex:scheme}
		\end{minipage}
		\begin{minipage}[b]{0.48\textwidth}
			\centering
			\def\svgscale{0.5}\import{pics/}{tree_2.pdf_tex_ok}\vspace{-1ex}
			\caption{A \kl{tree} in $\lang\Gg$ (\cref{ex:lang})}
			\label{fig:ex:lang}
		\end{minipage}
	\end{figure*}

\paragraph{Recursion schemes as recognizers of languages of finite trees.}

	The standard semantics of a \kl{recursion scheme} $\Gg = \tuple{\Alphabet,\Nn,X_0,\Rr}$
	is the single infinite \kl{tree} $\BT\Gg$ \kl{generated by} the scheme.
	An alternative view is to consider a \kl{recursion scheme}
	as a \kl(scheme){recognizer} of a language of finite \kl{trees} $\lang \Gg$.
	This alternative view is relevant when discussing \kl{downward closures} of languages of finite \kl{trees}.
\AP	We employ a special \kl{letter} $\intro*\ndd\in\Alphabet$ of \kl{rank} $2$
	in order to represent $\lang \Gg$ by resolving the nondeterministic choice of $\ndd$ in the infinite \kl{tree} $\BT \Gg$ in all possible ways.
	Formally, for two \kl{trees} $T, U$, we write $T\rightarrow_{\ndd} U$
	if $U$ is obtained from $T$ by choosing an $\ndd$-labeled \kl{node} $u$ of $T$
	and a \kl{child} $v$ thereof, and replacing the \kl{subtree} rooted at $u$ with the \kl{subtree} rooted at $v$.
	The relation $\rightarrow^*_{\ndd}$ is the reflexive and transitive closure of $\rightarrow_{\ndd}$.
\AP	We define the language of finite \kl{trees} \intro(scheme){recognized} by $\Gg$ as $\intro*\lang\Gg = \NT{\BT\Gg}$, where\phantomintro{\NT}
	\begin{equation*}
		\reintro*\NT T = \setof U {T\rightarrow^*_{\ndd} U \text{, with $U$ finite and not containing ``$\ndd$'' or ``$\bot$''}}.
	\end{equation*}

	\begin{exampleOK}\label{ex:lang}
		For the \kl{recursion scheme} $\Gg$ from \cref{ex:scheme}, $\lang\Gg$ is the non-regular language of all finite \kl{trees} of the form
		\begin{align*}
			\leta\applyto(\underbrace{\letb_1\applyto(\letb_1\applyto(\dots\applyto(\letb_1}_n\applyto\letc)\dots)))
				\applyto(\underbrace{\letb_2\applyto(\letb_2\applyto(\dots(\letb_2}_n\applyto\letc)\dots)))&&\mbox{for }n \in \Nat,
		\end{align*}
		depicted in \cref{fig:ex:lang}.
	\end{exampleOK}

\paragraph{Alternating B-automata.}

	We introduce the model of automata used in this paper,
	namely \intro[cost automata]{alternating one-way/two-way B-automata} over \kl{trees} (over a \kl{ranked alphabet}).
\AP	We consider counters which can be \intro{incremented} $\intro*\inc$, \intro{reset} $\intro*\reset$, or left \intro{unchanged} $\intro*\e$.
\AP	Let $\Gamma$ be a finite set of \intro{counters}
	and let $\intro*\Counters = \set{\inc, \reset, \e}$ be the \intro[counter action]{alphabet of counter actions}.
\AP	Each \kl{counter} starts with value zero,
	and the \intro{value of a sequence} of \kl[counter action]{actions} 
	is the supremum of the values achieved during this sequence.
\AP	For instance $\inc \inc \reset\e \inc \e$ has \kl{value} $2$,
	$(\inc\reset)^\omega$ has \kl{value} $1$,
	and $\inc\reset\inc^2\reset\inc^3\reset\cdots$ has \kl{value} $\infty$.
\AP	For an infinite sequence of \kl{counter actions} $w \in \Counters^\omega$,
	let $\intro*\val(w) \in \Nat \cup \set{\infty}$ be its \kl{value}.
	In case of several \kl{counters}, $w = c_1 c_2 \cdots \in (\Counters^{\Gamma})^\omega$, we take the \kl{counter} with the maximal \kl{value}:
	$\val(w) = \max_{c \in \Gamma} \val(w(c))$,
	where $w(c) = c_1(c) c_2(c) \cdots$.
	
\AP	An \intro{(alternating, two-way) B-automaton}
	over a finite \kl{ranked alphabet} $\Alphabet$ is a tuple
	\begin{align*}
		\tuple{\Alphabet, \allowbreak Q, \allowbreak q_0, \allowbreak\pr, \allowbreak\Gamma, \allowbreak\delta}
	\end{align*}
	consisting of a finite set of \intro{states} $Q$, an \intro{initial state} $q_0 \in Q$, a function $\intro*\pr\colon Q\to\Nat$ assigning \intro{priorities} to \kl{states}, 
\AP	a finite set $\intro*\Gamma$ of \kl{counters}, and a \intro(aut){transition function}
	\phantomintro\delta\phantomintro\dup\phantomintro\de\phantomintro\ddown
	\begin{align*}
		\reintro*\delta : Q \times \Alphabet \to \Bb^{+}(\set{\dup, \de, \ddown_1, \ddown_2, \dots}\times \Counters^{\Gamma} \times Q)
	\end{align*}
	mapping a \kl{state} and a \kl{letter} $a$ to a (finite) positive Boolean combination of triples of the form $(d,\act,q)$;
	it is assumed that if $d=\ddown_i$ then $i\leq\rank(a)$. 
	Such a triple encodes the instruction to send the automaton in the direction $d$ while performing the action $\act$, and changing the \kl{state} to $q$.
	The direction $\ddown_i$ denotes moving to the $i$-th \kl{child}, $\dup$ moving to the \kl{parent}, and $\de$ staying in place.
	We assume that $\delta(q,a)$ is written in disjunctive normal form for all $q$ and $a$.
	
	The acceptance of an infinite input \kl{tree} $T$ by an \kl{alternating B-automaton} $\Aa$ is defined in terms of a game $\gameon{\Aa}{T}$
	between two players, called Eve and Adam.
	Eve is in charge of disjunctive choices
	and tries to minimize \kl{counter values} while satisfying the parity condition.
	Adam, on the other hand, is in charge of conjunctive choices and tries to either maximize \kl{counter values},
	or to sabotage the parity condition.
	Since the \kl(aut){transition function} is given in disjunctive normal form,
	each turn of the game consists of Eve choosing a disjunct and Adam selecting a single triple $(d,\act,q)$ thereof.
	In order to deal with the situation that the automaton wants to go up from the \kl{root} of the \kl{tree},
	we forbid Eve to choose a disjunct containing a triple with direction $\dup$ when the play is in the \kl{root}.
	Simultaneously, we assume that $\delta(q,a)$ for all $q$ and $a$ contains a disjunct in which no triple uses the direction $\dup$,
	so that from every position there is some move.
\AP	A \intro{play} of $\Aa$ on a \kl{tree} $T$ is a sequence
	$q_0,(d_1,\act_1,q_1),(d_2,\act_2,q_2),\dots$
	compatible with $T$ and $\delta$: $q_0$ is the \kl{initial state}, 
	and for all $i \in \Nat$, $(d_{i+1}, \act_{i+1}, q_{i+1})$ appears in $\delta(q_i, T(x_i))$, where $x_i$ is the \kl{node} of $T$ after following the directions $d_1, d_2, \dots, d_i$ starting from the \kl{root}.
	The \reintro{value} $\val(\pi)$ of such a \kl{play} $\pi$ is the \kl{value} $\val(\act_1 \act_2 \cdots)$ as defined above 
	if the largest number appearing infinitely often among the \kl{priorities} $\pr(q_0), \pr(q_1), \dots$ is even;
	otherwise, $\val(\pi) = \infty$.
\AP	We say that the \kl{play} $\pi$ is \intro[$n$-winning play]{$n$-winning} (for Eve) if $\val(\pi)\leq n$.
	
\AP	A \intro{strategy} for one of the players in the game $\gameon{\Aa}{T}$ is a function that returns the next choice given the history of the play.
	Note that choosing a \kl{strategy} for Eve and a \kl{strategy} for Adam fixes a \kl{play} in $\gameon{\Aa}{T}$. 
	We say that a \kl{play} $\pi$ is \intro{compatible} with a \kl{strategy} $\sigma$
	if there is some \kl{strategy} $\sigma'$ for the other player such that $\sigma$ and $\sigma'$ together yield the \kl{play} $\pi$. 
\AP	A \kl{strategy} for Eve is \intro[$n$-winning strategy]{$n$-winning} if every \kl{play} \kl{compatible} with it is \kl[$n$-winning play]{$n$-winning}.
\AP	We say that Eve \intro(game){$n$-wins} the game if there is some \kl{$n$-winning strategy} for Eve.
\AP	The \kl{B-automaton} \intro(aut){$n$-accepts} a \kl{tree} $T$ if Eve \kl(game){$n$-wins} the game $\gameon{\Aa}{T}$;
	it \intro(aut){accepts} $T$ if it \kl(aut){$n$-accepts} $T$ for some $n \in \Nat$.
\AP	The language \intro(aut){recognized} by $\Aa$ is the set of all \kl{trees} \kl(aut){accepted} by $\Aa$.

	\begin{exampleOK}\label{ex:b-autom}
		Let $\Alphabet$ be the \kl{ranked alphabet} containing a \kl{letter} $\leta$ of \kl{rank} 2 and a \kl{letter} $\letb$ of \kl{rank} $1$.
		Consider a \kl{B-automaton} over $\Alphabet$ with one \kl{counter} and three \kl{states} $\letq_0,\letq_1,\letq_2$, all of \kl{priority} $0$;
		the \kl{state} $\letq_0$ is initial, and the \kl(aut){transitions} are
		\begin{align*}
			\delta(\letq_0,\leta)&=(\ddown_1,\e,\letq_0)\land(\ddown_2,\e,\letq_0),\\
			\delta(\letq_0,\letb)&=((\ddown_1,\e,\letq_0)\land(\dup,\inc,\letq_1))\lor((\ddown_1,\e,\letq_0)\land(\down_1,\inc,\letq_2)),\hspace{-20em}\\
			\delta(\letq_1,\leta)&=(\dup,\inc,\letq_1)\lor(\de,\inc,\letq_1),&
			\delta(\letq_1,\letb)&=(\de,\e,\letq_1),\\
			\delta(\letq_2,\leta)&=(\ddown_1,\inc,\letq_2)\lor(\ddown_2,\inc,\letq_2),&
			\delta(\letq_2,\letb)&=(\de,\e,\letq_2).
		\end{align*}
		Here Adam chooses a $\letb$-labeled \kl{node} $u$ (using \kl{state} $\letq_0$),
		and then Eve selects a path to some $\letb$-labeled ancestor (using \kl{state} $\letq_1$) or descendant (using \kl{state} $\letq_2$) of $u$;
		the \kl{counter} computes the distance between these two \kl{nodes}.
		In consequence, a \kl{tree} is \kl(aut){accepted} if there is a bound $n\in\Nat$ such that every $\letb$-labeled \kl{node} has a $\letb$-labeled ancestor or descendant in distance at most $n$.
	\end{exampleOK}

\AP	If no $\delta(q, a)$ uses the direction $\dup$, then we call $\Aa$ \intro{one-way}.
	Blumensath, Colcombet, Kuperberg, Parys, and Vanden Boom \cite[Theorem 6]{quasi-weak-logic} show that every \kl{B-automaton} can be made one-way:
	
	\begin{theoremOK}\label{thm:2way-to-1way}
		Given an alternating \kl{two-way B-automaton}, one can compute an \kl{alternating one-way B-automaton} that \kl(aut){recognizes} the same language.
	\end{theoremOK}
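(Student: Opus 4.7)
The plan is to adapt the classical Vardi-style simulation for removing two-wayness from alternating tree automata, enriching it with additional data that tracks the counter behaviour of B-automata. At each node $v$ of the input tree, the one-way automaton $\Aa'$ will non-deterministically guess, and then verify, a \emph{summary} of what the two-way automaton $\Aa$ can do in the subtree rooted at $v$. The classical summary, without counters, is a relation $R_v \subseteq Q \times Q$ collecting all pairs $(q,q')$ such that, starting at $v$ in state $q$ moving downward, Adam has a strategy forcing the play to eventually return to $v$ in state $q'$ (and a similar analysis of plays that never return). Given these summaries, the upward move $\dup$ of $\Aa$ from a child $v$ back to its parent can be simulated by a self-loop at $v$ that directly jumps between states related by the guessed summary, which makes the resulting automaton one-way.

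For a \kl{B-automaton} the summary must record not only which transitions between states are possible, but also what the \kl{counters} did during the corresponding loops. The idea is to tag each pair $(q,q')$ in the summary with a vector in $\set{\reset,\inc,\e}^\Gamma$ stating, for each counter, the worst operation performed along the loop: $\reset$ if the counter is reset during the loop, $\inc$ if it is never reset but at least once incremented, and $\e$ if it is left untouched. This abstraction is sound because for the asymptotic value of a counter, an $\e$-only loop is harmless, a $\reset$-containing loop resets the counter to $0$ from Eve's perspective (bounding its contribution by the finite prefix), and an $\inc$-only loop that is entered infinitely often makes the counter unbounded. The one-way automaton $\Aa'$ guesses a family $(R_v)_v$ of such decorated summaries, and plays a modified acceptance game where moves from a state $q$ upward are replaced by transitions to some $q'$ prescribed by $R_v$, with the guessed counter vector being applied as the combined effect of the loop.

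The verification that the guessed summaries are correct is itself a B-automaton game: for each $(q,q',\vec\gamma) \in R_v$ the automaton checks that from $q$ at $v$, moving down, Adam can force a return to $v$ in $q'$ while the counters follow an action sequence abstracted by $\vec\gamma$; symmetrically one must refute the entries that are not in the summary. The key technical point is merging the two conditions, parity and boundedness, along loops: the parity condition inside a loop must be encoded into the summary (say, by annotating each pair with the maximal priority seen during the loop, as in Vardi's construction), and this annotation must interact correctly with the counter tag so that the global limit behaviour of $\Aa'$ coincides with that of $\Aa$ up to a linear change of bounds, in the sense of cost functions.

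The main obstacle I expect is precisely this coordination. Unlike in the purely Boolean case, for B-automata the acceptance value is a cost function, so one cannot just require pointwise equivalence of automata; one needs to show that the constructed $\Aa'$ is \emph{history-deterministically} related to $\Aa$ up to the cost-function equivalence, i.e., $\Aa$ \kl(aut){$n$-accepts} $T$ iff $\Aa'$ \kl(aut){$f(n)$-accepts} $T$ for some fixed function $f$ depending only on $|Q|$ and $|\Gamma|$. Proving this requires showing that positional strategies suffice for Eve in the summary game, and that composing many short loops into a long computation does not blow up the counter values by more than the length of the longest simple loop — an argument that relies on the standard polynomial cost bounds for B-games and ultimately justifies why the cited theorem only guarantees equivalence as cost functions rather than exact equality.
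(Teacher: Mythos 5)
There is a genuine gap in the core of your construction: the finite abstraction of an excursion's counter effect by a vector in $\set{\reset,\inc,\e}^\Gamma$ is not sound. A single excursion of the two-way automaton into a subtree and back is not a simple loop of bounded length; it can perform an unbounded number of increments before returning (e.g.\ incrementing once per level while descending to an arbitrarily deep leaf). Collapsing such an excursion to a single $\inc$ per counter can turn a tree that the original automaton rejects (because for every strategy of Eve, Adam can force plays of arbitrarily large value by driving excursions ever deeper) into one that your summarized automaton $n$-accepts for a small $n$ (it charges one increment per excursion, and there may be few excursions). Since acceptance requires a \emph{uniform} bound over all plays compatible with a strategy, this is exactly the point where ``finitely many increments in each play'' and ``a common bound $n$'' come apart, and no correction function $f(n)$ depending only on $|Q|$ and $|\Gamma|$ can repair it. Your closing appeal to ``polynomial cost bounds for B-games'' and ``the length of the longest simple loop'' does not apply here, because excursions in infinite trees are not simple loops. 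Fixing this forces the summary itself to be a cost-automaton computation and the equivalence to be stated only up to domination of cost functions, which is precisely why the underlying theorem is a substantial result rather than a routine adaptation of Vardi's construction.

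You should also be aware that the paper does not prove this statement from scratch at all: it derives it from Theorem~6 of Blumensath, Colcombet, Kuperberg, Parys, and Vanden Boom, and the entire proof consists of reconciling three cosmetic mismatches with that result --- (1) the cited theorem gives equality of cost functions up to domination equivalence, which is stronger than equality of the recognized languages; (2) the cited definition of one-way automata also forbids the direction $\de$, whereas the paper needs $\de$ to handle rank-$0$ letters; and (3) the cited result is for binary trees, which is handled by a first-child/next-sibling encoding incorporated into the automata. If you want a self-contained argument, you would essentially have to reprove that theorem, and the sketch above does not yet do so.
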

	
	\begin{proof}
		This essentially follows from the result of Blumensath et al.\@ \cite[Theorem 6]{quasi-weak-logic} modulo some cosmetic changes.
		Namely, due to some differences in definitions, our \cref{thm:2way-to-1way} is weaker in two aspects and stronger in one aspect 
		than the result of Blumensath et al.~\cite[Theorem 6]{quasi-weak-logic}.
		We elaborate on these differences here.
	
		First, Blumensath et al.~\cite{quasi-weak-logic} do not say that a \kl{one-way B-automaton} $\Aa$ and a \kl{two-way B-automaton} $\Bb$ \kl(aut){recognize} the same languages,
		but rather that the cost functions defined by these \kl{B-automata} are equal (modulo domination equivalence).
		The latter means that there exists a non-decreasing function $\alpha\colon\Nat\to\Nat$ such that if one of the \kl{B-automata} ($\Aa$ or $\Bb$) \kl(aut){$n$-accepts} some \kl{tree} $T$,
		then the other \kl{B-automaton} \kl(aut){$\alpha(n)$-accepts} this \kl{tree} $T$.
		Clearly this is a stronger notion; it implies that the sets of \kl(aut){accepted} \kl{trees} are equal.
	
		Second, the definition of \kl{one-way B-automata} given by Blumensath et al.~\cite{quasi-weak-logic} forbids the usage of the direction $\de$ (along with $\dup$),
		while we allow to use $\de$ (only $\dup$ is forbidden).
		A translation to \kl{one-way B-automata} becomes only easier if their definition is less restrictive.
		We remark, however, that we actually need to allow the usage of $\de$ in order to correctly handle \kl{letters} of \kl{rank} $0$---we do not want a \kl{one-way B-automaton} to get stuck in a \kl{node} without \kl{children}.

		Third, the \kl{B-automata} of Blumensath et al.~\cite{quasi-weak-logic} work over binary \kl{trees}, that is, all \kl{letters} of the alphabet are assumed to be of \kl{rank} $2$, 
		while we allow \kl{letters} of arbitrary \kl{ranks}.
		It is not difficult to believe that the assumption about a binary alphabet is just a technical simplification, 
		and that all the proofs of Blumensath et al.~\cite{quasi-weak-logic} can be repeated for an arbitrary alphabet.
		Alternatively, it is possible to encode a \kl{tree} over an arbitrary \kl{ranked alphabet} into a binary \kl{tree}, 
		using the first-child next-sibling representation (with some dummy infinite binary \kl{tree} encoding ``no more children'').
		Such an encoding can easily be incorporated into a \kl{B-automaton}.
		Thus, in order to convert a \kl{two-way B-automaton} $\Aa$ into a \kl{one-way B-automaton} $\Bb$, 
		we can first convert it into a \kl{two-way B-automaton} $\Aa_2$ over a binary alphabet (reading the first-child next-sibling representation of a \kl{tree}),
		then convert $\Aa_2$ into a \kl{one-way B-automaton} $\Bb_2$ (using the results of Blumensath et al.~\cite[Theorem 6]{quasi-weak-logic}),
		and then convert $\Bb_2$ into $\Bb$ reading the actual \kl{tree} instead of its first-child next-sibling representation.	
	\end{proof}
	
	\begin{exampleOK}
		In general, the proofs of Blumensath et al.\@ \cite{quasi-weak-logic} underlying \cref{thm:2way-to-1way}
		are nontrivial.
		Nevertheless, in the concrete case of the \kl{B-automaton} $\Aa$ from \cref{ex:b-autom} it is not difficult to directly construct a \kl{one-way B-automaton} $\Bb$ recognising the same language.
		The trick is that, instead of going up to a close $\letb$-labeled ancestor,
		already in the ancestor we decide that it will serve as a close $\letb$-labeled ancestor for some \kl{node}.
		Thus the \kl(aut){transitions} become
		\begin{align*}
			\delta(\letq_0,\leta)&=(\ddown_1,\e,\letq_0)\land(\ddown_2,\e,\letq_0), \hspace{5em}
			\delta(\letq_0,\letb)=(\ddown_1,\inc,\letq_1),\\
			\delta(\letq_1,\leta)&=((\ddown_1,\inc,\letq_1)\land(\ddown_2,\reset,\letq_0))\lor((\ddown_1,\reset,\letq_0)\land(\ddown_2,\inc,\letq_1))\lor
				((\ddown_1,\inc,\letq_1)\land(\ddown_2,\inc,\letq_1)),\\
			\delta(\letq_1,\letb)&=(\ddown_1,\reset,\letq_0)\lor(\ddown_1,\inc,\letq_1).
		\end{align*}
	\end{exampleOK}

	As a special case of a result by Colcombet and G\"oller~\cite{games-with-bound-guess-actions} we obtain the following fact:
	
	\begin{factOK}\label{fact:qw-reg-decidable}
		One can decide whether a given $B$-automaton $\Aa$ \kl(aut){accepts} a given \kl{regular tree} $T$.
	\end{factOK}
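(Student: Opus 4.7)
The plan is to reduce the problem to a game played on a finite graph with parity and counter conditions, and invoke the decidability result of Colcombet and Göller on games with bound-guess actions.

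First, I would apply \cref{thm:2way-to-1way} to replace the given (possibly two-way) B-automaton $\Aa$ by an equivalent \kl{one-way B-automaton}, since for the decision procedure it is convenient to work only with downward moves and $\de$-loops; this step is free of hidden difficulty and preserves the language \kl(aut){recognized}. Next, since $T$ is \kl(tree){regular} it has finitely many distinct \kl{subtrees}, hence it can be presented as the unfolding of a finite pointed deterministic graph $G = (V, v_0, \mathit{lab}, \mathit{edges})$, whose vertices correspond to the equivalence classes of subtrees of $T$ and whose edges encode the child relation.

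The second step is to form the product of the one-way B-automaton with $G$: states are pairs $(q, v) \in Q \times V$, and from $(q,v)$ with $\mathit{lab}(v) = a$ the available moves come from the disjuncts of $\delta(q,a)$, where each triple $(\ddown_i, \act, q')$ now leads to $(q', v_i)$ (with $v_i$ the $i$-th child of $v$ in $G$), and $(\de, \act, q')$ leads to $(q', v)$. This yields a finite game graph, where Eve picks disjuncts and Adam picks triples, equipped with \kl{priorities} inherited from $\pr$ and with \kl{counter actions} in $\Counters^{\Gamma}$. A standard unfolding argument shows that the game $\gameon{\Aa}{T}$ is won by Eve with bound $n$ if and only if the corresponding finite game is won by Eve with bound $n$: strategies on the unfolding $T$ can be projected to positional strategies on the finite product (using positional determinacy of parity-with-counter games on finite arenas), and conversely a strategy on the finite product lifts to a strategy on $T$ with the same counter behaviour.

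Finally, asking whether $\Aa$ \kl(aut){accepts} $T$ amounts to asking whether, in this finite parity game with counter actions, Eve has a strategy that $n$-wins for \emph{some} $n \in \Nat$. This is precisely the decision problem for games with bound-guess actions (counters incremented and reset, with a boundedness acceptance), which is decidable by Colcombet and Göller~\cite{games-with-bound-guess-actions}, yielding the fact. The main obstacle, which is hidden in the Colcombet--Göller result itself, is the interplay between the parity condition and the unboundedness condition on counters on a finite arena; once this is imported as a black box, the reduction above is routine.
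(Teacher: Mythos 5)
Your proposal is correct and follows essentially the same route as the paper: reduce to a one-way automaton via \cref{thm:2way-to-1way}, observe that regularity of $T$ makes the acceptance game a game on a finite arena (your product construction just makes explicit the paper's remark that only the current subtree of $T$ matters), and conclude by the decidability result of Colcombet and G\"oller~\cite{games-with-bound-guess-actions}. One minor remark: the appeal to positional determinacy is unnecessary (and cost-parity games need not be positionally determined for Eve); the unfolding induces an isomorphism between the trees of plays of the two games, so strategies and their values transfer directly in both directions.
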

	
	\begin{proof}
		First, thanks to \cref{thm:2way-to-1way}, we can assume that $\Aa$ is one-way.
		Next, recall that \kl(aut){acceptance} of $T$ is defined in terms of a game $(\Aa,T)$.
		When $\Aa$ is one-way and $T$ is \kl{regular}, this game has actually a finite arena.
		Indeed, for the future of a play, it does not matter what is the current \kl{node} of $T$,
		it only matters which \kl{subtree} starts in the current \kl{node}---and in $T$ we have finitely many different \kl{subtrees}.
		It is not difficult to decide whether such a finite-arena game is $n$-won by Eve for some $n\in\Nat$.
		Nevertheless, instead of showing this directly, we notice that
		games obtained this way are a special case of games considered by Colcombet and G\"oller~\cite{games-with-bound-guess-actions}, for which they prove decidability.
	\end{proof}


\section{Model-checking safe recursion schemes against alternating B-auto\-ma\-ta}
\label{sec:model-checking}

	In this section we prove the first main theorem of our paper, that is,
	decidability of the \intro{model-checking problem} of \kl{safe recursion schemes} against properties described by \kl{B-automata}:
	
	\begin{theoremOK}\label{thm:main-1}
		Given an \kl{alternating B-automaton} $\Aa$ and a \kl{safe recursion scheme} $\Gg$,
		one can decide whether $\Aa$ \kl(aut){accepts} the \kl{tree generated by} $\Gg$.
	\end{theoremOK}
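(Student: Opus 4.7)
The plan is to follow the order-reduction strategy of Knapik, Niwiński, and Urzyczyn, combined with the two-way-to-one-way conversion of \cref{thm:2way-to-1way}, so as to decrease $\ord(\Gg)$ by one at each step until a base case on regular trees is reached. By \cref{thm:2way-to-1way} we may assume without loss of generality that $\Aa$ is \kl{one-way}. We then proceed by induction on $\ord(\Gg)$. In the base case $\ord(\Gg) = 0$, the tree $\BT\Gg$ is \kl(tree){regular}, so acceptance by $\Aa$ is decidable directly by \cref{fact:qw-reg-decidable}.

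For the inductive step, assume $\ord(\Gg) = k+1$. The Knapik--Niwiński--Urzyczyn construction, which works precisely because of safety, produces a \kl{safe recursion scheme} $\Gg'$ of \kl(scheme){order} $k$ together with a bounded navigational encoding: each node of $\BT\Gg$ can be named by a node of $\BT{\Gg'}$, the labels agree, and moving to a child in $\BT\Gg$ corresponds to a short, locally computable sequence of up/down moves in $\BT{\Gg'}$. Starting from the one-way B-automaton $\Aa$ on $\BT\Gg$, I would then build a two-way B-automaton $\Bb$ on $\BT{\Gg'}$ that simulates $\Aa$ step by step by traversing these bounded detours between consecutive simulated transitions, while performing along the way exactly the \kl{counter actions} that $\Aa$ prescribes. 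This ensures that $\Aa$ \kl(aut){accepts} $\BT\Gg$ if and only if $\Bb$ \kl(aut){accepts} $\BT{\Gg'}$.

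Applying \cref{thm:2way-to-1way} once more, we convert $\Bb$ back into an equivalent one-way B-automaton, thereby reducing the problem to an instance of strictly smaller \kl(scheme){order} and letting the induction proceed. After $\ord(\Gg)+1$ iterations we land in the base case, at which point \cref{fact:qw-reg-decidable} provides the decision procedure.

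The main obstacle is the faithful transfer of the counter semantics through the order-reduction step. The encoding forces $\Bb$ to visit and re-visit several nodes of $\BT{\Gg'}$ in order to simulate a single move of $\Aa$, and the simulation must be designed so that along each such detour only \kl{counter actions} from $\{\e\}$ are used and the genuine increments/resets of $\Aa$ are performed exactly once per simulated transition. This way the sequence of counter actions produced along a simulated run of $\Aa$ differs from the original one only by a bounded number of identity actions, the supremum-based value of each \kl{play} is preserved, and a uniform bound for $\Aa$ on $\BT\Gg$ translates into a uniform bound for $\Bb$ on $\BT{\Gg'}$ (up to an affine correction). Checking this quantitative faithfulness, in a setting where the underlying order-reducing tree transduction is inherently two-way, is where the genuine work of the proof lies and is exactly the reason why \cref{thm:2way-to-1way} is needed as a black box to close each inductive round.
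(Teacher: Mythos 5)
Your proposal is essentially the paper's own proof: the order-reduction of Knapik et al.\ (\cref{lem:new-scheme}) combined with a two-way simulation of the one-way automaton on the resulting lambda-tree (\cref{lem:1-way-to-2-way}), alternated with \cref{thm:2way-to-1way}, terminating at order $0$ via \cref{fact:qw-reg-decidable}. Two small corrections to your justification: the detours (the $(\Xx,s)$-maximal paths in the lambda-tree, which travel from a variable occurrence up to its binder and over to the corresponding argument) are \emph{not} of bounded length, but this is harmless because the value of a play is a supremum and padding with $\e$-actions preserves it exactly, with no affine correction needed; and one must first make $\Gg$ normalizing (a preprocessing that preserves safety and order) so as to rule out \emph{infinite} detours, on which the simulation of a single move of $\Aa$ would never terminate.
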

	
	It is worth noticing that this theorem generalizes the result of Knapik et al.~\cite{easy-trees} on \kl{safe recursion schemes}
	from regular (\kl{MSO}) properties to the more general quantitative realm of properties described by \kl{B-automata}.
	On the other hand, our result is incomparable with the celebrated theorem of Ong~\cite{Ong:LICS:2006}
	showing decidability of model checking regular properties of possibly unsafe \kl{recursion schemes}.
	Whether model checking of possibly unsafe \kl{recursion schemes} against properties described by \kl{B-automata}
	is decidable remains an open problem.
	
	By \cref{thm:2way-to-1way}, every \kl{B-automaton}
	can effectively be transformed into an equivalent \kl{one-way B-automaton},
	so it is enough to prove \cref{thm:main-1} for a \kl{one-way B-automaton} $\Aa$.
	The proof of \cref{thm:main-1} is based on the following lemma,
	where we use in an essential way the assumption that the \kl{recursion scheme} is \kl(scheme){safe}:
	
	\begin{lemmaOK}\label{lem:reduce-order}
		For every \kl{safe recursion scheme} $\Gg$ of \kl(scheme){order} $m$
		and for every alternating \kl{one-way B-automaton} $\Aa$, 
		one can effectively construct a \kl{safe recursion scheme} $\reify\Gg$ of \kl(scheme){order} $m-1$
		and an alternating \kl{two-way B-automaton} $\Aa'$ such that
		\begin{align*}
			\Aa\text{ \kl(aut){accepts} }\BT \Gg
			\quad \text{ if and only if } \quad
			\Aa'\text{ \kl(aut){accepts} }\BT{\reify\Gg}.
		\end{align*}
	\end{lemmaOK}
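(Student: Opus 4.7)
The plan is to adapt the classical order-reduction technique of Knapik, Niwiński, and Urzyczyn (used by them to lift decidability of \kl{MSO} model-checking from regular trees to safe schemes inductively), and to show that it behaves well with respect to \kl{B-automata}. The key feature of safety that we will exploit is that whenever an order-$m$ \kl{nonterminal} is applied in $\Lambda(\Gg)$, all its order-$(m-1)$ arguments are themselves \kl{superficially safe} terms whose \kl(lambda){free variables} have order at least $m-1$; thus these arguments come from a suitably restricted family and can be ``named''.

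First I would put $\Gg$ into a convenient form: by the construction referenced on \cpageref{page:fully-convergent}, I can assume $\Gg$ is \kl(scheme){normalizing}, and by standard manipulations I can assume every order-$m$ \kl{nonterminal} is fully applied in the right-hand side of every \kl{rule} (if not, $\eta$-expand or unfold).  The construction of $\Gg'$ then eliminates the order-$m$ nonterminals as follows.  For each order-$m$ nonterminal $X$ of type $\alpha_1\arr\cdots\arr\alpha_k\arr\otyp$, safety forces the order-$(m-1)$ arguments that $X$ ever receives to be taken from a finite, syntactically identifiable collection of \kl(lambda){safe} \kl{lambda-terms} (essentially the order-$(m-1)$ subterms appearing on the right-hand sides of the rules, closed under the scheme's own recursive unfolding). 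I would introduce a fresh order-$(m-1)$ nonterminal for each possible ``calling pattern'' of $X$ and rewrite the rules so that the output is a tree carrying explicit markers that record which calling pattern was used at which position, while the arguments of order $<m-1$ are handled as usual. The resulting scheme $\Gg'$ has order $m-1$, is still safe, and $\BT{\Gg'}$ is a ``blueprint'' tree: it contains, as explicitly addressable subtrees, the code fragments that were previously passed as order-$(m-1)$ arguments, interleaved with markers that indicate where in the computation of $\BT \Gg$ each such fragment would have been substituted.

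Next I would define the two-way B-automaton $\Aa'$ as a simulator of $\Aa$ that walks on $\BT{\Gg'}$. At ordinary nodes, $\Aa'$ behaves like $\Aa$. At a marker indicating ``here, the computation would enter the $j$-th order-$(m-1)$ argument of the call at address $u$'', $\Aa'$ uses the \kl(aut){$\dup$} and \kl(aut){$\ddown_i$} directions to navigate from the current call-site to the subtree of $\BT{\Gg'}$ that stores this argument, and conversely, when the simulated run would return from an argument to its call-site, $\Aa'$ walks back. Because $\Gg'$ is normalizing and the blueprint structure is finite at each level, these navigations are bounded by fixed moves of $\Aa'$. Crucially, each counter action that $\Aa$ performs at a node $v$ of $\BT \Gg$ is performed by $\Aa'$ exactly once, namely when $\Aa'$ first reaches the corresponding marker in $\BT{\Gg'}$, so that the counter valuations and the \kl{priority} sequences of corresponding plays coincide, giving $\val(\pi) = \val(\pi')$ for matched plays $\pi,\pi'$.

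For correctness, I would set up a bijection between Eve's \kl{strategies} in $\gameon{\Aa}{\BT\Gg}$ and in $\gameon{\Aa'}{\BT{\Gg'}}$ that preserves \kl{value}, which yields the equivalence $\Aa \text{ \kl(aut){accepts} } \BT\Gg \iff \Aa' \text{ \kl(aut){accepts} } \BT{\Gg'}$. The main obstacle is technical rather than conceptual: one must design the markers and the navigation so that, on the one hand, the order-$(m-1)$ scheme $\Gg'$ actually exists and stays \kl(scheme){safe} (which is where the hypothesis that $\Gg$ is \kl(scheme){safe} is really used, since an unsafe $\Gg$ could produce order-$(m-1)$ arguments with free variables of order $<m-1$, ruining the enumeration step), and on the other hand, the counter semantics is faithfully preserved, requiring that $\Aa'$ performs each counter action at most once per simulated step of $\Aa$ and that loops introduced by the two-way navigation are closed in finitely many moves so as not to spuriously accumulate counter increments.
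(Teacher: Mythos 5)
Your overall strategy coincides with the paper's: reduce the order of the safe scheme by one \`a la Knapik, Niwi\'nski, and Urzyczyn, pay for it by making the automaton two-way, and argue that counter actions can be arranged to fire exactly once per simulated step so that values of corresponding plays agree. However, the step where you actually build $\Gg'$ contains a genuine gap. You claim that safety forces the order-$(m-1)$ arguments ever passed to an order-$m$ nonterminal to come from ``a finite, syntactically identifiable collection'' of terms, closed under the scheme's recursive unfolding. Closed under unfolding, that collection is infinite in general (already in \cref{ex:scheme} the ground arguments fed to $\nonA$ grow without bound); not closed, the finitely many syntactic occurrences still carry free variables of order $\geq m-1$ whose values depend on the calling context, so they cannot simply be promoted to fresh nonterminals --- one needs an environment/closure mechanism to resolve them, and that mechanism is precisely where all the work lies. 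The paper avoids this by going bottom-up rather than top-down: it externalizes the \emph{order-$0$} layer, replacing every variable of type $\otyp$ by a leaf $\tvar{x}$, every binder of such a variable by $\tlambda{x}$, and every application with a ground argument by $\tat$ (\cref{lem:new-scheme}). The resulting order-$(m-1)$ scheme generates a \kl{lambda-tree} encoding a \kl{first-order lambda-term}, and the ``navigation'' is not an ad hoc marker scheme but the precisely specified $(\Xx,s)$-successor relation, which a two-way automaton can follow while keeping only a direction from the finite set $\Dirs_{\Xx,s}$ in its state (\cref{lem:1-way-to-2-way}). Safety enters exactly where you say it does --- no renaming of variables is needed when unfolding --- but the construction you sketch is not the one that realizes this, and as described it does not obviously produce an order-$(m-1)$ scheme at all.

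A second, smaller error: you assert that the two-way navigations between a call site and the stored argument ``are bounded by fixed moves of $\Aa'$.'' They are not --- the walk from a variable occurrence up to its binder and then to the matching application has unbounded length. What is actually needed (and what the paper uses) is weaker: each navigation segment must be \emph{finite}, which follows from the assumption that the derived tree contains no $\bot$ (this is why normalization is invoked), and the intermediate steps must carry the neutral action $\e$ and harmless priorities so that neither counter values nor the parity condition are perturbed. Your requirement that each counter action of $\Aa$ be performed exactly once by $\Aa'$ is the right invariant, but it is compatible with unbounded navigation and does not need boundedness.
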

	
	Notice that the above lemma allows us to decrease the \kl(scheme){order} of a \kl{recursion scheme},
	at the cost of transforming a \kl{one-way B-automaton} $\Aa$ into a \kl{two-way B-automaton} $\Aa'$.

	Before proving \cref{lem:reduce-order}, let us see how \cref{thm:main-1} follows from it:
	Using \cref{lem:reduce-order} we can reduce the \kl(scheme){order} of the considered \kl{safe recursion scheme} by one.
	We obtain a \kl{two-way B-automaton},
	which we convert back to a \kl{one-way B-automaton} using \cref{thm:2way-to-1way}.
	It is then sufficient to repeat this process,
	until we end up with a \kl{recursion scheme} of \kl(scheme){order} $0$.
	A \kl{recursion scheme} of \kl(scheme){order} $0$ \kl{generates} a \kl{regular tree}
	and, by \cref{fact:qw-reg-decidable},
	we can decide whether the resulting \kl{B-automaton} \kl(aut){accepts} this \kl{tree},
	answering our original question.

\paragraph{Lambda-trees.}

	We now come to the proof of \cref{lem:reduce-order}.
	The construction of $\reify\Gg$ from $\Gg$
	follows an analogous result for \kl{MSO} \cite{easy-trees,hyperalgebraic-trees},
	which we generalize to \kl{B-automata}.
	We call the construction of $\reify\Gg$ from $\Gg$ \reintro{reification}.
	This is the central idea in Knapik et al.~\cite{easy-trees,hyperalgebraic-trees},
	which first proved decidability of \kl{MSO} model checking of safe recursion schemes.
	We formally present it in \cref{sec:reification};
	here, we illustrate it with some examples.
	%
	
\AP	For a \emph{finite} set $\Xx$ of \kl{variables of type} $\otyp$,
	we define a new \kl{ranked alphabet} $\intro*\AlphabetXx$ that contains
	\begin{inparaenum}[(1)]
	\itemAP a \kl{letter} $\intro*\tconst{a}$ of \kl{rank} $0$ for every \kl{letter} $a\in\Alphabet$;
	\itemAP a \kl{letter} $\intro*\tvar{x}$ of \kl{rank} $0$ for every \kl(lambda){variable} $x\in\Xx$;
	\itemAP a \kl{letter} $\intro*\tlambda{x}$ of \kl{rank} $1$ for every \kl(lambda){variable} $x\in\Xx$;
	\itemAP a \kl{letter} $\intro*\tat$ of \kl{rank} $2$.
	\end{inparaenum}
	We remark that $\AlphabetXx$ is a usual finite \kl{ranked alphabet}.
\AP	A \intro{lambda-tree} is a \kl{tree} over the alphabet $\AlphabetXx$.
	%
	\kl{Reification} takes a \kl{lambda-term} $M$
	and produces a new \kl{lambda-term} $\reify M$, in which the maximal \kl(lambda){order} of \kl{subterms} is strictly smaller.
	Moreover, when $M$ is \kl{first-order},
	$\reify M$ is a \kl{lambda-tree}
	(i.e., a \kl{closed} \kl{lambda-term} of \kl(lambda){order} $0$ over the \kl{alphabet} specified above).
	Intuitively, order-zero \kl{lambda-binders} $\lambda x. K$ (with $x$ of type $\otyp$)
	and \kl{applications} $K\applyto L$ with an \kl(lambda){order}-zero argument $L$
	are \kl{reified} into the syntax:
	The \kl{lambda-binder} $\lambda x.K$ becomes a \emph{\kl{letter}} $\tlambda x$ applied to the recursively \kl{reified} $\reify K$
	and the \kl{application} $K\applyto L$ becomes also a \emph{\kl{letter}} $\tat$ applied to the recursively \kl{reified} $\reify K$ and $\reify L$.
	This is demonstrated in the next two examples.

	\begin{figure*}
		\centering
		\begin{minipage}[c]{0.48\textwidth}
			\centering
			\def\svgscale{0.5}\import{pics/}{lam_tree_1.pdf_tex_ok}\vspace{-1ex}
		\end{minipage}
		\begin{minipage}[c]{0.48\textwidth}
			\centering
			\def\svgscale{0.5}\import{pics/}{lam_tree_2.pdf_tex_ok}\vspace{-1ex}
		\end{minipage}
			\caption{The \kl{lambda-tree} $T = \reify M$ from \cref{ex:lambda-tree} (left)
			and its \kl{$(\Xx,2)$-derived tree} $\semdt{\Xx,2} T$ (right)}
			\label{fig:ex:lambda-tree}
	\end{figure*}

	\begin{exampleOK}\label{ex:lambda-tree}
		Consider the \kl{first-order lambda-term} (\kl{of type} $\otyp$)
		\begin{align*}
			M = (\lambda\varx.\lambda\vary.\leta\applyto\varx\applyto\vary)\applyto\letc_1\applyto\letc_2.
		\end{align*}
		In this case $\Alphabet$ contains a \kl{letter} $\leta$ of \kl{rank} 2 and two \kl{letters} $\letc_1, \letc_2$ of \kl{rank} 0,
		and $\Xx=\set{\varx,\vary}$.
		The new alphabet is thus $\AlphabetXx=\set{\tconst{\leta},\tconst{\letc_1},\tconst{\letc_2},\tvar{\varx},\tvar{\vary},\tlambda{\varx},\tlambda{\vary},\tat}$,
		where the \kl{letter} $\tat$ is of \kl{rank} $2$, the \kl{letters} $\tlambda{\varx},\tlambda{\vary}$ are of \kl{rank} $1$, and the other \kl{letters} are of \kl{rank} $0$.
		The \kl{reification} of the \kl{lambda-term} $M$
		is the \kl{lambda-tree}
		\begin{align*}
			\reify M=\tat\applyto(\tat\applyto(\tlambda{\varx}\applyto(\tlambda{\vary}\applyto(\tat\applyto(\tat\applyto\tconst{\leta}\applyto\tvar{\varx})\applyto\tvar{\vary})))\applyto\tconst{\letc_1})\applyto\tconst{\letc_2}
		\end{align*}
		depicted in \cref{fig:ex:lambda-tree} (left).
		Notice that while $M$ contains actual \kl{lambda-binders} ``$\lambda x$'' and ``$\lambda y$'',
		its \kl{reification} $\reify M$ contains only \kl{letters}
		(i.e., no \kl(lambda){variables} and no \kl{lambda-binders}).
	\end{exampleOK}

	The following example shows how \kl{reification} is applied to a \kl{lambda-term} which is not \kl{first-order}.
	
	\begin{exampleOK}
		Consider the \kl{lambda-term}
		\begin{align*}
			M = \lambda \varf.\lambda \varx.
				\leta \applyto \varx \applyto (\varf\applyto \varx).
		\end{align*}
		\kl{of type}
		\begin{align*}
			\alpha = (\otyp \arr \otyp) \arr \otyp \arr \otyp.
		\end{align*}
		Applying \kl{reification} to $M$
		(formally defined 
		in \cref{appendix})
		yields the \kl{lambda-term}
		\begin{align*}
			\reify M = \lambda{\reify\varf}.\tlambda\varx \applyto (\tat \applyto (\tat \applyto \tconst\leta \applyto \tvar\varx) \applyto (\tat \applyto \reify\varf \applyto \tvar\varx))
		\end{align*}
		of the \kl{reified} \kl[of type]{type}
		\begin{align*}
			\reify \alpha = \otyp\arr\otyp.
		\end{align*}
		Notice that in this case the \kl{reified} \kl{lambda-term} $\reify M$
		is not a \kl{lambda-tree}
		since the original \kl{lambda-term} $M$ is not \kl{first-order}:
		\kl{Reification} is performed only to \kl(lambda){order}-zero \kl{lambda-binders},
		and \kl{applications} with \kl(lambda){order}-zero arguments;
		higher-order \kl{lambda-binders} and \kl{applications} with higher-order arguments are not reified.
		In particular, in $\reify M$ we still have a \kl{lambda-binder} ``$\lambda \reify\varf$'',
		where the \kl(lambda){variable} $\varf$
		\kl{of type} $\beta = \otyp \arr \otyp$
		has become the \kl(lambda){variable} $\reify\varf$
		\kl{of type} $\reify\beta = \otyp$.
	\end{exampleOK}

\AP	We now define the \reintro{$(\Xx,s)$-derived tree} of a \kl{lambda-tree} $T = \reify M$, denoted $\semdt{\Xx,s} {T}$.
	The construction of the \kl{derived tree} can be seen as a counterpart of the \kl{Böhm tree} on the side of \kl{reified} \kl{lambda-terms}.
	Namely, the derived tree is defined in such a way that the \kl{derived tree} of the \kl{reification} of $M$ equals the \kl{Böhm tree} of $M$,
	that is, $\semdt{\Xx,s} {\reify M} = \BT M$
	(c.f.~\cref{lem:new-scheme,lem:reify:sucessor:BT}).
	Thus \kl{derived trees} formally explain how to recover the semantics of $M$ (its \kl{Böhm tree})
	by only looking at its \kl{reification} $\reify M$.
	This is used later in \cref{lem:1-way-to-2-way}
	to state how \kl{two-way automata} on $\reify M$ can simulate \kl{one-way automata} on $M$.
	
	The definition of the \kl{derived tree} exploits the fact that
	a \kl{first-order lambda-term} $M$ uses only \kl{variables of type} $\otyp$.
	We can thus read the \kl{B\"ohm tree} of $M$ directly,
	without performing any \kl{reduction},
	just by exploring its \kl{reification} $T = \reify M$.
	%
	Essentially, we walk down through $T$,
	skipping all \kl{reified} \kl{lambda-binders} $\tlambda x$ and choosing the left branch in all \kl{reified} \kl{applications} $\tat$.
	Whenever we reach some \kl{reified} \kl(lambda){variable} $\tvar x$, we go up to the corresponding \kl{reified} \kl{lambda-binder} $\tlambda x$,
	then up to the corresponding \kl{reified} \kl{application} $\tat$, 
	and then we again start going down in the argument of this \kl{application}.
	
	Formally, let $\Xx$ be a finite set of \kl{variables of type} $\otyp$, and let $s\in\Nat$.
	The intended meaning is that $\Xx$ contains \kl(lambda){variables} that may potentially appear in the considered \kl{lambda-tree} $T$,
	and that $s$ is a bound for the arity of \kl{types} in the \kl{lambda-term} represented by $T$ (the \kl{types} of all \kl{subterms} thereof should be of the form $\otyp^k\arr\otyp$ for $k\leq s$).
\AP	We take\footnote{These directions are unrelated with directions in tree automata from \cref{sec:preliminaries}.} $\intro*\Dirs_{\Xx,s}=\set{\intro*\Rdown}\cup\set{\intro*\Rup_x\mid x\in\Xx}\cup\set{\reintro*\Rup_i\mid 1\leq i\leq s}$.
	Intuitively, $\Rdown$ means that we go down to the left \kl{child} of a \kl{node} labelled by $\tat$
	or to the unique \kl{child} of a \kl{node} labelled by ``$\tlambda{x}$'',
	$\Rup_x$ means that we are going up while looking for the value of the \kl(lambda){variable} $x$,
	and $\Rup_i$ means that we are going up while looking for the $i$-th argument of an \kl{application}.
\AP	For a \kl{node} $v$ of $T$ denote its \kl{parent} by $\intro*\parent(v)$,
	and its $i$-th \kl{child} by $\intro*\child_i(v)$,
	where $1 \leq i \leq k$ and $k$ is the arity of $v$.
\AP	For $d\in\Dirs_{\Xx,s}$, and for a \kl{node} $v$ of $T$ labeled by $\zeta\in\AlphabetXx$, 
	we define the \intro{$(\Xx,s)$-successor} of $(d,v)$,
	when it exists, as
	\begin{compactenum}
	\itemAP	$(\Rdown,\child_1(v))$ if $d=\Rdown$ and $\zeta=\tlambda{x}$ (for some $x$) or $\zeta=\tat$,
	\itemAP	$(\Rup_x,\parent(v))$ if $d=\Rdown$ and $\zeta=\tvar{x}$ (for some $x$),
	\itemAP	$(\Rup_x,\parent(v))$ if $d=\Rup_x$ and $\zeta\neq\tlambda{x}$ (including the case when $\zeta=\tlambda{y}$ for $y\neq x$),
	\itemAP	$(\Rup_1,\parent(v))$ if $d=\Rup_x$ and $\zeta=\tlambda{x}$,
	\itemAP	$(\Rup_{i+1},\parent(v))$ if $d=\Rup_i$ for $i<s$ and $\zeta=\tlambda{y}$ (for some $y$),
	\itemAP	$(\Rup_{i-1},\parent(v))$ if $d=\Rup_i$ for $i>1$ and $\zeta=\tat$,
	\itemAP	$(\Rdown,\child_2(v))$ if $d=\Rup_1$ and $\zeta=\tat$.
	\end{compactenum}
	In particular, the \kl{$(\Xx,s)$-successor} of $(d,v)$ is again a pair $(d',v')$, where $d'\in\Dirs_{\Xx,s}$ and $v'$ is a \kl{node} of $T$.
	Note that every pair $(d,v)$ has at most one \kl{$(\Xx,s)$-successor}, but there may be pairs without any \kl{$(\Xx,s)$-successors}.
	In particular, pairs with $v$ labelled by $\tconst a$ do not have a \kl{successor}.
	
	Rule~1 allows us go to down to the first \kl{child} in the case of \kl{reified} \kl{lambda-binders} $\tlambda x$
	and \kl{reified} \kl{applications} $\tat$.
	Rule~2 records that we have seen a \kl{reified} \kl(lambda){variable} $\tvar x$ (which is a \kl{letter}),
	and thus we need to find its value by going up.
	Rule~3 climbs the \kl{tree} upwards as long as we do not see the corresponding \kl{reified} \kl{lambda-binder} $\tlambda x$.
	Rule~4 records that we have seen $\tlambda x$ and initializes its level to $1$.
	We now need to find the corresponding \kl{application}.
	Rule~5 increments the level and goes up when we encounter a \kl{reified} \kl{lambda-binder} $\tlambda y$ (which is just a \kl{letter}),
	and Rule~6 decrements it for \kl{reified} \kl{applications}~$\tat$.
	Finally, when we see a \kl{reified} \kl{application} at level~$1$,
	we apply Rule~7 which searches for the value of~$\tvar x$ in the right \kl{child}.

\AP	An \intro{$(\Xx,s)$-maximal path} from $(d_1,v_1)$ is a sequence of pairs $(d_1,v_1),(d_2,v_2),\dots$
	in which every $(d_{i+1},v_{i+1})$ is the \kl{$(\Xx,s)$-successor} of $(d_i,v_i)$,
	and which is either infinite or ends in a pair that has no \kl{$(\Xx,s)$-successor}.
\AP	For $d\in\Dirs_{\Xx,s}$, and for a \kl{node} $v$ of $T$,
	we define the \intro{$(\Xx,s)$-derived tree} from $(T,d,v)$,
	denoted by $\semdt{\Xx,s}{T, d, v}$,
	by coinduction:
	\begin{compactitem}
	\item	if the \kl{$(\Xx,s)$-maximal path} from $(d,v)$ is finite and ends in $(\Rdown,w)$ for a \kl{node} $w$ labeled by $\tconst{a}$,
		then 
		\begin{align*}
			\semdt{\Xx,s}{T, d, v} = a\applyto\semdt{\Xx,s}{T,\Rup_1,\parent(w)}\applyto\dots\applyto\semdt{\Xx,s}{T,\Rup_{\rank(a)},\parent(w)};
		\end{align*}
	\item	otherwise, $\semdt{\Xx,s}{T, d, v} = \bot$.
	\end{compactitem}
\AP	The \reintro{$(\Xx,s)$-derived tree} from $T$ is 
	$\intro[\semdt]{\semdt {\Xx,s} T} = \semdt{\Xx,s}{T,\Rdown,v_0}$, where $v_0$ is the \kl{root} of $T$.
	We say that $T$ is \intro(lambda-tree){normalizing} if $\semdt {\Xx,s} T$ does not contain $\bot$.
	
	\begin{figure*}
		\centering
		\begin{minipage}[c]{0.39\textwidth}
			\centering
			\def\svgscale{0.5}\import{pics/}{successors_1.pdf_tex_ok}\vspace{-1ex}
		\end{minipage}
		\begin{minipage}[c]{0.59\textwidth}
			\centering
			\def\svgscale{0.5}\import{pics/}{successors_2.pdf_tex_ok}\vspace{-1ex}
		\end{minipage}
		\caption{Illustration to \cref{ex:successors}. Arrows denote \kl{successors} in the construction of the \kl{derived tree}.}
		\label{fig:ex:successors}
	\end{figure*}

	\begin{exampleOK}\label{ex:successors}
		Let us come back to the \kl{lambda-tree} $T$ from \cref{ex:lambda-tree} (depicted in \cref{fig:ex:lambda-tree}).
		Denote the \kl{nodes} on the leftmost \kl{branch} of $T$ by $v_1,\dots,v_7$ ($v_1$ is the \kl{root}, and $v_7$ is the $\tconst{\leta}$-labeled \kl{leaf}),
		and the other four \kl{nodes} with labels $\tconst{\letc_2},\tconst{\letc_1},\tvar{\vary},\tvar{\varx}$ by $w_1,w_2,w_3,w_4$, respectively.

		To find the \kl{root} of the \kl{$(\Xx,2)$-derived tree} of $T$, we need to follow the \kl{$(\Xx,2)$-maximal path} from $(\Rdown,v_1)$.
		The \kl{$(\Xx,2)$-successor} of $(\Rdown,v_1)$ is $(\Rdown,v_2)$;
		its \kl{$(\Xx,2)$-successor} is $(\Rdown,v_3)$, and so on;
		the path ends in $(\Rdown,v_7)$, which has no \kl{$(\Xx,2)$-successor}.
		Thus the \kl{root} of the \kl{derived tree} is labelled with $\leta$.
		This is shown in the left part of \cref{fig:ex:successors}.

		To find the right \kl{child} of this \kl{root}, we need to follow the \kl{$(\Xx,2)$-maximal path} from $(\Rup_2,v_6)$.
		This path goes through $(\Rup_1,v_5),\allowbreak(\Rdown,w_3),(\Rup_\vary,v_5),(\Rup_\vary,v_4),(\Rup_1,v_3),(\Rup_2,v_2),(\Rup_1,v_1),(\Rdown,w_1)$;
		the last pair has no \kl{$(\Xx,2)$-successor}, so the right \kl{child} of the \kl{root} in the \kl{$(\Xx,2)$-derived tree} of $T$ is labelled by $\letc_2$.
		Note that the \kl{node} $v_5$ is visited twice,
		with two different directions.
		This is shown in the right part of \cref{fig:ex:successors}.

		The left \kl{child} can be found in an analogous way,
		starting from $(\Rup_1,v_6)$.
		The resulting \kl{$(\Xx,2)$-derived tree} $\semdt{\Xx,2} T$ is thus $\leta\applyto\letc_1\applyto\letc_2$, depicted in \cref{fig:ex:lambda-tree} (right).
	\end{exampleOK}

	The next example shows how \kl{reification} is applied to a whole recursive scheme.
	\begin{exampleOK}
		Recall the \kl{recursion scheme} $\Gg$ from \cref{ex:scheme}, having the following two \kl{rules}:
		\begin{align*}
			\Rr(\nonS) &= \nonA\applyto \letb_1\applyto \letb_2\applyto \letc\applyto \letc, \\
			\Rr(\nonA) &= \lambda \varf.\lambda \varg.\lambda \varx.\lambda \vary.
				\ndd\applyto (\leta\applyto \varx\applyto \vary)\applyto (\nonA\applyto \varf\applyto \varg\applyto (\varf\applyto \varx)\applyto (\varg\applyto \vary)).
		\end{align*}
		Applying \kl{reification} to the \kl{recursion scheme} $\Gg$
		one obtains the \kl{recursion scheme} $\reify\Gg$ (formally defined in \cref{eq:reified scheme} in the appendix)
		with the following two \kl{rules}
		\begin{align*}
			\reify\Rr(\reify\nonS) &= \tat\applyto(\tat\applyto(\nonA\applyto\tconst{\letb_1}\applyto\tconst{\letb_2})\applyto\tconst{\letc})\applyto\tconst{\letc}, \\
			\reify\Rr(\reify\nonA) &= \lambda\varf.\lambda\varg.\tlambda{\varx}\applyto\big(\tlambda{\vary}\applyto\big(\tat\applyto\big(\tat\applyto\tconst{\ndd}\applyto
				(\tat\applyto(\tat\applyto\tconst{\leta}\applyto\tvar{\varx})\applyto\tvar{\vary})\big)\applyto
				\big(\tat\applyto(\tat\applyto(\nonA\applyto\varf\applyto\varg)\applyto(\tat\applyto\varf\applyto\tvar{\varx}))\applyto(\tat\applyto\varg\applyto\tvar{\vary})\big)\big)\big).
		\end{align*}
	\end{exampleOK}

\AP	The following lemma describes existence of the \kl{reified} \kl{recursion scheme} $\reify\Gg$, satisfying necessary properties.
	It crucially relies on the \kl(scheme){safety} assumption.
	%
	
	\begin{lemmaOK}\label{lem:new-scheme}
		For every \kl(scheme){normalizing} \kl{safe recursion scheme} $\Gg$ of \kl(scheme){order} $m\geq 1$
		one can construct a \kl{safe recursion scheme} $\reify\Gg$ of \kl(scheme){order} $m-1$,
		a finite set of \kl(lambda){variables} $\Xx$,
		and a number $s\in\Nat$
		such that
		\begin{align*}
			\semdt{\Xx,s}{\BT {\reify\Gg}} = \BT \Gg.
		\end{align*}
	\end{lemmaOK}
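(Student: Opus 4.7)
I will adapt the classical order-reduction technique of Knapik, Niwi\'nski, and Urzyczyn~\cite{easy-trees}, which originally produces a scheme of order~$m-1$ whose B\"ohm tree encodes the one of $\Gg$ in some specific tree format, so that the encoding format coincides with our lambda-tree semantics~$\semdt{\Xx,s}{\cdot}$. Concretely, I would first alpha-rename the rules of $\Gg$ so that all bound \kl{variables of type}~$\otyp$ are distinct, take $\Xx$ to be the (finite) set of all such \kl{variables of type}~$\otyp$ appearing in $\Gg$, and take $s$ to be the largest $k$ such that a type of the form $\otyp^k\arr\otyp$ occurs as the type of some \kl{subterm} of a right-hand side of $\Gg$. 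With this choice, the alphabet $\AlphabetXx$ contains enough symbols to encode (as a lambda-tree) any \kl{first-order lambda-term} obtainable during the reduction of $\Gg$.

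\textbf{The construction.} Consider a \kl{rule} $\Rr(F)=\lambda x_1.\cdots.\lambda x_k.K$ where $F$ has maximal order~$m$. Split the parameters into the order-$(m-1)$ ones $y_1,\dots,y_j$ and the ones of lower order; only the $y_i$ are problematic for reducing the order. In $\Gg'$ I introduce, for each such $F$, a new \kl{nonterminal} $F'$ whose type is obtained from that of $F$ by dropping the $y_i$ arguments, i.e.\ $\ord(F')\leq m-1$; the body of $F'$ will produce the lambda-tree encoding of what $F$ would do. The key transformation is: wherever the body of $F$ forms an \kl{application} $H \applyto N_1 \applyto\cdots\applyto N_\ell$ involving an order-$m$ nonterminal or order-$(m-1)$ variable~$H$, we replace it in $\Gg'$ by the explicit lambda-tree code
\begin{align*}
    \tat\applyto(\cdots\tat\applyto(\tlambda{z_1}\applyto\cdots\applyto\tlambda{z_p}\applyto\widehat{H}) \applyto\widehat{N_{i_1}}\cdots)\applyto\widehat{N_{i_p}},
\end{align*}
where $\widehat{\cdot}$ is the recursively produced lambda-tree representation and the $N_{i_t}$ are the arguments of order~$m-1$; the remaining arguments are passed as in $\Gg$. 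Intuitively, the higher-order substitution that $\Gg$ would perform is deferred and re-exposed as a beta-redex inside the output lambda-tree, which the derived-tree semantics $\semdt{\Xx,s}{\cdot}$ will execute on our behalf via its $\Rup_x$ navigation rules. The \kl(lambda){safety} of $\Gg$ is essential here: whenever we defer a substitution for an order-$(m-1)$ subterm~$N$, all of $N$'s \kl(lambda){free variables} have \kl(type){order} at least~$m-1$, so by the time our new rule binds the ground-type variables in $\Xx$ there are no ``hidden'' higher-order closures to propagate. Iterating this transformation over all rules eliminates every occurrence of an order-$m$ nonterminal and of an order-$(m-1)$ variable from $\Gg'$, leaving a scheme of order at most $m-1$ that one checks to be safe by tracking the free-variable order invariant along the translation.

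\textbf{Correctness.} Equality $\semdt{\Xx,s}{\BT{\Gg'}} = \BT \Gg$ I would establish by a coinductive argument: each letter produced at the root of $\BT \Gg$ via head reduction of $\Lambda(\Gg)$ corresponds to an $(\Xx,s)$-maximal path in $\BT{\Gg'}$ that first descends through the encoded \kl{lambda-binders} and \kl{applications}, using the seven navigation rules of $\semdt{\Xx,s}{\cdot}$, and eventually reaches a $\tconst a$-node; the subtrees of $\BT \Gg$ at the children of that root letter are then, by coinduction, the derived trees from the appropriate $(\Rup_i,\cdot)$ positions. Matching the navigation of $\semdt{\Xx,s}{\cdot}$ with head-beta-reduction of $\Lambda(\Gg)$ is straightforward once the encoding is in place.

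\textbf{Main obstacle.} The delicate part is maintaining \kl(scheme){safety} of $\Gg'$ through the construction: a naive translation risks producing rule bodies that contain, as \kl(lambda){free variables}, the ground-type parameters of the outer binders together with remnants of the $y_i$, potentially violating the order condition for safety on intermediate \kl{subterms}. The argument must therefore use safety of $\Gg$ in a structural way at each recursive step of $\widehat{\cdot}$, showing that the ``exposed'' free variables of each translated subterm still satisfy the order constraint, and that the $y_i$ that get turned into $\tlambda{}$-binders capture precisely the right set of ground-type variables from $\Xx$. Once this invariant is maintained, \cref{lem:new-scheme} follows, and the remaining steps of \cref{lem:reduce-order} (adapting $\Aa$ into a two-way automaton $\Aa'$ that interprets lambda-trees as their derived trees using the seven navigation rules) are a routine encoding into the two-way transition structure.
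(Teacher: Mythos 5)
There is a genuine gap: your construction operates at the wrong end of the type hierarchy. The order reduction of Knapik et al.\ --- which the paper's proof essentially just cites from \cite{easy-trees,hyperalgebraic-trees}, adding only two normalizations (conversion to homogeneous types, which is possible for safe schemes while preserving safety and order, and ensuring every positive-order nonterminal has an order-$0$ parameter) --- eliminates the \emph{ground-type} parameters: every variable $x$ of type $\otyp$ becomes the nullary letter $\tvar{x}$, every binder of such a variable becomes $\tlambda{x}$, and every application with a ground-type argument becomes a $\tat$-node. This lowers the order of every type of positive order by one, and the redexes deferred into the tree are all \emph{first-order}, which is exactly what the walking semantics $\semdt{\Xx,s}{\cdot}$ can evaluate: its navigation rules do nothing more than substitute a whole subtree for a nullary leaf $\tvar{x}$. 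Your construction instead drops the order-$(m-1)$ parameters $y_1,\dots,y_j$ and emits nodes $\tlambda{z_t}$ binding variables of order $m-1$. For $m\geq 2$ these variables are not of type $\otyp$, so they cannot belong to $\Xx$ (which by definition contains only ground-type variables --- indeed you yourself choose $\Xx$ to be the ground-type variables of $\Gg$, contradicting your own construction), the alphabet $\AlphabetXx$ has no letters for them, and, more fundamentally, the seven successor rules cannot evaluate a higher-order redex: a higher-order variable occurrence may itself be the head of an application, and resolving it requires composing pending argument lists, i.e.\ genuine higher-order substitution, which a finite-state tree walker cannot perform. This is precisely why the reduction peels off the \emph{lowest} level, one order at a time; your scheme only coincides with the correct one in the degenerate case $m=1$.

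A smaller but related issue is that your appeal to safety targets the wrong invariant. What safety buys is that every subterm of order $\geq 1$ is superficially safe and hence has \emph{no free variables of type $\otyp$}; consequently, when subterms are copied during reduction, no ground-type variable can be captured, and the lambda-tree may keep the original variable names with no renaming --- this is the soundness of leaving $\tvar{x}$, $\tlambda{x}$ unrenamed in $\BT{\Gg'}$. Your observation that an order-$(m-1)$ subterm has free variables of order $\geq m-1$ is the same condition read at the top of the hierarchy, where it is not the one needed. You also omit the homogeneity normalization, without which the clean separation ``ground-type parameters come last'' that the transformation relies on is unavailable. To repair the proof, either cite \cite{easy-trees,hyperalgebraic-trees} with the two normalizations as the paper does, or redo the construction at the ground-type level throughout.
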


	All the crucial ingredients of the proof of \cref{lem:new-scheme} (with some differences in definitions, and with some omitted details)
	are already contained in papers of Knapik et al.~\cite{easy-trees,hyperalgebraic-trees}.
	In the interest of being self-contained,
	we provide a full proof of \cref{lem:new-scheme} in \cref{appendix}.
	Here, we content ourselves with providing a high-level description of the proof.
	To construct $\reify\Gg$,
	one needs to replace in $\Gg$ every \kl(lambda){variable} $x$ \kl{of type} $\otyp$ by $\tvar{x}$, every \kl{lambda-binder} concerning such a \kl(lambda){variable} by $\tlambda{x}$, 
	and every \kl{application} with an argument \kl{of type} $\otyp$ by a construct creating a $\tat$-labeled \kl{node},
	as demonstrated in the examples above.
	Types of \kl{subterms} change and the \kl(scheme){order} of the \kl{recursion scheme} decreases by one.
	While in general computing $\BT{\Lambda(\Gg)}$ requires one to rename \kl(lambda){variables}
	in order to perform \kl{capture-avoiding substitutions},
	in the \kl{tree generated by} the modified \kl{recursion scheme} we leave the original variable names unchanged.
	In general (i.e., when the transformation is applied to an arbitrary, possibly unsafe, \kl{recursion scheme})
	this is incorrect due to overlapping variable names and thus possibly unsound \kl{substitutions}.
	The assumption that $\Gg$ is \kl(scheme){safe} is crucial here:
	there is no need to rename \kl(lambda){variables}
	when applying the transformation to a \kl{safe recursion scheme}.
	We refer to \cref{appendix} for a detailed proof of \cref{lem:new-scheme}.
		
	Recall that we are heading towards proving \cref{lem:reduce-order}.
	Having \cref{lem:new-scheme}, it remains to transform a \kl{one-way B-automaton} $\Aa$ operating on the \kl{tree generated by} $\Gg$
	into a \kl{two-way B-automaton} $\Aa'$ operating on the \kl{lambda-tree} \kl{generated by} $\reify\Gg$, as described by the following lemma
	(as mentioned on page~\pageref{page:fully-convergent}, we can assume that $\Gg$ is \kl(scheme){normalizing},
	which implies that $\BT{\reify\Gg}$ is \kl(lambda-tree){normalizing}: the \kl{tree} $\semdt{\Xx,s}{\BT{\reify\Gg}}=\BT{\Gg}$ does not contain $\bot$):

	\begin{lemmaOK}\label{lem:1-way-to-2-way}
		Let $\Aa$ be an \kl{alternating one-way B-automaton} over a finite alphabet $\Alphabet$,
		let $\Xx$ be a finite set of \kl(lambda){variables}, and let $s\in\Nat$.
		One can construct an \kl{alternating two-way B-automaton} $\Aa'$ such that
		for every \kl(lambda){normalizing} \kl{lambda-tree} $T$ over $\AlphabetXx$,
		\begin{align*}
			\Aa\text{ \kl(aut){accepts} }\semdt {\Xx, s} T
			\quad \text{ if and only if } \quad
			\Aa'\text{ \kl(aut){accepts} } T.
		\end{align*}
	\end{lemmaOK}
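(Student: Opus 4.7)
The plan is to construct $\Aa'$ as a simulator of $\Aa$ that walks over $T$ according to the $(\Xx,s)$-successor relation, firing one transition of $\Aa$ each time the walk reaches a $\tconst{a}$-labeled node with direction $\Rdown$---precisely when the derived tree produces the next input letter. Concretely, I would take $\Aa'$ to have state set $Q \times \Dirs_{\Xx,s}$ (plus a losing sink), initial state $(q_0, \Rdown)$, the same counters as $\Aa$, and the priority of $(q,d)$ set to $\pr(q)$. At state $(q,d)$ on a node $v$ labeled $\zeta$, the transition is defined in two cases. If $(d,v)$ admits an $(\Xx,s)$-successor $(d',v')$, then $\Aa'$ performs a forced move to the adjacent node (using the appropriate one of $\de$, $\ddown_i$, or $\dup$) with action $\e$ and new state $(q,d')$. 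Otherwise, when $d = \Rdown$ and $\zeta = \tconst{a}$, a simulation step fires: the transition formula is obtained from $\delta(q,a)$ by rewriting each triple $(\ddown_i, \act, q')$ into $(\de, \act, (q', \Rup_i))$ and each $(\de, \act, q')$ into $(\de, \act, (q', \Rdown))$, so that $\Aa'$ stays at the same $\tconst{a}$ node while applying the counter action and switching direction to start the next walk. Impossible moves (such as going up from the root) are routed to the losing sink.

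For correctness, I would establish a bijection between strategies and plays in the two games. Walking steps offer no choice, so all strategic choices of Eve and Adam are concentrated at simulation steps, where they mirror the choices in $\delta(q,a)$ one-to-one. Because $T$ is normalizing, the derived tree contains no $\bot$, so every walking segment terminates at some $\tconst{a'}$ node (or at a leaf-constant handled by $\de$), and infinite plays of $\Aa'$ correspond exactly to infinite plays of $\Aa$ on $\semdt{\Xx,s}T$. Counter values match trivially: $\Aa'$ performs exactly one $\act$-action per simulated $\Aa$-transition, with all intervening walking steps using $\e$.

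The delicate point will be the parity condition. Within any walking segment the first coordinate $q$ of the state of $\Aa'$ stays fixed, so $\pr(q)$ gets repeated a finite number of extra times. I need to argue that this repetition does not affect the largest priority occurring infinitely often: if a play of $\Aa$ visits states $q_0, q_1, \dots$, then the corresponding play of $\Aa'$ visits priorities $\pr(q_0)^{k_0} \pr(q_1)^{k_1} \cdots$ with each $k_i$ finite, and a priority $p$ occurs infinitely often in the latter sequence exactly when $\pr(q_i) = p$ for infinitely many $i$---the same condition as for the original play. Combined with the matching of counter values, this yields $\val(\pi') = \val(\pi)$ for corresponding plays, so $\Aa$ $n$-accepts $\semdt{\Xx,s}T$ if and only if $\Aa'$ $n$-accepts $T$.
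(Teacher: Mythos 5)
Your construction is exactly the one the paper uses: $\Aa'$ simulates $\Aa$ by storing the direction component $d\in\Dirs_{\Xx,s}$ in its state and following the \kl{$(\Xx,s)$-successor} relation between simulated transitions, with the \kl(lambda){normalizing} assumption guaranteeing that walking segments terminate. The paper only sketches this and explicitly omits the "tedious details" (the product state space, the $\e$-actions on walking steps, and the parity/counter preservation argument), all of which you fill in correctly.
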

	
	\begin{proof}
		The \kl{B-automaton} $\Aa'$ simulates $\Aa$ on the \kl{lambda-tree}.
		Whenever $\Aa$ wants to go down to the $i$-th \kl{child}, $\Aa'$ has to follow the \kl{$(\Xx,s)$-maximal path} from $(\Rup_i,v)$ (where $v$ is the current \kl{node}).
		To this end, it has to remember the current pair $(d,v)$, and repeatedly find its \kl{$(\Xx,s)$-successor}.
		Here $v$ is always just the current \kl{node} visited by the \kl{B-automaton};
		the $d$ component comes from the (finite) set $\Dirs_{\Xx,s}$,
		and thus it can be remembered in the \kl{state}.
		It is straightforward to encode the definition of an \kl{$(\Xx,s)$-successor} in \kl(aut){transitions} of an automaton; we thus omit these tedious details.
		We do not have to worry about infinite \kl{$(\Xx,s)$-maximal paths}, because by assumption the \kl{$(\Xx,s)$-derived tree} does not contain $\bot$-labeled \kl{nodes}.
	\end{proof}

	\Cref{lem:reduce-order} is thus proved by applying \cref{lem:new-scheme} and \cref{lem:1-way-to-2-way}.
	In turn, this proves \cref{thm:main-1}, the main result of this section.

\section{Downward closures of tree languages}
\label{sec:ideals}

	In this section we lay down a method for computation of the \kl{downward closure}
	for classes of languages of finite \kl{trees} closed under \kl{linear FTT transductions}, which we define in \cref{dec:dc:transductions}.
	This method is analogous to the one of Zetzsche~\cite{Zetzsche:ICALP:2015} for the case of finite words.
	In \cref{sec:dc:prelim} we define the \kl{downward closure} of languages of finite ranked \kl{trees} with respect to the \kl{embedding well-quasi order}
	and in \cref{sec:dc:sup} we define the \kl{simultaneous unboundedness problem} for \kl{trees}
	and show how computing the \kl{downward closure} reduces to it.
	In \cref{sec:dc:diagonal} we define the \kl{diagonal problem} for finite \kl{trees}
	and show how the previous problem reduces to it.
	The development of this section is summarised by the following theorem (notions used in its statement are defined in the sequel):
	\begin{theoremOK}
		\label{thm:reduction}
	 	Let $\Class$ be a class of languages of finite \kl{trees} effectively closed under \kl{linear FTT transductions}.
		If the \kl{diagonal problem} for $\Class$ is decidable,
		then \kl{downward closures} are computable for~$\Class$.
	\end{theoremOK}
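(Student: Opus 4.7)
The plan is to adapt the saturation strategy of Zetzsche from words to finite ranked trees, following the roadmap announced for this section: first reduce the computation of $\downwardclosure\Ll$ to the \kl{simultaneous unboundedness problem} (SUP) for $\Class$, and then reduce that problem to the \kl{diagonal problem} for $\Class$ by means of a \kl{linear FTT transduction}.

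By Higman's lemma, $\downwardclosure\Ll$ is regular and, as will be established in the section, admits a representation as a finite union of \kl{simple tree regular expressions} (STREs). Since STREs form an effectively enumerable syntactic class, it suffices to provide, for an arbitrary STRE $E$, a decision procedure for $E \subseteq \downwardclosure\Ll$; enumerating STREs, collecting those contained in $\downwardclosure\Ll$, and combining this with a termination argument based on the existence of a finite canonical representation of $\downwardclosure\Ll$, yields the downward closure effectively.

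To decide $E \subseteq \downwardclosure\Ll$, note that an STRE is essentially a pattern built from fixed contexts together with iterators that can be pumped to arbitrary depth. Containment reduces to a quantitative statement: for every $n$, there exists $T\in\Ll$ into which the $n$-fold pumping of every iterator of $E$ embeds homeomorphically \emph{simultaneously}. After rewriting $E$ as a union of \kl{diversified pure products}, so that its iterators are pairwise separated along disjoint branches, this becomes precisely an instance of the SUP for $\Ll$. The last step reduces SUP to the diagonal problem: given $E$ with iterators $C_1,\dots,C_k$, one constructs a \kl{linear FTT transduction} that, applied to each $T\in\Ll$, singles out the prospective pumping sites and marks them with fresh letters $a_1,\dots,a_k$, yielding a language $\Ll'$ which, by effective closure of $\Class$ under \kl{linear FTT transductions}, again lies in $\Class$. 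The SUP for $E$ over $\Ll$ then coincides with the diagonal problem for $\Ll'$ with respect to $\Sigma=\{a_1,\dots,a_k\}$, which is decidable by assumption.

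The main obstacle is the mismatch between the branch-local counting enforced by the diagonal problem for trees and the global pumping nature of STRE iterators. Making the reduction work requires pushing STREs into diversified form, so that every iterator corresponds to a dedicated letter that must occur many times on \emph{every} branch, and then realising this reshaping through a single \kl{linear FTT transduction}. Verifying that this reduction is both sound and complete for the homeomorphic embedding order on trees, as opposed to the simpler subword order on words where branches are trivial, is the crux of the argument and what motivates the three intermediate results (STRE representation, SUP reduction, FTT-based encoding) developed in \cref{sec:dc:prelim,sec:dc:sup,sec:dc:diagonal}.
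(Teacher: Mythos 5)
Your overall route --- first reduce downward-closure computation to \SUP, then reduce \SUP to the diagonal problem via linear FTT transductions --- is exactly the one the paper takes (\cref{thm:SUP} followed by \cref{lem:test-for-large}), but two steps of your plan have genuine gaps. First, your enumeration scheme only explains how to decide one inclusion, $\sem{E}\subseteq\downwardclosure\Ll$; to know when the enumeration may stop you must also decide the converse inclusion $\downwardclosure\Ll\subseteq\sem{P_1}\cup\dots\cup\sem{P_k}$, and the mere \emph{existence} of a finite canonical representation is not a stopping criterion. The paper decides this converse inclusion by noting that $\downwardclosure\Ll$ is itself effectively in $\Class$ (closure under linear FTT transductions), intersecting it with the complement of the candidate regular language, and observing that \emph{emptiness} for $\Class$ reduces to \SUP (apply the FTT that ignores its input and outputs all trees $\leta(\leta(\dots(\lete())\dots))$, then compare with the diversified pure product $(\leta(\hole))^*.\opt\lete()$). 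Without some such argument your procedure can never certify that its current union already covers $\downwardclosure\Ll$.

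Second, your reduction from \SUP to the diagonal problem, as described, would fail. Marking ``prospective pumping sites'' with fresh letters does not by itself resolve the mismatch you correctly identify between branch-local counting and global pumping: (i) an iterator may be a sum of several contexts, and a tree in $\sem{P}$ may interleave them arbitrarily, so one must first restrict attention to the regular set of versatile trees $\ctree{P}$, in which the contexts of each sum are appended in a fixed order --- only then does being $n$-large become a condition about counting occurrences of the root labels $\rootLab(I^*.P')$ along branches; and (ii) the diagonal problem demands at least $n$ occurrences of each letter of $\Sigma$ on \emph{every} branch, whereas an iterator $I^*.P'$ can only be pumped on branches passing through its holes, so branches on which pumping is impossible must be artificially padded with arbitrarily large trees of $\rootLab(I^*.P')$-labeled nodes (another linear FTT transduction). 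Both devices are needed for the equivalence between $\sem{P}\subseteq\downwardclosure\Ll$ and $\Sigma$-diagonality of the transformed language; diversification alone does not supply them, and in the paper it is in any case performed earlier, inside the reduction to \SUP, where the marks are added both to $P$ and, via a further FTT, to the trees of $\downwardclosure\Ll$.
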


	Let us emphasize that results of this section 
	can be applied to any class of languages of finite \kl{trees} closed under \kl{linear FTT transductions},
	not just those \kl(scheme){recognized} by \kl{safe recursion schemes}.
	In \cref{sec:downward-closure} we will solve the \kl{diagonal problem} in the particular case of languages of finite \kl{trees} \kl(scheme){recognized} by \kl{safe recursion schemes},
	and then we will exploit \cref{thm:reduction} to show that \kl{downward closures} are effectively computable for these languages.

\subsection{Representations of downward-closed languages}
\label{sec:dc:prelim}

\AP	Let $\intro*\embedsinto$ be the least relation on finite \kl{trees} such that
	\begin{inparaenum}[(1)]
		\item $S\embedsinto b\applyto T_1\applyto\dots\applyto T_r$ if $S \embedsinto T_i$ for some $i\in\set{1,\dots,r}$, and
		\item $a\applyto S_1\applyto\dots\applyto S_r\embedsinto b\applyto T_1\applyto\dots\applyto T_r$ if $S_i \embedsinto T_i$ for all $i\in\set{1,\dots,r}$.
	\end{inparaenum}
	When $S\embedsinto T$, we say that \intro{$S$ homeomorphically embeds into $T$}.
\AP	For a language of finite \kl{trees} $\Ll$, its \intro{downward closure}, denoted by $\intro*\downwardclosure \Ll$,
	is the set of \kl{trees} $S$ such that $S \embedsinto T$ for some \kl{tree} $T \in \Ll$.

	\begin{exampleOK}
		The \kl{tree} $\leta\applyto\letc_1\applyto\letc_2$ \kl{embeds} into the \kl{tree} $\letb\applyto(\leta\applyto(\leta\applyto\letc_1\applyto\letc_1)\applyto\letc_2)$,
		but it does not \kl{embed} into the \kl{tree} $\leta\applyto(\leta'\applyto\letc_1\applyto\letc_2)\applyto\letc_1$.
	\end{exampleOK}

	\begin{exampleOK}
		The \kl{downward closure} of the language $\lang\Gg$ from \cref{ex:lang} consists of all finite \kl{trees} of the form
		\begin{align*}
			&\underbrace{\letb_1\applyto(\letb_1\applyto(\dots\applyto(\letb_1}_n\applyto\letc)\dots)) &&\mbox{for $n\in \Nat$,}\\
			&\underbrace{\letb_2\applyto(\letb_2\applyto(\dots\applyto(\letb_2}_m\applyto\letc)\dots)) &&\mbox{for $m\in \Nat$, or}\\
			&\leta\applyto(\underbrace{\letb_1\applyto(\letb_1\applyto(\dots\applyto(\letb_1}_n\applyto\letc)\dots)))
				\applyto(\underbrace{\letb_2\applyto(\letb_2\applyto(\dots\applyto(\letb_2}_m\applyto\letc)\dots)))&&\mbox{for $n,m \in \Nat$.}
		\end{align*}
		Notice that $\lang\Gg$ is a non-regular language of finite \kl{trees},
		while its \kl{downward closure} above is in fact regular.
	\end{exampleOK}

\paragraph{Simple tree regular expressions.}

\AP	Goubault-Larrecq and Schmitz~\cite{Goubault-Larrecq:Schmitz:ICALP:2016} describe \kl{downward-closed} sets of \kl{trees} using \intro{simple tree regular expressions} (\reintro*{{\STRE}s}),
	which we now introduce.

\AP	A \intro{context} is a \kl{tree} possibly containing one or more occurrences of a special \kl{leaf} $\intro*\hole$,
	called a \intro{hole}.
\AP	Given a \kl{context} $C$ and a set of \kl{trees} $\Ll$,
	we write $\replace C \Ll$ for the set of \kl{trees} obtained from $C$ by replacing every occurrence of the \kl{hole} $\hole$ by some \kl{tree} from $\Ll$.
	Different occurrences of $\hole$ are replaced by possibly different \kl{trees} from $\Ll$.
	The definition readily extends to a set of \kl{contexts} $\Cc$,
	by writing $\replace \Cc \Ll$ for $\bigcup_{C \in \Cc} \replace C \Ll$.
	If $C$ does not have any $\hole$, then $\replace C \Ll$ is just $\set C$.

\AP	An \intro*\STRE is defined according to the following abstract syntax:
	\begin{align*}
		S       &::= P + \dots + P,&
		I	&::= C + \dots + C,&
		S_{\hole}	&::= \hole \alt S.\\
		P	&::= \opt a\applyto S\applyto\dots\applyto S \alt I^*.S,&
		C	&::= a\applyto S_{\hole}\applyto\dots\applyto S_{\hole},
	\end{align*}
	These expressions allow empty sums, which are denoted by $0$.
\AP 
	Subexpressions of the form $P$, $I$, and $C$ are called \intro{pre-products}, \intro{iterators}, and \reintro{contexts}, respectively.
	The word ``context'' is thus used to describe two different kinds of objects: \kl{trees} with \kl{holes}, and expressions of the form $C$ (denoting sets of \kl{trees} with \kl{holes}).
	
\AP	An \STRE $S$ denotes a set of \kl{trees} $\intro[\sem]{\sem S}$ \kl{downward-closed} for $\embedsinto$,
	which is defined recursively as follows:\phantomintro\opt\phantomintro\iter
	\begin{align*}
		\sem {P_1 + \dots + P_k} &= \sem {P_1} \cup\dots\cup \sem {P_k},\\
		\sem {\reintro*\opt a\applyto S_1\applyto\dots\applyto S_r} &= \downwardclosure{\setof{a\applyto T_1\applyto\dots\applyto T_r}{\forall i\,.\,T_i \in \sem {S_i}}},\\
		\sem {\reintro*\iter I.S} &= \bigcup_{n\in\Nat} \underbrace{\sem{I}[\dots[\sem{I}[}_n\!\sem{S}]]\dots],\\
		\sem {C_1 + \dots + C_k} &= \sem {C_1} \cup \dots \cup \sem {C_k},\\
		\sem {a\applyto S_{\hole, 1}\applyto\dots\applyto S_{\hole, r}} &= \downwardclosure{\setof{a\applyto T_1\applyto\dots\applyto T_r}{\forall i\,.\,T_i \in \sem {S_{\hole, i}}}},\\
		\sem {\hole} &= \set {\hole}.
	\end{align*}
\AP	Two \STRE{}s $S, T$ are \intro[equivalent STREs]{equivalent} if $\sem S = \sem T$.
	Since the sets $\sem {S_i}$ are \kl{downward closed}, we can see that if all $\sem {S_i}$ are nonempty, then 
	\begin{align}\label[formula]{star}\tag{$\star$}
		\sem{\opt a\applyto S_1\applyto\dots\applyto S_r} &= \setof{a\applyto T_1\applyto\dots\applyto T_r}{\forall i\,.\,T_i \in \sem {S_i}} \cup \sem {S_1} \cup \cdots \cup \sem {S_r}.
	\end{align}
	If, however, $\sem{S_i}=\emptyset$ for some $i\in\set{1,\dots,r}$, then $\sem{\opt a\applyto S_1\applyto\dots\applyto S_r}=\emptyset$.
	We have the same property also for $\sem {a\applyto S_{\hole, 1}\applyto\dots\applyto S_{\hole, r}}$.

	\begin{exampleOK}
		The set $\sem {\iter{(\leta\applyto\letb\applyto\hole)}.\opt \letc}$ (where $\leta$ is of \kl{rank} $2$, and $\letb,\letc$ are of \kl{rank} $0$)
		consists of \kl{trees} of the form either $\letb$, or $\letc$,
		or $\leta\applyto\letb\applyto\allowbreak(\leta\applyto\letb\applyto(\dots\allowbreak(\leta\applyto\letb\applyto x)\dots))$, where $x$ is either $\letb$ or $\letc$.
	\end{exampleOK}
	
	The following lemma is shown by Goubault-Larrecq and Schmitz~\cite[Proposition 18]{Goubault-Larrecq:Schmitz:ICALP:2016}:
	
	\begin{lemmaOK}
		\label{lem:STRE}
		For every \kl{downward-closed} set of \kl{trees} $\Ll$
		there exists an \STRE $S$ such that $\Ll = \sem S$
		(and, vice versa, every \STRE $S$ denotes a \kl{downward-closed} set of \kl{trees} $\sem S$).
	\QED\end{lemmaOK}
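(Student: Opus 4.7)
The plan is to prove the two inclusions separately, by structural induction on \STRE{}s for one direction, and by an ideal decomposition argument for the other.

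For the direction asserting that every \STRE denotes a downward-closed set, I would proceed by structural induction on the grammar of \STRE{}s. The atomic case $\sem{\hole} = \set{\hole}$ is trivial. The clauses for $\sem{\opt a(S_1,\dots,S_r)}$ and $\sem{a(S_{\hole,1},\dots,S_{\hole,r})}$ apply the downward-closure operator $\downwardclosure{(\cdot)}$ explicitly, hence yield downward-closed sets. Sums are preserved by the fact that finite (and even arbitrary) unions of downward-closed sets are downward-closed. For context substitution $\Cc[\Ll]$, one checks that whenever $\Cc$ is a downward-closed set of contexts and $\Ll$ is a downward-closed set of trees, so is $\Cc[\Ll]$ — this is routine by matching embeddings in a context-by-context manner. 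The iterator clause $\sem{\iter I.S} = \bigcup_{n\in\Nat} \sem I[\cdots[\sem I[\sem S]]\cdots]$ is then a union of downward-closed sets, hence itself downward-closed.

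For the harder direction, I would invoke the theory of ideals of well-quasi-ordered sets. By Kruskal's tree theorem, the finite trees over $\Alphabet$ form a \kl{embedding well-quasi order} under $\embedsinto$. A classical fact about WQO's is that every downward-closed subset decomposes as a \emph{finite} union of \kl(notion){ideals} (directed downward-closed sets, equivalently the irreducible downward-closed sets). Thus it suffices to show that every tree ideal is of the form $\sem P$ for some \kl{pre-product} $P$, since a finite union of pre-products is exactly the top-level shape $P+\dots+P$ of an \STRE. To this end, I would classify ideals by cases: those where the roots of trees in the ideal are bounded, which correspond to shapes $\opt a(S_1,\dots,S_r)$ (pick the common root $a$, take children ideals $S_1,\dots,S_r$, and use $\opt{}$ to record that strict sub-ideals are also included via descent into a child), and those where trees in the ideal have unbounded depth, which correspond to shapes $\iter I.S$ (capture the bounded set of ``context fragments'' that can be nested arbitrarily many times around some base ideal $S$).

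The principal obstacle is this structural classification of tree ideals. The case analysis requires showing that for every ideal with trees of unbounded depth, the ``pumping contexts'' can be packaged into a single finite ideal $I$ of one-hole contexts, and that there exists an ideal $S$ describing what sits at the deepest end, in such a way that $\iter I.S$ reproduces the original ideal exactly. This needs a careful argument that ideal-completions of the tree WQO match the two syntactic shapes of pre-products, and it is precisely this technical heart that is carried out by Goubault-Larrecq and Schmitz~\cite{Goubault-Larrecq:Schmitz:ICALP:2016}. Once this classification is available, both directions combine into the statement of the lemma.
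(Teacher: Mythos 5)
Your proposal is correct and follows essentially the same route as the paper, which gives no independent argument but simply cites Goubault-Larrecq and Schmitz~\cite[Proposition 18]{Goubault-Larrecq:Schmitz:ICALP:2016} for both directions. Your sketch accurately reconstructs the outline of that cited proof (easy direction by structural induction; hard direction via finite ideal decompositions of downward-closed sets in a well-quasi-order and a classification of tree ideals by the two pre-product shapes) and correctly identifies, and defers to the same reference, the technical heart of the matter.
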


\paragraph{Products.}

\AP	Among all \STRE{}s, Goubault-Larrecq and Schmitz~\cite{Goubault-Larrecq:Schmitz:ICALP:2016} distinguish \reintro{products}, which describe \emph{ideals} of \kl{trees}.
	Because every \kl{downward-closed} set of \kl{trees} is a finite union of ideals, such a set can be described by a finite list of products;
	this is the idea staying behind \cref{cor:STRE:product} below.
	
	In order to define products, Goubault-Larrecq and Schmitz~\cite{Goubault-Larrecq:Schmitz:ICALP:2016} give a way of simplifying {\STRE}s by means of a rewrite relation $\to_1$.
	Intuitively, the idea is to move the operator ``$+$'' inside-out as much as possible,
	and a product is a \STRE where no more rewriting can be done.
\AP	A \kl{context} $a\applyto S_{\hole, 1}\applyto\dots\applyto S_{\hole, r}$ is \intro[linear context]{linear} if at most one $S_{\hole, i}$ is a \kl{hole} $\hole$,
	and it is \intro[full context]{full} if $r\geq 1$ and all the $S_{\hole, j}$'s are
	\kl{holes} $\hole$.
\AP	An \kl{iterator} $C_1 + \dots + C_k$ is \intro[linear iterator]{linear} (\intro[full iterator]{full}) if all the $C_i$'s are \kl{linear contexts} (\kl{full contexts}, respectively).
	Assuming that ``$+$'' is commutative and associative, we define the rewrite relation $\to_1$ as follows:
	\begin{align}
		P + P' &\to_1 P'
			&&\textrm { if } \sem{P} \subseteq \sem{P'}, \\
		C + C' &\to_1 C'
			&&\textrm { if } \sem{C} \subseteq \sem{C'}, \label{rule:elim-smaller-iter}\\
		0^*.S&\to_1 S, \label{rule:elim-empty-iter}\\
		\opt a\applyto S_1\applyto\dots\applyto S_{i-1}\applyto 0\applyto S_{i+1}\applyto\dots\applyto S_r&\to_1 0, \label{rule:zero-in-opt}\\
		a\applyto S_{\hole, 1}\applyto\dots\applyto S_{\hole, i-1}\applyto 0\applyto S_{\hole, i+1}\applyto\dots\applyto S_{\hole, r} &\to_1 0, \label{rule:zero-in-context}\\ 
		I^*.0 &\to_1 0
			&& \textrm{ if $I$ is \kl[full iterator]{full},} \label{rule:remove-full-zero}\\
		(I + (a\applyto S_1\applyto\dots\applyto S_r))^*.S &\to_1 I^*.(S + (\opt a\applyto S_1\applyto\dots\applyto S_r)), \hspace{-2em}\label{rule:extract-ho-hole}\\
		\opt a\applyto S_1\applyto\dots\applyto S_{i-1}\applyto(S_i+S_i')\applyto S_{i+1}\applyto\dots\applyto S_r &\to_1\nonumber\\*
			&\hspace{-9.7em}\opt a\applyto S_1\applyto\dots\applyto S_{i-1}\applyto S_i\applyto S_{i+1}\applyto\dots\applyto S_r
			+\opt a\applyto S_1\applyto\dots\applyto S_{i-1}\applyto S_i'\applyto S_{i+1}\applyto\dots\applyto S_r,
			\hspace{-8em}\label{rule:distr-in-opt}\\
		a\applyto S_{\hole,1}\applyto\dots\applyto S_{\hole,i-1}\applyto(S_{\hole,i}+S_{\hole,i}')\applyto S_{\hole,i+1}\applyto\dots\applyto S_{\hole,r} &\to_1\nonumber\\*
			&\hspace{-16em}a\applyto S_{\hole,1}\applyto\dots\applyto S_{\hole,i-1}\applyto S_{\hole,i}\applyto S_{\hole,i+1}\applyto\dots\applyto S_{\hole,r}
			+a\applyto S_{\hole,1}\applyto\dots\applyto S_{\hole,i-1}\applyto S_{\hole,i}'\applyto S_{\hole,i+1}\applyto\dots\applyto S_{\hole,r},
			\hspace{-8em}\label{rule:distr-in-context}\\
		I^*.(S + S') &\to_1
			I^*.S + I^*.S'
			&&\textrm{ if $I$ is \kl[linear iterator]{linear}.}\label{rule:distr-in-linear}
	\end{align}
	We allow to apply $\to_1$ for subexpressions of an \STRE,
	that is, we write $S\to_1 S'$ also when $S'$ is obtained from $S$ by replacing some its subexpression $R$ with $R'$ such that $R\to_1 R'$.
	
\AP	A \intro{product} is a \kl{pre-product} $P$ that is a normal form with respect to $\to_1$,
	that is, there is no $P'$ such that ${P \to_1 P'}$.
	We know that the rewrite relation $\to_1$ preserves the denotation of \STRE~\cite[Fact~19]{Goubault-Larrecq:Schmitz:ICALP:2016},
	and that every \STRE has a normal form with respect to $\to_1$~\cite[Lemma~20]{Goubault-Larrecq:Schmitz:ICALP:2016}.
	The following corollary is immediate:

	\begin{corollaryOK}
		\label{cor:STRE:product}
		Every \STRE $S$ is \kl[equivalent STREs]{equivalent} to a sum of \kl{products} $P_1 + \dots + P_k$.
	\QED\end{corollaryOK}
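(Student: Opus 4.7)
The plan is to apply the rewrite system $\to_1$ to $S$ until a normal form is reached, and then read off the decomposition from the shape of that normal form.

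First, I would invoke the cited Lemma~20 of Goubault-Larrecq and Schmitz to obtain a normal form $S'$ of $S$ with respect to $\to_1$, i.e., an \STRE from which no rewrite step is possible. Then, by the cited Fact~19, each individual step of $\to_1$ preserves the denotation, and a straightforward induction on the length of the rewriting sequence from $S$ to $S'$ yields $\sem{S} = \sem{S'}$. So it suffices to show that any normal form $S'$ is syntactically a sum of products.

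Second, I would analyze the shape of $S'$. By the grammar of {\STRE}s, any \STRE decomposes as $S' = P_1 + \dots + P_k$ for some pre-products $P_1,\dots,P_k$. Because the rules of $\to_1$ are a standard term rewriting system (they may be applied at any position of a given \STRE), any redex occurring inside some $P_i$ would lift to a redex of $S'$. Since $S'$ is in normal form, this forces each $P_i$ to be itself in normal form with respect to $\to_1$; by the very definition of a product as a normal-form pre-product, each $P_i$ is then a product, giving the desired decomposition.

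The hard part is essentially nonexistent here, which matches the paper's own description of the corollary as \emph{immediate}: all the substantive content — termination of $\to_1$ and its invariance of the denotation — has already been packaged into Lemma~20 and Fact~19 of Goubault-Larrecq and Schmitz. The only thing one has to be slightly careful about is that $\to_1$ is indeed a congruence, so that irreducibility of $S'$ propagates to irreducibility of each of its top-level summands $P_i$; this is verified by inspection of the rules as displayed above.
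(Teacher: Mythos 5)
Your proposal is correct and takes essentially the same route as the paper, which derives the corollary immediately from the two cited facts (Fact~19: $\to_1$ preserves denotations; Lemma~20: normal forms exist); you merely spell out the reading-off of the decomposition from the normal form. Your observation that irreducibility of the whole expression propagates to each top-level summand is consistent with how the paper itself uses $\to_1$ (e.g., in the proof of Lemma~\ref{lemma:make-pure}, where subterm positions are rewritten), so no gap remains.
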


\paragraph{\kl{Pure products}.}

	Since the definition of a \kl{product} is rather indirect,
	we introduce a stronger notion of \reintro{pure products},
	which is defined as a syntactic restriction of \STRE{}s.
	Such a definition is more convenient for our purposes.
	Simultaneously, it still allows us to obtain a decomposition result stated in \cref{lem:pure-products},
	which is an analogue of \cref{cor:STRE:product} for \kl{pure products} instead of \kl{products}.
	
\AP	A \intro{pure product} is defined according to the following abstract syntax:
	\begin{align*}
		P	&::= \opt a\applyto P\applyto\dots\applyto P \alt I^*.P,&
		C	&::= a\applyto P_{\hole}\applyto\dots\applyto P_{\hole},\\
		I	&::= C + \dots + C,&
		P_{\hole}	&::= {\hole}\applyto \alt P,
	\end{align*}
	where the sum of \kl{contexts} is nonempty, and where in a \kl{context} $C=a\applyto P_{\hole,1}\applyto\dots\applyto P_{\hole,r}$ it is required that at least one $P_{\hole,i}$ is a \kl{hole} $\hole$.
	The semantics $\sem P$ of \kl{pure products} is inherited from \STRE.
	Notice, however, that $\sem P$ is always nonempty, so we can use \cref{star} to define $\sem{\opt a\applyto P_1\applyto\dots\applyto P_r}$ and $\sem {a\applyto P_{\hole, 1}\applyto\dots\applyto P_{\hole, r}}$.

	Formally, a \kl{pure product} needs not be a \kl{product}: a \kl{pure product} is allowed to contain an \kl{iterator} $C+C'$ with $\sem{C}\subseteq\sem{C'}$,
	to which \cref{rule:elim-smaller-iter} can be applied.
	Nevertheless, by replacing every such sum $C+C'$ with $C'$ we can obtain an \kl[equivalent STREs]{equivalent} \kl{pure product} that is a \kl{product}
	(it is easy to see that no rule other than \cref{rule:elim-smaller-iter} can be applied to a \kl{pure product}).
	Thus, it is justified to say that, morally, the notion of a \kl{pure product} strengthens the notion of a \kl{product}.
	
	Based on the results of Goubault-Larrecq and Schmitz~\cite{Goubault-Larrecq:Schmitz:ICALP:2016},
	in the remaining part of this subsection we deduce the following lemma:

	\begin{lemmaOK}\label{lem:pure-products}
		Every set of \kl{trees} $\Ll$ \kl{downward-closed} for $\embedsinto$ can be represented as $\Ll=\sem{P_1}\cup\dots\cup\sem{P_k}$, in which $P_1,\dots,P_k$ are \kl{pure products}.
	\end{lemmaOK}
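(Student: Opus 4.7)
The plan is to combine \cref{lem:STRE} and \cref{cor:STRE:product} with a structural analysis of \kl{products}. The former yields an \STRE $S$ with $\Ll = \sem S$, and the latter rewrites $S$ as a sum of \kl{products} $P_1 + \dots + P_k$, so it suffices to show, for each \kl{product} $P$, that $\sem P$ is a finite union of semantics of \kl{pure products}, and then take the union over all $i$.

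I would proceed by structural induction on $P$. When $P = \opt a(S_1, \dots, S_r)$, the normal-form rewrites \cref{rule:zero-in-opt} and \cref{rule:distr-in-opt} force each $S_i$ to be itself a \kl{product}, so by induction $\sem{S_i} = \bigcup_\ell \sem{Q_{i,\ell}}$ for \kl{pure products} $Q_{i,\ell}$; distributing $\opt a$ over unions via \cref{star} then gives $\sem P = \bigcup_{(\ell_1, \dots, \ell_r)} \sem{\opt a(Q_{1,\ell_1}, \dots, Q_{r,\ell_r})}$, a finite union of \kl{pure products}. When $P = I^*.S$, I would first replace $I$ by an equivalent iterator $I'$ whose \kl{contexts} are \kl{pure product} contexts, by applying the induction hypothesis to each non-hole child of a context (which is a \kl{product} by \cref{rule:distr-in-context}) and splitting each context into finitely many contexts with \kl{pure product} children; \cref{rule:extract-ho-hole} guarantees that every context has at least one hole. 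If $I'$ is linear, then from $\sem S = \bigcup_j \sem{R_j}$ (by induction) I obtain $\sem{I'^*.S} = \bigcup_j \sem{I'^*.R_j}$ by distributing the iteration over the union, and each summand is a \kl{pure product}.

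The main obstacle is the case where $I'$ is non-linear and $\sem S$ genuinely decomposes as $\bigcup_{j \leq N} \sem{R_j}$ with $N \geq 2$: a tree generated by $I'^*.S$ may draw leaves from distinct $R_j$'s in a single iteration, so simple distribution fails. To collapse the disjunction into a single \kl{pure product}, I would pick a non-linear \kl{context} $C = a(Q_1, \dots, Q_r) \in I'$ with two distinct hole positions $i_1 \neq i_2$ and build a nested \kl{pure product} by recursion: set $U_1 = R_1$, and $U_{m+1} = \opt a(V_1, \dots, V_r)$ where $V_{i_1} = U_m$, $V_{i_2} = R_{m+1}$, while $V_j$ equals $Q_j$ at every non-hole position of $C$ and equals $U_m$ at any remaining hole position. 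Repeated application of \cref{star} gives $\sem{U_N} \supseteq \bigcup_j \sem{R_j} = \sem S$, and a straightforward induction on $m$ shows $\sem{U_m} \subseteq \sem{I'^*.S}$, since each tree in $\sem{U_m}$ arises from one $C$-application whose non-hole subtrees lie in $\sem{Q_j} \subseteq \sem{I'^*.S}$ (by downward closure of the latter) and whose hole subtrees lie in $\sem{I'^*.S}$ by the inductive hypothesis. Monotonicity and idempotence of $I'^*$ then yield $\sem{I'^*.S} = \sem{I'^*.U_N}$, a single \kl{pure product}. The essential point is that the non-linearity of $C$ supplies two independent hole slots at each nesting step, one to carry forward the growing accumulator $U_m$ and the other to introduce the new summand $R_{m+1}$; were $I'$ linear, such nesting would be impossible, but then distribution already handles the case.
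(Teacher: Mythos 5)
Your overall route is the same as the paper's: reduce via \cref{lem:STRE} and \cref{cor:STRE:product} to converting a single product, argue by induction using the normal-form rules to see that the relevant subexpressions are themselves products, and, for a non-linear iterator over a sum $R_1+\dots+R_N$ with $N\geq 2$, nest the summands into one another through two hole positions of a non-linear context, verifying both inclusions via \cref{star}, monotonicity, and idempotence of the iteration. This nesting construction is exactly the paper's (there the accumulator sits in the first hole and the new summand in the remaining holes, but the argument is the same). The only cosmetic difference is that your induction returns a finite union of pure products rather than a single equivalent pure product, which is harmless for the statement of \cref{lem:pure-products}.

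There is, however, a genuine gap: you never treat the case where the body of the iterator is the empty sum, i.e.\ a product of the form $\iter{I}.0$. Your linear subcase silently absorbs it --- distributing the iteration over an empty union yields $\sem{\iter{I'}.0}=\emptyset$ --- but this is false in general. Since $P$ is a normal form, \cref{rule:remove-full-zero} does not apply, so $I$ is not full; hence some context $C'=a(S'_{\hole,1},\dots,S'_{\hole,r})$ in $I$ has a non-hole child $S'_{\hole,j}$, and the hole-free trees of $\sem{S'_{\hole,j}}$ survive in $\sem{C'}$ by downward closure and need no seed from $\sem{S}$, so $\sem{\iter{I}.0}\supseteq\sem{S'_{\hole,j}}\neq\emptyset$ (for instance $\sem{\iter{(a(\opt{\letb}(),\hole))}.0}$ contains $\letb()$ and $a(\letb(),\letb())$). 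Your non-linear construction cannot start either, since $U_1=R_1$ does not exist when $N=0$. The paper closes this case separately by taking $\iter{I_\circ}.S'_{\hole,j}$ as the equivalent pure product, using non-fullness of the iterator together with the fact that pure products always denote nonempty sets; you need the same (or an equivalent) argument to complete the proof.
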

	
	This decomposition result strengthens the results of Goubault-Larrecq and Schmitz~\cite{Goubault-Larrecq:Schmitz:ICALP:2016}
	by showing that \kl{pure products} (instead of just \kl{products}) suffice in order to represent \kl{downward-closed} sets of \kl{trees}.

\paragraph{From products to pure products.}

	In \cref{lemma:make-pure} we show how to convert an arbitrary \kl{product} into a \kl{pure product}.
	\cref{lem:pure-products} is then a direct consequence of \cref{lem:STRE}, \cref{cor:STRE:product}, and \cref{lemma:make-pure}.
		
	\begin{lemmaOK}\label{lemma:make-pure}
		For every \kl{product} $P$ one can create an \kl[equivalent STREs]{equivalent} \kl{pure product} $P'$.
	\end{lemmaOK}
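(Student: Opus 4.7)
I will proceed by structural induction on the size of $P$, noting at the outset that since $P$ is a $\to_1$-normal form, the impossibility of applying each rewrite rule imposes strong syntactic restrictions on $P$ and its subexpressions --- restrictions that almost coincide with the definition of a pure product. For the purposes of \cref{lem:pure-products} I may assume $\sem{P}\neq\emptyset$, since products with empty semantics contribute nothing to the decomposition in \cref{cor:STRE:product} and can simply be discarded.

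If $P = \opt a(S_1,\dots,S_r)$, then each $S_i$ must be a single pre-product (else \cref{rule:distr-in-opt} would fire) and in fact itself a product (the rules apply at arbitrary positions); from $\sem{P}\neq\emptyset$ one moreover gets $\sem{S_i}\neq\emptyset$ for every $i$. Inductively converting each $S_i$ into an equivalent pure product $P_i$, I set $P' := \opt a(P_1,\dots,P_r)$.

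If $P = I^*.S$, then \cref{rule:extract-ho-hole} forces every context of $I$ to contain at least one hole, \cref{rule:distr-in-context} forces every non-hole argument in those contexts to be a single pre-product, and \cref{rule:elim-empty-iter} forces $I$ to be non-empty; the non-hole arguments inside $I$ are converted inductively. When $I$ is linear, \cref{rule:distr-in-linear} forces $S$ to be a single pre-product as well, so I simply recurse on $S$. The genuine obstacle is the remaining case where $I$ is non-linear and $S = S_1+\dots+S_m$ is a proper sum ($m\geq 2$): no rule applies, so $P$ is a product but is not yet a pure product, and I must fold the alternatives into a single pre-product without changing the semantics.

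To overcome this obstacle I exploit the multi-hole structure of $I$. Fix a context $C_0\in I$ with $\ell\geq 2$ holes and pick $n$ with $\ell^n\geq m$; let $P_1,\dots,P_m$ be the inductively obtained pure products equivalent to $S_1,\dots,S_m$. Let $P^*$ be the expression that syntactically mimics the $n$-fold composition $C_0^n$ but uses $\opt a$ at every internal node instead of $a$, with the first $m$ of its $\ell^n$ leaves filled by $P_1,\dots,P_m$ and the remaining leaves filled by $P_1$; this is a single pure product of the $\opt a$-form. I claim $\sem{I^*.P^*} = \sem{P}$. For $\supseteq$: each $P_i$ occurs as a subexpression of $P^*$ reachable through the ``$\opt$-absorbs-subtree'' summand of \cref{star}, so $\sem{S_i} = \sem{P_i}\subseteq\sem{P^*}$, whence $\sem{S}\subseteq\sem{P^*}$ and $\sem{I^*.S}\subseteq\sem{I^*.P^*}$. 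For $\subseteq$: every tree in $\sem{P^*}$ embeds into a maximal tree of the form $C_0^n[T_1,\dots,T_{\ell^n}]$ with $T_j\in\sem{P_{\sigma(j)}}\subseteq\sem{S}$, which arises from an $n$-fold composition of contexts drawn from $I$ filled with trees from $\sem{S}$, hence lies in $\sem{I^*.S}$; closure of $\sem{I^*.S}$ under $\embedsinto$ then handles the non-maximal trees. Taking $P':=I^*.P^*$ concludes this case and closes the induction.
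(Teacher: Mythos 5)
Your overall strategy coincides with the paper's: induct on the size of $P$, use $\to_1$-irreducibility to pin down the shape of every subexpression, and, in the problematic case of a non-linear iterator over a proper sum, exploit a context with $\ell\geq 2$ holes to fold the summands into a single nested $\opt{}$-expression. Your balanced $\ell$-ary nesting of one context $C_0$ (choosing $n$ with $\ell^n\geq m$ leaves) is a correct variant of the paper's construction, which instead builds a linear chain $R_1,\dots,R_k$ in which each $R_i$ plugs $R_{i-1}$ into one hole of the chosen context and the $i$-th summand into the remaining holes; both of the inclusions you verify go through.

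There is, however, one genuine gap: you never handle $S=0$, that is, products of the form $I^*.0$. Your claim that, for linear $I$, \cref{rule:distr-in-linear} forces $S$ to be a single pre-product is false when $S$ is the empty sum (the pattern $S+S'$ does not match $0$), and \cref{rule:remove-full-zero} eliminates $I^*.0$ only when $I$ is \emph{full}. Thus, for example, $(\leta(\opt\letb(),\hole))^*.0$ is an irreducible product; its semantics is nonempty (it contains $\letb()$, since $\sem{I}$ is downward closed and therefore contains hole-free trees that survive one iteration), so your standing assumption $\sem P\neq\emptyset$ does not discharge this case, and your construction of $P^*$ has no summands $P_i$ with which to fill the leaves. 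The missing idea is the paper's $k=0$ case: since \cref{rule:remove-full-zero} does not apply, $I$ is not full, so some context $C'$ in $I$ has a non-hole argument $S'_{\hole,j}$, and $I^*.0$ is equivalent to the pure product $I^*.S'_{\hole,j}$ --- one inclusion is trivial, and the other uses the fact that pure products denote nonempty sets, so that every tree of $\sem{S'_{\hole,j}}$ already occurs as a hole-free element of $\sem{C'}$ and hence arises after a single application of the iterator. With this case added (the subcase of a non-linear $I$ over a single pre-product being immediate by recursion), your proof is complete.
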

	
	\begin{proof}
		The proof is by induction on the size of $P$.
		Before starting the actual proof, let us observe two facts, which we use implicitly below.
		First, every subterm of $P$ that is a \kl{pre-product} is actually a \kl{product} (i.e., it cannot be rewritten by $\to_1$).
		Second, if we replace a \kl{product} subexpression of $P$ by some \kl[equivalent STREs]{equivalent} \kl{product}, then the resulting \STRE is still a \kl{product}
		(i.e., it cannot be rewritten by $\to_1$).
		
		Coming now to the proof, suppose that $P$ is of the form $\opt a\applyto S_1\applyto\dots\applyto S_r$.
		Then, because $P$ is a \kl{product}, that is, because it cannot be rewritten by $\to_1$, we can observe that all the $S_i$'s are \kl{products}.
		Indeed, if some $S_i$ was a sum of two or more \kl{products} (or $0$),
		then $P$ could be rewritten using \cref{rule:distr-in-opt} (or \cref{rule:zero-in-opt}, respectively).
		By the induction assumption for every \kl{product} $S_i$ we can create an \kl[equivalent STREs]{equivalent} \kl{pure product} $S_i'$;
		as $P'$ we take $\opt a\applyto S_1'\applyto\dots\applyto S_r'$.
		
		Next, suppose that $P$ is of the form $I^*.S$.
		Consider a \kl{context} $C=a\applyto S_{\hole,1}\applyto\dots\applyto S_{\hole,r}$, being a component of $I$.
		We can observe that all $S_{\hole,i}$ are either $\hole$ or \kl{products}.
		Indeed, if some $S_{\hole,i}$ was a sum of two or more \kl{products} (or $0$), 
		then $C$ could be rewritten using \cref{rule:distr-in-context} (or \cref{rule:zero-in-context}, respectively).
		As previously, using the induction assumption we can replace every \kl{product} $S_{\hole,i}$
		that is not a \kl{hole} $\hole$ by an \kl[equivalent STREs]{equivalent} \kl{pure product} $S_{\hole,i}'$.
		Applying this to every \kl{context} $C$ in $I$, we obtain a new \kl{iterator} $I_\circ$ in which every \STRE subterm is a single \kl{product}.
		Likewise, writing $S=P_1+\dots+P_k$, we can replace every \kl{product} $P_i$ by an \kl[equivalent STREs]{equivalent} \kl{pure product} $P_i'$.
		This way, we obtain a \kl{product} $P^\circ=I_\circ^*.(P_1'+\dots+P_k')$ \kl[equivalent STREs]{equivalent} to $P$.
		Observe also that there is at least one \kl{context} in $I_\circ$, and that every \kl{context} in $I_\circ$ contains a \kl{hole} 
		(because \cref{rule:elim-empty-iter,rule:extract-ho-hole} cannot be applied to $P^\circ$), as required in our definition of a \kl{pure product}.
		Thus, when $k=1$, $P^\circ$ is a \kl{pure product}, hence it can be taken as $P'$.
		It remains to deal with the situation when $k\neq 1$.
		
		One possibility is that $k=0$.
		Then $I_\circ$ is not \kl[full iterator]{full} (otherwise \cref{rule:remove-full-zero} could be applied to $P^\circ$), 
		which means that in $I_\circ$ there is a \kl{context} $C'=a\applyto S_{\hole,1}'\applyto\dots\applyto S_{\hole,r}'$ 
		such that $S_{\hole,j}'\neq\hole$ for some $j\in\set{1,\dots,r}$.
		Fix one such $C'$ and $j$, and define $P':=I_\circ^*.S_{\hole,j}'$.
		Then $P'$ is a \kl{pure product}.
		Clearly $\sem{P^\circ}\subseteq \sem{P'}$, because $\sem{0}\subseteq\sem{S_{\hole,j}'}$.
		On the other hand, $\sem{S_{\hole,i}'}\neq\emptyset$ for all $i\in\set{1,\dots,r}$,
		because $S_{\hole,i}'$ is either a \kl{hole} or a \kl{pure product},
		and it can be easily seen (by induction on its structure) that a \kl{pure product} always denotes a nonempty set;
		in consequence $\sem{S_{\hole,j}'}\subseteq\replace{\sem{C'}}\emptyset\subseteq\replace{\sem{I_\circ}}\emptyset$, so also $\sem{P'}\subseteq\sem{P^\circ}$.
		
		Another possibility is that $k\geq 2$.
		Then $I_\circ$ is not \kl[linear iterator]{linear} (if $I_\circ$ were \kl[linear iterator]{linear}, then \cref{rule:distr-in-linear} could be applied to $P^\circ$),
		which means that in $I_\circ$ there is a \kl{context} $C'=a\applyto S_{\hole,1}'\applyto\dots\applyto S_{\hole,r}'$ with two or more \kl{holes}.
		Fix one such $C'$.
		For simplicity, we show the proof assuming that the first $\ell$ among $S_{\hole,i}'$ are \kl{holes}, 
		and the remaining $r-\ell$ among $S_{\hole,i}'$ are \kl{products} (i.e., are not \kl{holes});
		the general situation can be handled in the same way, but writing it down would require us to use intricate indices.
		We define 
		\begin{align*}
			R_1&=\opt a\applyto P_1'\applyto P_1'\applyto\dots\applyto P_1'\applyto S_{\hole,\ell+1}'\applyto\dots\applyto S_{\hole,r}'&&\text{and}\\
			R_i&=\opt a\applyto R_{i-1}\applyto P_i'\applyto\dots\applyto P_i'\applyto S_{\hole,\ell+1}'\applyto\dots\applyto S_{\hole,r}'&&\text{for }i\in\set{2,\dots,k},
		\end{align*}
		and we define $P':=I_\circ^*.R_k$.
		Notice that $P'$ is a \kl{pure product}.
		On the one hand, $\sem{P_i'}\subseteq\sem{R_k}$ for every $i\in\set{1,\dots,k}$ 
		(it is important here that there are at least two \kl{holes}, so $P_i'$ actually appears in $R_i$),
		so $\sem{P^\circ}\subseteq \sem{P'}$.
		Let us see the opposite inclusion.
		First, by definition, $\sem{P_i'}\subseteq\sem{I_\circ^*.(P_1'+\dots+P_k')}=\llbracket{P^\circ}\rrbracket$ for all $i\in\set{1,\dots,k}$. 
		Second, because $R_i$ for $i\in\set{2,\dots,k}$ is obtained by substituting $P_i'$ and $R_{i-1}$ for all \kl{holes} in $C'$,
		we have $\sem{R_i}\subseteq\replace{\sem{C'}}{\sem{P_i'}\cup\sem{R_{i-1}}}$; likewise $\sem{R_1}\subseteq\replace{\sem{C'}}{\sem{P_1'}}$.
		Then, by induction on $i\in\set{1,\dots,k}$ we have that $\sem{R_i}\subseteq\sem{P^\circ}$:
		indeed, due to the above observation and the induction hypothesis (if $i\geq 2$) we have that
		$\sem{R_i}\subseteq\replace{\sem{C'}}{\sem{P^\circ}}\subseteq\replace{\sem{I_\circ}}{\sem{P^\circ}}\subseteq\sem{P^\circ}$.
		In particular, $\sem{R_k}\subseteq\sem{P^\circ}$, so also $\sem{P'}\subseteq\sem{P^\circ}$.
	\end{proof}

\subsection{Transductions}\label{dec:dc:transductions}

	A (nondeterministic) \intro{finite tree transducer} (\reintro{FTT}) is a tuple $\Aa = (\Alphabet_\mathit{in},\allowbreak \Alphabet_\mathit{out},\allowbreak S,\allowbreak p^I,\allowbreak\Delta)$,
	where $\Alphabet_\mathit{in}$, $\Alphabet_\mathit{out}$ are the input and output \kl[ranked alphabet]{alphabets} (finite, ranked), 
	$S$ is a finite set of \intro(ftt){control states}, $p^I\in S$ is an \intro(ftt){initial state}, 
	and $\Delta$ is a finite set of \intro(ftt){transition rules} of the form either $(p, a\applyto\varx_1\applyto\dots\applyto\varx_r) \rightarrow V$
	or $(p, \varx) \rightarrow V$,
	where $p\in S$ is a \kl(ftt){control state}, $a\in\Alphabet_\mathit{in}$ is a \kl{letter} of \kl{rank} $r$, 
	and $V$ is a finite \kl{tree} over the alphabet $\Alphabet_\mathit{out}\cup(S\times\set{\varx_1,\dots,\varx_r})$
	or $\Alphabet_\mathit{out}\cup(S\times\set{\varx})$, respectively.
	Here $\varx,\varx_1,\dots,\varx_r$ are just some special symbols, and the \kl{rank} of all the pairs from $S\times\set{\varx_1,\dots,\varx_r}$ or $S\times\set{\varx}$ is $0$.
\AP	An \kl{FTT} is \intro(ftt){linear} if for each \kl(ftt){rule} of the form $(p, a\applyto\varx_1\applyto\dots\applyto\varx_r) \rightarrow V$
	and for each $i\in\set{1,\dots,r}$, in $V$ there is at most one \kl{letter} from $S\times\set{\varx_i}$,
	and moreover for each \kl(ftt){rule} of the form $(p, \varx) \rightarrow V$, in $V$ there is at most one \kl{letter} from $S\times\set{\varx}$.

\AP	An \kl{FTT} $\Aa = (\Alphabet_\mathit{in},\allowbreak \Alphabet_\mathit{out},\allowbreak S,\allowbreak p^I,\allowbreak\Delta)$
	reading a \kl{tree} $T$ over the alphabet $\Alphabet_\mathit{in}$ starts in the \kl(ftt){state} $p^I$ at the \kl{root} of $T$.
	Then, when $\Aa$ is in a \kl(ftt){state} $p$ at the \kl{root} of a \kl{subtree} $T'=a\applyto T_1\applyto\dots\applyto T_r$ of $T$,
	it can use a \kl(ftt){rule} of the form $(p,a\applyto\varx_1,\dots\applyto\varx_r)\rightarrow V$ from $\Delta$;
	it produces a \kl{tree} starting like $V$, but \kl{leaves} of the form $(q,\varx_i)$ are replaced by the output of running $\Aa$ in the \kl(ftt){state} $q$ at the \kl{root} of $T_i$.
	Alternatively, $\Aa$ can use a \kl(ftt){rule} of the form $(p,\varx)\rightarrow V$;
	then it produces a \kl{tree} starting like $V$, but \kl{leaves} of the form $(q,\varx)$ are replaced by the output of running $\Aa$ in the \kl(ftt){state} $q$ at the same \kl{node}
	(i.e., this is an $\varepsilon$-transition producing some output).
	In this way, an \kl{FTT} $\Aa$ defines a relation between finite \kl{trees}, also denoted $\Aa$;
	for a fully formal definition see Comon et al.~\cite[Section 6.4.2]{tata2007}.
	For a language $\Ll$ we write $\Aa(\Ll)$ for the set of \kl{trees} $U$ such that $(T,U)\in\Aa$ for some $T\in \Ll$.
\AP	A function that maps $\Ll$ to $\Aa(\Ll)$ for some \kl(ftt)[linear]{linear FTT} $\Aa$ is called a \intro{linear FTT transduction}.
	
	We now recall two easy facts about \kl{linear FTT transductions}.
	The first fact says that taking \kl{downward closures} is an \kl{FTT transduction}:
	
	\begin{factOK}\label{fact:dwclosure-is-FTT}
		Given a finite \kl{ranked alphabet} $\Alphabet$ one create a \kl{linear FTT} $\Aa$ such that for every language $\Ll$ of finite \kl{trees} over $\Alphabet$,
		the language $\Aa(\Ll)$ equals $\downwardclosure \Ll$.
	\end{factOK}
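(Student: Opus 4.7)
The plan is to design the \kl{FTT} $\Aa$ with a single \kl(ftt){control state} $p$ (which is also the \kl(ftt){initial state}), and rules that directly mirror the two clauses of the definition of $\embedsinto$. Concretely, for every letter $a \in \Alphabet$ of rank $r$, include:
\begin{itemize}
\item A \emph{keep} rule $(p, a(\varx_1, \dots, \varx_r)) \to a((p, \varx_1), \dots, (p, \varx_r))$, which preserves the current node and recurses into all children; this corresponds to clause~(2) of the definition of $\embedsinto$.
\item For each $i \in \{1, \dots, r\}$, a \emph{skip} rule $(p, a(\varx_1, \dots, \varx_r)) \to (p, \varx_i)$, which discards the root and continues with the $i$-th child; this corresponds to clause~(1).
\end{itemize}
Note that linearity holds trivially: in the keep rule each $\varx_i$ occurs exactly once, and in every skip rule only a single variable occurs, exactly once.

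For correctness it is enough to show $\Aa(\{T\}) = \downwardclosure\{T\}$ for every individual finite tree $T$ over $\Alphabet$, since then $\Aa(\Ll) = \bigcup_{T \in \Ll}\Aa(\{T\}) = \bigcup_{T \in \Ll}\downwardclosure\{T\} = \downwardclosure\Ll$. I would prove both inclusions by induction on the size of $T$, treating $T = a(T_1, \dots, T_r)$.

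For $\Aa(\{T\}) \subseteq \downwardclosure\{T\}$, any output $U$ of $\Aa$ on $T$ is produced by first applying either a keep or a skip rule at the root. In the skip case, $U$ is an output of $\Aa$ on some $T_i$, so by induction $U \embedsinto T_i \embedsinto T$ by clause~(1). In the keep case, $U = a(U_1, \dots, U_r)$ with each $U_i \in \Aa(\{T_i\})$, so inductively $U_i \embedsinto T_i$, whence $U \embedsinto T$ by clause~(2). Conversely, for $\downwardclosure\{T\} \subseteq \Aa(\{T\})$, let $S \embedsinto T$; if the embedding is witnessed by clause~(1) with $S \embedsinto T_i$, then inductively $S \in \Aa(\{T_i\})$, and prepending the skip rule for index $i$ gives $S \in \Aa(\{T\})$; if it is witnessed by clause~(2), then inductively $S_i \in \Aa(\{T_i\})$ for each $i$, and the keep rule yields $S = a(S_1, \dots, S_r) \in \Aa(\{T\})$.

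There is no real obstacle here: the recursive structure of the \kl{embedding well-quasi order} is directly reflected in the rules of $\Aa$, and linearity is immediate from the fact that no rule duplicates a child variable.
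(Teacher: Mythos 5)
Your construction is exactly the one in the paper: a single-state linear FTT with a \emph{keep} rule and per-child \emph{skip} rules for each letter, mirroring the two clauses of the embedding relation. The paper states correctness in one sentence where you spell out the induction, but the approach is essentially identical.
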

	
	\begin{proof}
		It is enough to take $\Aa=(\Alphabet,\Alphabet,\set{\stap},\stap,\Delta)$ with a single \kl(ftt){state} $\stap$,
		where $\Delta$ consists of the following \kl(ftt){rules} for every \kl{letter} $a\in\Alphabet$ of \kl{rank} $r$:
		\begin{align*}
			(\stap, a\applyto\varx_1\applyto\dots\applyto\varx_r) &\rightarrow a\applyto(\stap,\varx_1)\applyto\dots\applyto(\stap,\varx_r),&&\mbox{and}\\
			(\stap, a\applyto\varx_1\applyto\dots\applyto\varx_r) &\rightarrow (\stap,\varx_i),&&\mbox{for all }i\in\set{1,\dots,r}.
		\end{align*}
		Such a \kl{transducer} can convert every \kl{tree} $T\in\Ll$ into every \kl{tree} $S$ that \kl{homeomorphically embeds} into $T$.
	\end{proof}

	The second fact says that \kl{linear FTT transductions} can implement intersections with regular languages:
	
	\begin{factOK}\label{fact:reg-intersection-is-FTT}
		Given (a finite tree automaton recognizing) a regular language $\Rr$ of finite \kl{trees} over a finite \kl{ranked alphabet} $\Alphabet$,
		one create a \kl{linear FTT} $\Aa$ such that for every language $\Ll$ of finite \kl{trees} over $\Alphabet$, the language $\Aa(\Ll)$ equals $\Ll\cap \Rr$.
	\end{factOK}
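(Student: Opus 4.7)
The plan is to simulate a nondeterministic bottom-up finite tree automaton recognising $\Rr$ by a linear \kl{FTT} that copies its input verbatim whenever the guessed run is accepting. First I would fix such an automaton with state set $Q$, accepting states $F \subseteq Q$, and transitions of the form $(q_1, \dots, q_r, a) \to q$ for each letter $a \in \Alphabet$ of rank $r$.

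Next I would define the \kl{FTT} $\Aa = (\Alphabet, \Alphabet, Q \cup \set{p^I}, p^I, \Delta)$, where $p^I$ is a fresh initial state. For every transition $(q_1, \dots, q_r, a) \to q$ of the automaton, the set $\Delta$ contains the \kl(ftt){rule}
\begin{align*}
(q, a(\varx_1, \dots, \varx_r)) \to a((q_1, \varx_1), \dots, (q_r, \varx_r)),
\end{align*}
and additionally, when $q \in F$, also the \kl(ftt){rule}
\begin{align*}
(p^I, a(\varx_1, \dots, \varx_r)) \to a((q_1, \varx_1), \dots, (q_r, \varx_r)).
\end{align*}
Since each $\varx_i$ occurs exactly once on each right-hand side, $\Aa$ is \kl(ftt){linear}.

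Correctness is then a routine induction on the height of the input tree $T$: for every $q \in Q$, a derivation of $\Aa$ starting from $(q, T)$ outputs a tree $U$ if and only if $U = T$ and the fixed tree automaton admits a run on $T$ ending in state $q$. Combining this with the rules for $p^I$ yields $(T, U) \in \Aa$ iff $U = T$ and $T \in \Rr$, whence $\Aa(\Ll) = \Ll \cap \Rr$ for every language $\Ll$ over $\Alphabet$. No serious obstacle is anticipated; the only mildly tricky point is administrative, namely that the definition of \kl{FTT} requires a single initial state, so one funnels the (possibly many) accepting states of the automaton through the fresh $p^I$, whose rules nondeterministically guess which state of $F$ the run should end in.
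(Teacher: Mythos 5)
Your proposal is correct and is precisely the construction the paper has in mind: the paper's proof simply asserts that turning the tree automaton into a copying FTT is ``just a matter of changing the syntax,'' and your rules spell out that syntactic change explicitly (including the administrative funnelling of the accepting states through a fresh initial state). No gap; you have merely written out the details the paper leaves implicit.
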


	\begin{proof}
		We are given an automaton that accepts a \kl{tree} $T$ if $T\in\Rr$ (and rejects it otherwise),
		and we want to construct a \kl{linear FTT} $\Aa$ that converts a \kl{tree} $T$ into itself if $T\in\Rr$ (and does not allow to produce any output \kl{tree} otherwise).
		Creating $\Aa$ out of the automaton is just a matter of changing the syntax:
		we take to $\Aa$ all transition rules of the automaton, enhancing them so that the input \kl{tree} is produced again on the output.
	\end{proof}

\subsection{The \kl{simultaneous unboundedness problem} for \kl{trees}}
\label{sec:dc:sup}

\AP	We say that a \kl{pure product} $P$ is \intro{diversified}, if no \kl{letter} appears in $P$ more than once.
	The \intro{simultaneous unboundedness problem} (\intro{\SUP}) for a class $\Class$ of finite \kl{trees} asks,
	given a \kl{diversified pure product} $P$ and a language $\Ll\in\Class$ such that $\Ll \subseteq \sem P$,
	whether $\sem P \subseteq \downwardclosure \Ll$.

	\begin{remarkOK}
		This is a generalization of \SUP over finite words.
		In the latter problem, one is given a language of finite words $\Ll$ such that $\Ll \subseteq a_1^* \dots a_k^*$,
		and must check whether $a_1^* \dots a_k^* \subseteq \downwardclosure \Ll$.
		A word in $a_1^* \dots a_k^*$
		can be represented as a linear \kl{tree} by interpreting $a_1, \dots, a_k$ as unary \kl{letters}
		and by appending a new \kl{leaf} $e$ at the end.
		Thus $a_1^* \dots a_k^*$ can be represented as the language of the \kl{diversified pure product}
		$\iter{(a_1\applyto\hole)}.\iter{(a_2\applyto\hole)} . \cdots . \iter{(a_k\applyto\hole)} .\opt e$.
	\end{remarkOK}
	
	Every \kl{pure product} $P$ can be made \kl{diversified} by adding additional marks to \kl{letters} appearing in $P$.
	Namely, for each \kl{letter} $a$ appearing $k$ times in $P$, we consider ``marked'' \kl{letters} $a_1,\dots,a_k$, and for each occurrence of $a$ in $P$ we substitute a different \kl{letter} $a_i$.
	To specify a correspondence between the original \kl{pure product} $P$ and the resulting \kl{diversified pure product} $P'$ we define a $\cl(\cdot)$ operation:
	when $X'$ is an object (e.g., a \kl{pure product}, a \kl{context}, a \kl{tree}, etc.) over such an extended alphabet,
	we write $\cl(X')$ for the object obtained from $X'$ by removing marks from its labels (i.e., replacing back all $a_1,\dots,a_k$ by $a$).
	We also define $\cl(\Ll')=\setof{\cl(T')}{T'\in\Ll'}$ for a set of \kl{trees} $\Ll'$.
	In particular, when $P'$ is obtained by adding marks to all \kl{letters} in $P$, we have $\cl(P')=P$.
	We have the following claim:

	\begin{claimOK}\label{claim:diversified}
		$\sem{\cl(X')}=\cl(\sem{X'})$ whenever $X$ is an \STRE, a \kl{pure product}, a \kl{context}, or an \kl{iterator} over the extended alphabet.
	\end{claimOK}

	\begin{proof}
		The claim follows by a straightforward induction on the size of $X'$, because the $\cl(\cdot)$ operation commutes with all constructs appearing in the definition of $\sem{\cdot}$,
		namely $\cl(\Ll'_1\cup\Ll'_2)=\cl(\Ll'_1)\cup\cl(\Ll'_2)$,
		$\cl(\downwardclosure{\Ll'})=\downwardclosure{(\cl(\Ll'))}$,
		and $\cl(\replace{\Cc'}{\Ll'})=\replace{\cl(\Cc')}{\cl(\Ll')}$. 
	\end{proof}

	Following Zetzsche \cite{Zetzsche:ICALP:2015},
	we can reduce computation of the \kl{downward closure} to \SUP:

	\begin{lemmaOK}
		\label{thm:SUP}
	 	Let $\Class$ be a class of languages of finite \kl{trees} closed under \kl{linear FTT transductions}.
	 	One can compute a finite tree automaton recognizing the \kl{downward closure} of a given language from $\Class$ if and only if
		\SUP is decidable for $\Class$.
	\end{lemmaOK}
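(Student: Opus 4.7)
The plan follows Zetzsche's strategy for words~\cite{Zetzsche:ICALP:2015}, adapted to trees using the machinery built so far in this section.

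The direction $(\Rightarrow)$ is straightforward. Given a \SUP instance $(P, \Ll)$ with $\Ll \subseteq \sem P$, one computes a tree automaton for $\downwardclosure \Ll$; both $\sem P$ and $\downwardclosure \Ll$ are regular tree languages (downward-closed sets of finite trees are regular by Higman's lemma, cf.~\cref{lem:STRE}), so the inclusion $\sem P \subseteq \downwardclosure \Ll$ is a decidable regular test.

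For the direction $(\Leftarrow)$, I would enumerate in parallel all pairs $(\Aa, P_1 + \cdots + P_k)$ consisting of a finite tree automaton $\Aa$ and a finite list of pure products, and attempt to verify $\Ll(\Aa) = \downwardclosure \Ll$ by three checks: (i) the equality $\Ll(\Aa) = \sem{P_1} \cup \cdots \cup \sem{P_k}$ of regular languages, which also certifies that $\Ll(\Aa)$ is downward-closed; (ii) the inclusion $\Ll \subseteq \Ll(\Aa)$, obtained via \cref{fact:reg-intersection-is-FTT} by computing $\Ll \cap \overline{\Ll(\Aa)} \in \Class$ and testing it for emptiness; (iii) the inclusion $\sem{P_i} \subseteq \downwardclosure \Ll$ for every $i$, obtained via \SUP after a diversification step described below. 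By~\cref{lem:pure-products} some successful pair exists, so the search terminates with a tree automaton for $\downwardclosure \Ll$. Emptiness of a language in $\Class$, needed in step (ii), is itself reducible to \SUP by applying a linear FTT that collapses every input tree to a fixed nullary leaf $c$, yielding a language $\Ll_0 \subseteq \sem{\opt c()}$, and then asking whether $\sem{\opt c()} \subseteq \downwardclosure \Ll_0$.

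The main obstacle is step (iii), since the pure products provided by~\cref{lem:pure-products} are in general not diversified, and thus \SUP cannot be applied to them directly. I would handle this by diversifying each $P_i$ into a pure product $P_i'$ over an expanded alphabet, renaming every letter occurrence in $P_i$ with a fresh symbol, and constructing a linear FTT $\tau_i$ that nondeterministically selects an embedded subtree of the input and relabels every selected node with some diversified copy of its original label (linear FTTs can both skip and relabel nodes and delete unused subtrees, while remaining linear). Intersecting $\tau_i(\Ll)$ with the regular language $\sem{P_i'}$ via~\cref{fact:reg-intersection-is-FTT} yields a language $\Ll_i \in \Class$ with $\Ll_i \subseteq \sem{P_i'}$, so that \SUP applies to $(P_i', \Ll_i)$. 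The delicate point is proving the equivalence $\sem{P_i} \subseteq \downwardclosure \Ll \iff \sem{P_i'} \subseteq \downwardclosure \Ll_i$: the easy direction undiversifies any witness from $\Ll_i$ back into $\Ll$; for the harder direction, given $T' \in \sem{P_i'}$ with undiversification $T \in \sem{P_i}$ and an embedding $T \embedsinto S$ with $S \in \Ll$ obtained from the hypothesis, one exploits the freedom of $\tau_i$ to force $T' \in \tau_i(S)$ (by selecting the image of the embedding inside $S$ and relabeling it exactly according to $T'$), yielding a witness $T' \in \Ll_i$ that embeds $T'$ trivially.
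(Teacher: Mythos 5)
Your proof is correct and follows essentially the same route as the paper's: both rely on \cref{lem:pure-products} to guess a pure-product decomposition of $\downwardclosure \Ll$, reduce emptiness testing in $\Class$ to \SUP via a trivial linear FTT, and decide each inclusion $\sem{P_i} \subseteq \downwardclosure \Ll$ by diversifying $P_i$ with marked letters and a relabelling linear FTT before invoking \SUP. The only cosmetic differences are that you fold the paper's preliminary replacement of $\Ll$ by $\downwardclosure\Ll$ (via \cref{fact:dwclosure-is-FTT}) into the single transducer $\tau_i$, and you certify downward-closedness of the guessed automaton through the regular equality test rather than working with the downward-closed language throughout.
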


	\begin{proof}
		If \kl{downward closures} are computable,
		then one can compute a finite tree automaton recognizing $\downwardclosure \Ll$.
		Moreover, given a (\kl{diversified}) \kl{pure product} $P$, one can easily construct a finite tree automation recognizing $\sem P$,
		following the inductive definition of $\sem{\cdot}$.
		Having these two automata, one can check whether $\sem P\subseteq \downwardclosure \Ll$:
		language inclusion for finite tree automata is decidable~\cite[Section 1.7]{tata2007}.

		For the other direction, assume that \SUP is decidable for $\Class$ and let $\Ll \in \Class$.
		The \kl{downward closure} $\Ll^d:=\downwardclosure \Ll$ is effectively in $\Class$ since it can be obtained as a \kl{linear FTT transduction} of $\Ll$ by \cref{fact:dwclosure-is-FTT}.
		Thus, it is enough to compute a finite tree automaton recognizing the \kl{downward-closed} language $\Ll^d$.
		Furthermore, by \cref{lem:pure-products}
		$\Ll^d$ equals $\sem{P_1}\cup\dots\cup{}\sem{P_k}$ for some (unknown) \kl{pure products} $P_1,\dots,P_k$,
		and a finite tree automaton recognizing $\sem{P_1}\cup\dots\cup\sem{P_k}$ can be easily computed out of $P_1,\dots,P_k$.
		In consequence, it suffices to guess these \kl{pure products} and check whether the equality $\Ll^d = \sem{P_1}\cup\dots\cup\sem{P_k}$ indeed holds.

		We start by showing how to decide whether $\Ll^d \subseteq \sem{P_1}\cup\dots\cup\sem{P_k}$.
		Firstly, $\Rr := \sem{P_1}\cup\dots\cup\sem{P_k}$ is (effectively) a regular language, and consequently its complement $\Rr^c$ is also regular.
		In consequence, $\Mm := \Ll^d \cap \Rr^c$ is effectively in $\Class$, because it can be obtained from $\Ll^d$
		by intersecting it with $\Rr^c$, which is a \kl{linear FTT transduction} by \cref{fact:reg-intersection-is-FTT}.
		Secondly, emptiness of any language $\Mm \in \Class$ is decidable by reducing to \SUP,
		since it suffices to apply to it the \kl{linear FTT} $\Aa$ that ignores the input and outputs all \kl{trees} of the form $\leta\applyto(\leta\applyto(\dots\applyto(\leta\applyto\lete)\dots))$
		(for some fixed \kl{letters} $\leta$ of \kl{rank} $1$ and $\lete$ of \kl{rank} $0$), and to compare the result with the \kl{diversified pure product} $P := (\leta\applyto\hole)^*.\opt \lete$.
		Indeed, $\Aa(\Mm)=\downwardclosure{\Aa(\Mm)}=\sem P$ if $\Mm$ is nonempty, and $\Aa(\Mm)=\downwardclosure{\Aa(\Mm)}=\emptyset$ if $\Mm$ is empty;
		thus, on the one hand, $\Aa(\Mm) \subseteq \sem P$ and, on the other hand,
		$\Mm$ is nonempty if and only if $\sem P \subseteq \downwardclosure{\Aa(\Mm)}$.

		For the other inclusion $\sem{P_1}\cup\dots\cup\sem{P_k}\subseteq \Ll^d$ we can equivalently check whether $\sem{P_i}\subseteq \Ll^d$ for all $i\in\set{1,\dots,k}$,
		which implies that it suffices to show decidability of checking the containment $\sem P \subseteq \Ll^d$ for a single \kl{pure product} $P$.
		We make $P$ \kl{diversified} by adding additional marks to \kl{letters} appearing in $P$.
		As described before \cref{claim:diversified},
		we achieve this by unambiguously replacing the $i$-th occurrence of \kl{letter} $a$ with the new \kl{letter} $a_i$.
		Let $P'$ be the resulting \kl{diversified pure product}.
		We also create a corresponding \kl{linear FTT} $\Aa$; 
		it replaces every label $a$ in the input \kl{tree} by an arbitrary \kl{letter} among the corresponding \kl{letters} $a_i$ (for every occurrence of $a$ we choose a mark~$i$ independently).
		We obtain $\sem{P}=\cl(\sem{P'})=\setof{\cl(T')}{T'\in\sem{P'}}$ by \cref{claim:diversified}, and $\Aa(\Ll^d)=\setof{T'}{\cl(T')\in\Ll^d}$ by definition,
		which gives us the following equivalence:
		\begin{align*}
			\sem{P} \subseteq \Ll^d \iff \sem {P'} \subseteq \Aa(\Ll^d).
		\end{align*}

		Thus, instead of checking whether $\sem P \subseteq \Ll^d$, we can check whether $\sem {P'} \subseteq \Aa(\Ll^d)$.
		Finally, we consider a language $\Ll':=\Aa(\Ll^d)\cap\sem {P'}$, which can be obtained from $\Aa(\Ll^d)$ by a \kl{linear FTT transduction} (cf.~\cref{fact:reg-intersection-is-FTT}),
		and thus which is effectively in $\Class$.
		Then, on the one hand, $\Ll'\subseteq \sem {P'}$ and, on the other hand, $\sem {P'}\subseteq\Aa(\Ll^d)$ if and only if $\sem {P'} \subseteq \Ll'$.
		Recall that $\Ll^d$ and $\sem {P'}$ are \kl{downward closed}.
		It does not matter whether we first remove some parts of a \kl{tree} and then we add marks to labels, or we first add marks to labels and then we remove some part of a \kl{tree},
		so $\Aa(\Ll^d)$ and $\Ll'$ are \kl{downward closed} as well (and hence $\downwardclosure{\Ll'}=\Ll'$).
		It follows that checking whether $\sem {P'} \subseteq \Ll'$ is an instance of SUP.
	\end{proof}
	
	\begin{remarkOK}\label{remark:why-sums}
		\kl{Pure products} for \kl{trees} correspond to expressions of the form $a_0^?A_1^*a_1^?\dots A_k^*a_k^?$ for words
		(where $A_i$ are sets of \kl{letters}).
		In \SUP for words simpler expressions of the form $b_1^* \dots b_k^*$ suffice.
		This is not possible for \kl{trees}:
		\begin{inparaenum}[(1)]
		\item expressions of the form $a^?\applyto P_1\applyto P_2$ cannot be removed since they are responsible for branching, and
		\item reducing the two \kl{contexts} in $((a\applyto P_1\applyto\hole)+(b\applyto P_2\applyto\hole))^*.P_3$ to a single one
		would require changing \kl{trees} of the form $a\applyto T_1\applyto(b\applyto T_2\applyto T_3)$ into \kl{trees} of the form $c\applyto T_1\applyto T_2\applyto T_3$,
		which is not a \kl{linear FTT transduction}.
		\end{inparaenum}
	\end{remarkOK}

\subsection{The \kl{diagonal problem for trees}}
\label{sec:dc:diagonal}

	In \SUP for words, instead of checking whether $a_1^* \dots a_k^* \subseteq \downwardclosure \Ll$,
	one can equivalently check whether, for each $n\in\Nat$,
	there is a word in $\downwardclosure\Ll\cap a_1^*\dots a_k^*$ containing at least $n$ occurrences of every \kl{letter} $a_i$, where $i\in\set{1,\dots,k}$.
\AP	The latter problem (for an arbitrary language $\Ll'$ in place of $\downwardclosure\Ll\cap a_1^*\dots a_k^*$) is known as the \emph{diagonal problem} for words.
\AP	In this section, we define an analogous \kl{diagonal problem for trees},
	and we show how to reduce \SUP to it.

\AP	Given a set of \kl{letters} $\Sigma$,
	we say that a language of finite \kl{trees} $\Ll$ is \intro{$\Sigma$-diagonal}
	if, for every $n \in \Nat$, there is a \kl{tree} $T \in \Ll$
	such that for every \kl{letter} $a\in\Sigma$ and every \kl{branch}~$B$ of~$T$ there are at least $n$ occurrences~$a$ in~$B$.
\AP	The \intro{diagonal problem} for a class $\Class$ of finite \kl{trees} asks, given a language $\Ll\in\Class$ and a set of \kl{letters} $\Sigma$, whether $\Ll$ is \kl{$\Sigma$-diagonal}.

\paragraph{Versatile trees.}

	Contrary to the case of words,
	the presence of sums in our expressions creates some complications in reducing from \SUP to the \kl{diagonal problem}.
	Namely, suppose that we want to check whether $\sem{((\leta\applyto\hole)+(\letb\applyto\hole))^*.\letc}\subseteq\downwardclosure\Ll$.
	This question is not equivalent to checking whether $\downwardclosure\Ll\cap\sem{((\leta\applyto\hole)+(\letb\applyto\hole))^*.\letc}$ contains \kl{trees} with arbitrarily many $\leta$ and $\letb$.
	Indeed, it is possible that $\downwardclosure\Ll$ contains \kl{trees} of the form $\leta\applyto(\leta\applyto(\dots\applyto(\leta\applyto(\letb\applyto(\letb\applyto(\dots\applyto(\letb\applyto\letc)\dots))))\dots))$
	with arbitrarily many $\leta$ and $\letb$,
	but this does not yet mean that it contains arbitrarily large \kl{trees} of the form $\leta\applyto(\letb\applyto(\leta\applyto(\letb\applyto(\dots\applyto(\leta\applyto(\letb\applyto\letc))\dots))))$.
	Denote the latter \kl{tree} with $n$ occurrences of $\leta$ by $T_n$;
	the original question is rather equivalent to checking whether $\downwardclosure\Ll\cap\setof{T_n}{n\in\Nat}$ contains \kl{trees} with arbitrarily many $\leta$ and $\letb$.
	This is the case, because every \kl{tree} in $\sem{((\leta\applyto\hole)+(\letb\applyto\hole))^*.\letc}$ can be \kl{embedded} in a large enough \kl{tree} $T_n$
	(e.g., $\letb\applyto(\letb\applyto(\leta\applyto\letc))$ \kl{embeds} in $T_3=\leta\applyto(\letb\applyto(\leta\applyto(\letb\applyto(\leta\applyto(\letb\applyto\letc)))))$).
	
\AP	We thus deal with sums by considering \kl{trees} like $T_n$, which we call \intro{versatile trees}.
	Intuitively, in order to obtain a \kl{versatile tree} of a \kl{pure product} $P$, for every sum $I=C_1+\dots+C_k$ in $P$
	we fix some order of the \kl{contexts} $C_1,\dots,C_k$, and we allow the \kl{contexts} to be appended only in this order.
\AP	Formally, the set $\intro[\ctree]{\ctree{P}}$ of \kl{versatile trees} of a \kl{pure product} $P$
	is defined by induction on the structure of $P$:
	\begin{align*}
		\ctree{I^*.P} &= \mathrlap{\bigcup_{n\in\Nat}\ctree{I}[\underbrace{(\ctree{I}\cup\set{\hole})[\dots[(\ctree{I}\cup\set{\hole})[}_n \ctree{P}]]\dots]],}\\
		\ctree{a^?\applyto P_1\applyto\dots\applyto P_r} &= \ctree{a\applyto P_1\applyto\dots\applyto P_r},\\
		\ctree{C_1 + \dots + C_k} &= \ctree{C_1}[\dots[\ctree{C_k}]\dots],\\
		\ctree{a\applyto P_{\hole,1}\applyto\dots\applyto P_{\hole,r}} &= \setof{a\applyto T_1\applyto\dots\applyto T_r}{\forall i\,.\,T_i\in\ctree{P_{\hole,i}}},\\
		\ctree{\hole} &= \set\hole.
	\end{align*}
	For example, if $I = (\leta\applyto S_1\applyto\hole\applyto\hole) + (\letb\applyto\hole\applyto S_2)$,
	then $\ctree I = \set{\leta\applyto S_1\applyto(\letb\applyto\hole\applyto S_2)\applyto(\letb\applyto\hole\applyto S_2)}$;
	in particular, we have $\letb\applyto(\leta\applyto S_1\applyto\hole\applyto\hole)\applyto S_2\not\in\ctree I$.
	Notice that the \kl{roots} of all \kl{trees} in $\ctree{P}$ have the same label; denote this label by $\rootLab(P)$.

\paragraph{From \SUP to the diagonal problem.}
	
\AP	Assuming that $P$ is \kl{diversified}, for a number $n\in\Nat$ we say that a \kl{tree} $T$ is \intro{$n$-large with respect to $P$}
	if, for every subexpression of $P$ of the form $I^*.P'$,
	above every occurrence of $\rootLab(P')$ in the \kl{tree} $T$ there are at least $n$ ancestors labeled by $\rootLab(I^*.P')$.
	In other words, for $T\in\ctree{P}$ this means that in $T$ every \kl{context} appearing in $P$ was appended at least $n$ times, on all branches where it was possible to append it.
	Clearly $\ctree P \subseteq \sem P$.
	On the other hand, every \kl{tree} from $\sem{P}$ can be \kl{embedded} into every \kl{versatile tree} which is large enough.
	We thus obtain the following lemma:

	\begin{lemmaOK}\label{lem:large-gives-all}
		For every \kl{diversified pure product} $P$, and for every sequence of \kl{trees} $T_1,T_2,\dots\in\ctree{P}$ such that every $T_n$ is \kl{$n$-large},
		\begin{align*}\tag*{\QED}
			\downwardclosure{\setof{T_n}{n\in\Nat}} = \sem{P}.
		\end{align*}
	\end{lemmaOK}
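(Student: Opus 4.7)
The plan is to prove the two inclusions separately. The inclusion $\downwardclosure{\setof{T_n}{n \in \Nat}} \subseteq \sem P$ is immediate: the excerpt already notes $\ctree P \subseteq \sem P$, and $\sem P$ is downward closed. For the converse $\sem P \subseteq \downwardclosure{\setof{T_n}{n \in \Nat}}$, I would prove by structural induction on $P$ the following strengthened statement: for every $T \in \sem P$ there exists $n_0 \in \Nat$ such that $T \embedsinto U$ for every $U \in \ctree P$ that is $n$-large with $n \geq n_0$. The lemma then follows by specialising $U$ to $T_n$ with $n \geq n_0$.

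In the case $P = \opt a(P_1, \ldots, P_r)$, \cref{star} splits $T \in \sem P$ into two subcases. If $T = a(T_1, \ldots, T_r)$ with $T_i \in \sem{P_i}$, we apply induction to each pair $(T_i, U_i)$, where $U = a(U_1, \ldots, U_r)$ is an arbitrary versatile tree, and take the maximum of the resulting bounds. Otherwise, $T \in \sem{P_i}$ for some $i$, and here the diversified hypothesis intervenes: since $a$ occurs only once in $P$, the tree $T$ cannot have root label $a$, so it must be matched with a proper subtree of $U$, and we recurse on $(T, U_i)$.

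The main case is $P = I^*.P'$ with $I = C_1 + \cdots + C_m$. Any $T \in \sem P$ lies in $\sem I ^k[\sem{P'}]$ for some $k$, so $T$ consists of $k$ nested context-layers from $\sem I$ wrapped around a kernel $T' \in \sem{P'}$. A versatile tree $U \in \ctree P$ arranges the contexts $C_1, \ldots, C_m$ in this fixed cyclic order inside each unfolding of $\ctree I$, and $n$-largeness forces at least $n$ such cycles above every occurrence of $\rootLab(P')$ along every branch of $U$. The embedding is built greedily: each of the $k$ layers of $T$ is routed to the matching $C_i$ within the next available cycle of $U$; side subtrees sitting at the non-principal holes of a context are embedded using the induction hypothesis on the corresponding $P_{\hole, j}$; and the kernel $T'$ embeds into the innermost versatile subtree by the induction hypothesis for $P'$. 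A small technical check, enabled by the diversification of $P$, confirms that the relevant $n$-largeness conditions descend from $U$ to all the subtrees to which the induction hypothesis is applied.

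The main obstacle is reconciling the arbitrary order in which contexts $C_i$ may occur in a tree from $\sem P$ with the fixed cyclic order imposed by the versatile tree construction: in the worst case, consecutive layers of $T$ request the $C_i$'s in reverse, so each layer of $T$ must consume an entire cycle of $U$. This dictates the bound $n_0 := k$, where $k$ is the number of $\sem I$-layers used by $T$.
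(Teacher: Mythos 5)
Your proof is correct and follows the same route as the paper, which states this lemma without an explicit proof, relying precisely on your two inclusions: that $\ctree{P}\subseteq\sem{P}$ (plus downward-closedness of $\sem{P}$), and that every tree of $\sem{P}$ embeds into every sufficiently large versatile tree. Your structural induction, with the strengthened statement quantifying over all $n$-large $U\in\ctree{P}$ and the greedy routing of the $\sem{I}$-layers through the fixed cyclic order of contexts, supplies exactly the details the paper leaves implicit.
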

	
	Using \kl{versatile trees} we can reduce \SUP to the \kl{diagonal problem}:

	\begin{lemmaOK}\label{lem:test-for-large}
		Let $\Class$ be a class of languages of finite \kl{trees} closed under \kl{linear FTT transductions}.
		\SUP for $\Class$ reduces to the \kl{diagonal problem} for $\Class$.
	\end{lemmaOK}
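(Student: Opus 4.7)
The plan is to invoke \cref{lem:large-gives-all} to recast \SUP for $(P,\Ll)$ as the statement that, for every $n\in\Nat$, some \kl{$n$-large} versatile tree $V\in\ctree{P}$ embeds into some $T\in\Ll$. I then aim to construct a linear \kl{FTT} $\Aa_P$ (depending only on $P$) together with a set $\Sigma_P$ of letters such that $\Aa_P(\Ll)$ is $\Sigma_P$-diagonal iff \SUP for $(P,\Ll)$ is a yes-instance; since $\Class$ is closed under \kl{linear FTT transductions}, $\Aa_P(\Ll)\in\Class$, and then the assumed decidability of the \kl{diagonal problem} for $\Class$ will decide \SUP.

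I would take $\Sigma_P=\set{\rootLab(I)\mid I^*.P'\text{ is an iterator subexpression of }P}$, a set of pairwise distinct letters because $P$ is \kl{diversified}. The FTT $\Aa_P$ walks $T$ top-down, tracking in its finite state the current subexpression of $P$ (unambiguous, since each letter of $T$ plays a unique role in $P$). At every node of $T$ labelled by some $\rootLab(I)$ it re-emits that letter on its output, so the actual iteration counts from $T$ are recorded on each branch of $\Aa_P(T)$.

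The subtlety is that different branches of a versatile tree of $P$ traverse different iterator scopes, so a branch that leaves an iterator---through a sub-product of an iterator context, or upon reaching a base $\rootLab(P')$---does not exhibit the counts of the iterators it abandons. To repair this, at every leaf of $T$ the FTT additionally emits, non-deterministically, any number of each letter of $\Sigma_P$ \emph{except} the root $\rootLab(I)$ of the iterator whose base is reached at that leaf (if any). The non-compensated root then bottlenecks the $\Sigma_P$-diagonal condition at the actual iteration count of its iterator in $T$, while compensated roots can be emitted in arbitrary number through the free non-determinism.

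Both directions of the equivalence will then follow. Given \SUP at level $n$, for the $n$-large $V\embedsinto T\in\Ll$ one can choose the compensations large enough that every branch of $\Aa_P(T)$ has at least $n$ occurrences of each letter in $\Sigma_P$; conversely, if $\Aa_P(\Ll)$ is $\Sigma_P$-diagonal at level $n$, then on every branch of some $\Aa_P(T)$ reaching a base $\rootLab(P')$ the non-compensated count of $\rootLab(I)$ must be at least $n$, forcing every iterator in this single $T$ to iterate at least $n$ times and yielding an $n$-large embedded versatile tree. The hard part will be to present the FTT recursion uniformly across deeply nested iterators and sub-products and to verify that the ``protect-the-base-iterator, compensate-the-rest'' rule at each leaf correctly identifies the bottleneck; the corner case of iterator-free $P$ (where $\Sigma_P=\emptyset$) is handled separately by reducing \SUP directly to finitely many emptiness checks.
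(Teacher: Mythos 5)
Your overall plan---recast \SUP via \cref{lem:large-gives-all}, then produce by a linear FTT a language whose $\Sigma_P$-diagonality encodes the existence of arbitrarily large iteration counts, compensating at leaves for the iterators a branch escapes---is the same as the paper's. The fatal difference is that you apply your transducer to $\Ll$ itself, whereas the paper first passes to $\Ll'=\downwardclosure{\Ll}\cap\ctree{P}$; this preprocessing is not cosmetic, and without it your equivalence fails in both directions. Counting occurrences of $\rootLab(I)$ on a branch of an arbitrary tree of $\Ll\subseteq\sem P$ certifies only how often the \emph{first} context of $I$ occurs, and says nothing about the other contexts of $I$ or about their interleaving; only inside $\ctree P$, where each iteration stacks $C_1[\dots[C_k]]$ in a fixed order, does the count of $\rootLab(I)$ control full iterations. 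Concretely, take $P=(\leta(\hole)+\letb(\hole))^*.\opt\letc()$ and $\Ll=\setof{\leta^n(\letc())}{n\in\Nat}$ (unary trees of $\leta$'s over a $\letc$-leaf). Then $\Ll\subseteq\sem P$ but $\letb(\letc())\in\sem P\setminus\downwardclosure\Ll$, so \SUP answers no; yet $\Sigma_P=\set\leta$, and each $\Aa_P(\leta^n(\letc()))$ has $n$ protected occurrences of $\leta$ on its unique branch, so $\Aa_P(\Ll)$ is $\Sigma_P$-diagonal and your reduction answers yes. The forward direction breaks for a dual reason: when \SUP holds, the trees $T\in\Ll$ into which the $n$-large versatile trees embed may carry extra shallow occurrences of $\rootLab(P')$ (e.g.\ $T=\leta(V_n,\letc())$ for $P=(\leta(\hole,\hole))^*.\opt\letc()$), whose branches are protected but short, so no tree of $\Aa_P(\Ll)$ need pass the diagonal test even though the answer is yes. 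The paper's $\downwardclosure{(\cdot)}$ step lets one trim such a $T$ down to the embedded versatile tree before counting, and the intersection with the regular language $\ctree P$ enforces the fixed context order; both are linear FTT transductions by \cref{fact:dwclosure-is-FTT,fact:reg-intersection-is-FTT}.

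A second, more local problem is the compensation rule ``except the root $\rootLab(I)$ of the iterator whose base is reached at that leaf.'' A single branch can reach the bases of several nested iterators (for $P=I_1^*.(I_2^*.P'')$ a full branch meets both $\rootLab(I_2^*.P'')$ and $\rootLab(P'')$), and every such iterator must be protected; protecting only one lets the others be inflated at the leaf and again makes the diagonal test pass spuriously. The paper's formulation---pad a leaf with $\rootLab(I^*.P')$ exactly when its branch contains no occurrence of $\rootLab(P')$---handles this uniformly. Your handling of the iterator-free case and the linearity of the transducer are unproblematic.
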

	
	\begin{proof}
		In an instance of \SUP we are given a \kl{diversified pure product} $P$ and a language $\Ll\in\Class$.
		Consider the language of \kl{trees} $\Ll' = \downwardclosure{\Ll}\cap\ctree{P}$.
		Clearly $\ctree{P}$ is regular, so $\Ll'\in\Class$
		by \cref{fact:dwclosure-is-FTT,fact:reg-intersection-is-FTT}.
		The following claim is a direct consequence of \cref{lem:large-gives-all}:
		
		\begin{claimOK}
			$\sem P \subseteq \downwardclosure \Ll$ if and only if
			for every $n\in\Nat$ there is a \kl{tree} in $\Ll'$ that is \kl{$n$-large with respect to $P$}.
		\QED\end{claimOK}
		
		We have reduced to a problem which is very similar to the \kl{diagonal problem},
		except that we should put no requirement on the number of occurrences of $\rootLab(I^*.P')$
		for \kl{branches} not containing an occurrence of $\rootLab(P')$.
		In order to fix this, let $\Ll''$ be the set of \kl{trees} $T''$
		obtained from some \kl{tree} $T'$ of $\Ll'$ by the following procedure:
		whenever a \kl{branch} of $T'$ does not contain an occurrence of $\rootLab(P')$,
		then the \kl{leaf} finishing this \kl{branch} can be replaced by an arbitrarily large \kl{tree} with internal \kl{nodes} labeled by $\rootLab(I^*.P')$.
		Let $\Sigma$ be the set of \kl{root} labels of the form $\rootLab(I^*.P')$
		for every subexpression $I^*.P'$ of $P$.
		The following claim is a direct consequence of the definition:

		\begin{claimOK}
			$\Ll''$ is \kl{$\Sigma$-diagonal} if and only if
			for every $n\in\Nat$ there is a \kl{tree} in $\Ll'$ which is \kl{$n$-large with respect to $P$}.
		\QED\end{claimOK}
		
		The operation mapping $\Ll'$ to $\Ll''$ can be realized as a \kl{linear FTT transduction},
		and thus $\Ll''\in\Class$.
		This completes the reduction from \SUP to the \kl{diagonal problem}.
	\end{proof}

	\begin{remarkOK}\label{rem:diagonal}
		Another formulation of the \kl{diagonal problem} for languages of finite \kl{trees}
		\cite{diagonal-safe,ClementeParysSalvatiWalukiewicz:LICS:2016,Parys:FSTTCS:2017}
		requires that, for every $n\in\Nat$, there is a \kl{tree} $T\in\Ll$
		containing at least $n$ occurrences of every \kl{letter} $a\in\Sigma$
		(not necessarily on the same \kl{branch}, unlike in our case).
		Such a formulation of the \kl{diagonal problem} seems too weak to compute \kl{downward closures} for languages of finite \kl{trees}.
	\end{remarkOK}

	The main result of this section, \cref{thm:reduction},
	stating that the \kl{downward closure} computation reduces to the \kl{diagonal problem},
	follows at once from \cref{thm:SUP} and \cref{lem:test-for-large} above.


\section{Languages of safe recursion schemes}
\label{sec:downward-closure}

	In the previous section, we have developed a general machinery allowing one to compute \kl{downward closures}
	for classes of languages of finite \kl{trees} closed under \kl{linear FTT transductions}.
	In this section, we apply this machinery to the particular case of \kl(scheme){languages recognized} by \kl{safe recursion schemes}.
	The following is the main theorem of this section:
	
	\begin{theoremOK}\label{thm:main-2}
		Finite tree automata recognizing \kl{downward closures} of languages of finite \kl{trees} \kl(scheme){recognized} by \kl{safe recursion schemes} are computable.
	\end{theoremOK}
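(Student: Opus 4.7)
The plan is to invoke \cref{thm:reduction}, which reduces the theorem to two tasks: (i) showing that the class $\Class$ of languages of finite trees recognized by \kl{safe recursion schemes} is effectively closed under \kl{linear FTT transductions}, and (ii) showing that the \kl{diagonal problem} is decidable for $\Class$.

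For task (i), I would show that given a \kl{safe recursion scheme} $\Gg$ and a \kl(ftt){linear FTT} $\Aa$, one can effectively construct a \kl{safe recursion scheme} $\Gg'$ with $\lang{\Gg'}=\Aa(\lang \Gg)$. The idea is to introduce, for each \kl(ftt){control state} $p$ of $\Aa$, a fresh copy $X^p$ of every \kl{nonterminal} $X$ of $\Gg$, together with auxiliary nonterminals simulating the application of $\Aa$'s rules to the letters of $\Alphabet$. Nondeterminism of $\Aa$ is handled by inserting $\ndd$-labeled branchings, and the linearity of $\Aa$ guarantees that each order-$0$ argument is consumed at most once, so no duplicated state needs to be threaded through higher-order arguments. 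Safety is preserved since the transformation modifies only the rules in a purely first-order fashion around the existing \kl{safe} skeleton. Order is preserved (or can be kept at the order of~$\Gg$ after minor cleanup).

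For task (ii), I would reduce the \kl{diagonal problem} for $\lang \Gg$ to the \kl{model-checking problem} of $\Gg$ against an \kl{alternating B-automaton}, which is decidable by \cref{thm:main-1}. Concretely, given a set $\Sigma\subseteq\Alphabet$, I construct a \kl{B-automaton} $\Aa_\Sigma$ on $\BT \Gg$ with one counter per letter $a\in\Sigma$ that behaves as follows. Eve walks down $\BT\Gg$ and, at each $\ndd$-node, chooses (disjunctively) which of the two children to follow, thereby picking a finite tree $T\in\NT{\BT\Gg}$; every non-$\ndd$ node is passed to Adam, who (conjunctively) chooses the next child. Along the way, each counter $a$ is incremented on visiting an $a$-labeled node and reset only at transitions leaving the chosen finite portion (e.g.\ upon reaching $\bot$ or the discarded child of $\ndd$). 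Then $\Aa_\Sigma$ \kl(aut){$n$-accepts} $\BT\Gg$ precisely when Eve can select, for a given bound, some tree $T\in\lang\Gg$ on which Adam cannot find a branch with fewer than $n$ occurrences of some $a\in\Sigma$; equivalently, $\Aa_\Sigma$ \kl(aut){accepts} $\BT\Gg$ iff $\lang \Gg$ is \kl{$\Sigma$-diagonal}. Alternatively, one can directly cite the fact that the \kl{diagonal problem} is expressible in \kl{WCMSO} and that \kl{WCMSO} translates to \kl{alternating B-automata}~\cite{weak-cost}, avoiding the need to engineer $\Aa_\Sigma$ by hand.

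The main obstacle I foresee is the linear-FTT closure in task~(i): while the construction is conceptually routine, care is required to ensure that safety is genuinely preserved and that the newly introduced nonterminals have correctly \kl(type){homogeneous types} so that the resulting scheme falls within the scope of \cref{thm:main-1}. Once both tasks are in place, \cref{thm:reduction} immediately yields the theorem, and the resulting finite tree automaton for $\downwardclosure{\lang \Gg}$ is obtained by enumerating candidate lists of \kl{pure products} and verifying each side of the required equality via \SUP, which in turn reduces (via \cref{lem:test-for-large}) to the now-decidable \kl{diagonal problem}.
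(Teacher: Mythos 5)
Your high-level plan (invoke \cref{thm:reduction} and discharge the two hypotheses) matches the paper, but both of your discharges have genuine problems. The more serious one is your direct construction of $\Aa_\Sigma$ for the diagonal problem. A \kl{B-automaton} \kl(aut){$n$-accepts} when Eve can keep counter values \emph{at most} $n$; it is a machine for boundedness-from-above, not largeness-from-below. With your role assignment (Eve picks the tree at $\ndd$-nodes, Adam picks the branch, counters incremented on $\Sigma$-letters), $n$-acceptance means Eve can find a tree in which every branch carries \emph{at most} $n$ occurrences of each letter of $\Sigma$ --- the opposite of what you claim, and acceptance of $\Aa_\Sigma$ would then have nothing to do with $\Sigma$-diagonality. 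To make B-automata applicable one must model-check the \emph{negation} of diagonality (``there is $n$ such that every tree in $\NT{\BT\Gg}$ has some branch with fewer than $n$ occurrences of some $a\in\Sigma$''), which flips Eve's and Adam's roles. The paper does exactly this, by writing a \kl{WCMSO} sentence $\varphi_\Sigma$ that \kl(logic){$n$-accepts} $T$ iff \emph{no} tree in $\NT T$ is $n$-large, and then invoking the (nontrivial) translation of \kl{WCMSO} into \kl{weak B-automata}~\cite{weak-cost} together with \cref{thm:main-1}; the paper explicitly remarks that engineering the B-automaton by hand ``seems difficult.'' Your fallback sentence pointing at the WCMSO route is the right idea, but as stated (``the diagonal problem is expressible in WCMSO'') it elides the necessary negation, and your primary construction as written is incorrect.

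The second gap is in closure under \kl{linear FTT transductions}. You assert that safety is preserved because the transformation is ``purely first-order around the existing safe skeleton,'' but this is precisely where the paper has to do real work. The construction of Clemente et al.\ for unsafe schemes relies on a normal form for rules, and putting a scheme into that normal form does \emph{not} preserve safety: a subterm $Y_j\applyto x_1\applyto\dots\applyto x_p$ replacing $M_j$ may apply $Y_j$ to a variable $x_i$ of order lower than $\ord(M_j)$ when $x_i$ does not occur in $M_j$, violating \kl(lambda)[safe]{safety}. The paper fixes this by allowing unused variables to be dropped from each such application. Moreover, handling non-complete transductions requires making the scheme productive, which is done via the \kl{reflection property}; one must therefore prove that reflection holds for \emph{safe} schemes (\cref{lem:reflection:safety}, via the Carayol--W\"ohrle closure of higher-order pushdown trees under \mso-relabeling). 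You correctly flag safety preservation as ``the main obstacle,'' but the proposal neither identifies these two specific failure points nor supplies the arguments that resolve them, so task~(i) remains open in your write-up.
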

	
	In order to prove the theorem we need to recall a formalism
	necessary to express the \kl{diagonal problem} in logic.

\paragraph{Cost logics.}

\AP
	\intro{Cost monadic logic} (\intro{CMSO}) was introduced by Colcombet~\cite{cost-functions} as a quantitative extension of monadic second-order logic (\intro{MSO}).
	As usual, the logic can be defined over any relational structure, but we restrict our attention to \kl{CMSO} over \kl{trees}. 
\AP 
	In addition to \intro{first-order variables} ranging over \kl{nodes} of a \kl{tree} and
	\reintro{monadic second-order variables} (also called \intro{set variables}) ranging over sets of \kl{nodes}, 
	\kl{CMSO} uses a single additional variable $\intro{\varN}$, called the \intro{numeric variable}, which ranges over $\Nat$.
\AP	The atomic formulas in \kl{CMSO} are those from \kl{MSO} (the membership relation $x \in X$ and relations $a(x, x_1, \dots, x_r)$ 
	asserting that $a \in\Alphabet$ of \kl{rank} $r$ is the label at \kl{node} $x$ with \kl{children} $x_1, \dots, x_r$ from left to right), 
\AP	as well as a new predicate $\intro{\size X <\varN}$, where $X$ is any \kl{set variable} and $\kl{\varN}$ is the \kl{numeric variable}. 
	Arbitrary \kl{CMSO} formulas are built inductively by applying Boolean connectives and by quantifying (existentially or universally) over \kl[first-order variable]{first-order} or \kl{set variables}. 
	We require that predicates of the form $\kl{\size X <\varN}$ appear positively in the formula (i.e., within the scope of an even number of negations).
	We regard $\kl{\varN}$ as a parameter.
\AP	As usual, a \intro{sentence} is a formula without \kl[first-order variable]{first-order} or \kl[monadic variable]{monadic} \intro(logic){free variables};
	however, the parameter $\kl{\varN}$ is allowed to occur in a \kl{sentence}.
\AP	If we fix a value $n \in \Nat$ for $\kl{\varN}$, the semantics of $\kl{\size X <\varN}$
	is what one would expect:
	the predicate holds when $X$ has cardinality smaller than $n$. 
\AP	We say that a \kl{sentence} $\varphi$ \intro(logic){$n$-accepts} a \kl{tree} $T$ if
	it holds in $T$ when $n$ is used as a value of~$\kl{\varN}$;
\AP	it \intro(logic){accepts} $T$ if it \kl(logic){$n$-accepts} $T$ for some $n\in\Nat$.

\AP
	\intro{Weak cost monadic logic} (\reintro{WCMSO} for short) is the variant of \kl{CMSO} 
	where the second-order quantification is restricted to finite sets.
\AP
	Vanden Boom~\cite[Theorem 2]{weak-cost} proves that \kl{WCMSO} is effectively equivalent to a subclass of \kl{alternating B-automata}, called \intro{weak B-automata}.
	Thanks to \cref{thm:main-1}, we obtain the following corollary:

	\begin{corollaryOK}\label{cor:wcmso-to-baut}
		Given a \kl{WCMSO} formula $\varphi$ and a \kl{safe recursion scheme} $\Gg$,
		one can decide whether $\varphi$ \kl(logic){accepts} the \kl{tree generated by} $\Gg$.
	\QED\end{corollaryOK}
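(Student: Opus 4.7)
The proof is essentially a composition of two results already available. The plan is to translate the \kl{WCMSO} formula $\varphi$ into an equivalent \kl{weak B-automaton} $\Aa$ via Vanden Boom's effective equivalence \cite[Theorem 2]{weak-cost}, noting that weak B-automata are a syntactic subclass of \kl{alternating B-automata}, so $\Aa$ is in particular an alternating B-automaton. Since by construction $\Aa$ \kl(aut){$n$-accepts} a tree $T$ exactly when $\varphi$ \kl(logic){$n$-accepts} $T$ (hence $\Aa$ \kl(aut){accepts} $T$ iff $\varphi$ \kl(logic){accepts} $T$), the problem reduces to deciding whether $\Aa$ \kl(aut){accepts} the \kl{tree generated by} $\Gg$.

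This latter question is exactly the instance of the \kl{model-checking problem} settled by \cref{thm:main-1}: given the \kl{alternating B-automaton} $\Aa$ and the \kl{safe recursion scheme} $\Gg$, decide whether $\Aa$ \kl(aut){accepts} $\BT \Gg$. Composing the two procedures gives a decision procedure for the original question.

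There is no real obstacle: the only point to double-check is that the semantic correspondence provided by Vanden Boom's theorem is strong enough, i.e., that it preserves both the set of accepted trees and (if needed) the boundedness structure via the counter values. Since our statement only concerns the qualitative notion of \kl(logic){acceptance} (existence of some bound $n$ on $\varN$), and since the translation from \kl{WCMSO} to \kl{weak B-automata} in \cite{weak-cost} guarantees at least this qualitative equivalence, no further work is required.
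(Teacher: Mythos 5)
Your proposal is correct and matches the paper exactly: the paper obtains this corollary immediately by combining Vanden Boom's effective equivalence between WCMSO and weak B-automata (a subclass of alternating B-automata) with the decidability of model checking safe recursion schemes against alternating B-automata from \cref{thm:main-1}. No differences to report.
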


	\begin{remarkOK}
\AP		The same holds for a
		more expressive logic called
		\intro{quasi-weak cost monadic logic} (\intro{QWCMSO}) \cite{quasi-weak-logic},
		whose expressive power lies between \kl{WCMSO} and the \kl{CMSO}.
		Indeed, Blumensath et al.~\cite[Theorem 2]{quasi-weak-logic} prove that \kl{QWCMSO}
		is effectively equivalent to a subclass of \kl{alternating B-automata} called \intro{quasi-weak B-automata},
		and thus by \cref{thm:main-1} even model checking of \kl{safe recursion schemes} against \kl{QWCMSO} properties is decidable.
	\end{remarkOK}

\paragraph{Solving the diagonal problem.}

	\AP By \cref{thm:reduction}
	all we need to do in order to obtain \cref{thm:main-2} is to show that
	\begin{inparaenum}[(1)]
	\item the \kl{diagonal problem} is decidable for \kl(scheme){languages recognized} by \kl{safe recursion schemes}, and
	\item the class of these languages is effectively closed under \kl{linear FTT transductions}.
	\end{inparaenum}
	We start by proving the former:
	
	\begin{lemmaOK}\label{lem:diagonal-decidable}
		The \kl{diagonal problem} is decidable for the class of languages of finite \kl{trees} \kl(scheme){recognized} by \kl{safe recursion schemes}.
	\end{lemmaOK}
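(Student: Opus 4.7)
The plan is to express the negation of the \kl{diagonal problem} as a \kl{WCMSO} model-checking instance on $\BT\Gg$ and then apply \cref{cor:wcmso-to-baut}. Given a \kl{safe recursion scheme} $\Gg$ and a finite set of letters $\Sigma\subseteq\Alphabet$, I shall construct a \kl{WCMSO} \kl(logic){sentence} $\varphi$ with numeric parameter $\varN$ such that $\varphi$ \kl(logic){accepts} $\BT\Gg$ if and only if $\lang\Gg$ is \emph{not} \kl{$\Sigma$-diagonal}; decidability of the diagonal problem then follows by negating the Boolean answer returned by \cref{cor:wcmso-to-baut}.

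First I would encode every tree $U\in\lang\Gg=\NT{\BT\Gg}$ by the finite set $X$ of nodes of $\BT\Gg$ that survive a $\to_\ndd$-resolution. The \kl{MSO} predicate $\mathrm{Res}(X)$ asserts that: the root is in $X$; every $\ndd$-labeled node of $X$ has exactly one of its two children in $X$; every node in $X$ labeled by some $a\in\Alphabet\setminus\set{\ndd,\bot}$ of positive \kl{rank} $r$ has all its $r$ children in $X$; and no node of $X$ is labeled $\bot$. Since second-order quantifiers in \kl{WCMSO} range only over finite sets, $X$ is automatically finite, so $\mathrm{Res}(X)$ indeed captures exactly those sets corresponding to elements of $\lang\Gg$. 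A branch of the resolved tree is then represented by a downward-closed linearly-ordered $B\subseteq X$ ending at a rank-$0$ node of $X$; this \kl{MSO}-definable formula I call $\mathrm{Branch}(B,X)$.

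Next I would set
\begin{align*}
  \varphi(\varN)\;\equiv\;\forall X.\;\mathrm{Res}(X)\;\Rightarrow\;\exists B.\;\mathrm{Branch}(B,X)\,\wedge\,\bigvee_{a\in\Sigma}\exists Y.\;(Y=B\cap X_a)\,\wedge\,\size Y<\varN,
\end{align*}
where $X_a$ is the \kl{MSO}-definable set of $a$-labeled nodes of $\BT\Gg$. All set quantifiers are over finite sets and the unique cost atom $\size Y<\varN$ sits inside the conclusion of the outer implication (underneath only existentials), so it appears positively and $\varphi$ is a bona fide \kl{WCMSO} sentence. Unfolding the definitions, $\varphi$ \kl(logic){$n$-accepts} $\BT\Gg$ precisely when every finite resolution $X$ admits a branch on which some letter $a\in\Sigma$ occurs strictly fewer than $n$ times; consequently $\varphi$ \kl(logic){accepts} $\BT\Gg$ iff such an $n$ exists, which is exactly the negation of $\lang\Gg$ being \kl{$\Sigma$-diagonal}.

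The only delicate point is arranging the polarity of the cost atom so that $\size Y<\varN$ truly appears positively while the purely \kl{MSO} predicates $\mathrm{Res}$ and $\mathrm{Branch}$ may occur with either sign; the formulation above achieves this. With $\varphi$ in hand, \cref{cor:wcmso-to-baut} decides whether $\varphi$ \kl(logic){accepts} $\BT\Gg$, and complementing the answer decides the \kl{diagonal problem} for $\lang\Gg$.
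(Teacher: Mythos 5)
Your proposal is correct and follows essentially the same route as the paper: encode each resolved tree of $\lang\Gg$ as a finite set $X$ of surviving nodes of $\BT\Gg$, express ``every resolution has a branch on which some $a\in\Sigma$ occurs fewer than $\varN$ times'' as a \kl{WCMSO} sentence with the cost atom in positive position, and decide acceptance on $\BT\Gg$ via \cref{cor:wcmso-to-baut}. The only cosmetic difference is that the paper identifies a branch by a leaf $x$ and collects its $a$-labeled ancestors into a set $Z$ constrained by $\forall z.(z\leq x\land a(z)\to z\in Z)$, whereas you quantify a branch set $B$ directly and take $Y=B\cap X_a$; both devices are equivalent and preserve the required polarity.
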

	
\AP	Recall that in the \kl{diagonal problem} we are given a \kl{safe recursion scheme} $\Gg$ and a set of \kl{letters} $\Sigma$,
	and we have to determine whether for every $n\in\Nat$ there is a \kl{tree} $T \in \lang{\Gg}$
	such that there are at least $n$ occurrences of every \kl{letter} $a\in\Sigma$ on every \kl{branch} of $T$
	(we say that such a \kl{tree} $T$ is \intro{$n$-large with respect to $\Sigma$}).
	In order to obtain decidability of this problem, given a set of \kl{letters} $\Sigma$,
	we write a \kl{WCMSO} \kl{sentence} $\varphi_\Sigma$ that \kl(logic){$n$-accepts} an (infinite) \kl{tree} $T$ if and only if no \kl{tree} in $\NT T$
	is \kl{$n$-large with respect to $\Sigma$}.
	Consequently, $\varphi_\Sigma$ \kl(logic){accepts} $T$ if for some $n$ no \kl{tree} in $\NT T$ is \kl{$n$-large with respect to $\Sigma$}, that is, if $\NT T$ is not \kl{$\Sigma$-diagonal}.
	Thus, in order to solve the \kl{diagonal problem}, it is enough to check whether $\varphi_\Sigma$ \kl(logic){accepts} $\BT{\Gg}$ (recall that $\lang{\Gg}$ is defined as $\NT{\BT{\Gg}}$),
	which is decidable by \cref{cor:wcmso-to-baut}.
	It remains to construct the aforementioned \kl{sentence} $\varphi_\Sigma$.

	First, observe that the process of producing a finite \kl{tree} \kl(scheme){recognized} by $\Gg$ from the infinite \kl{tree} $\BT \Gg$ \kl{generated by} $\Gg$ is expressible by a formula of \kl{WCMSO} 
	(actually, by a first-order formula):
	
	\begin{lemmaOK}
		There is a \kl{WCMSO} formula $\tree X$ that holds in a \kl{tree} $T$ if and only if $X$ is instantiated to a set of \kl{nodes} of a \kl{tree} $T'\in\NT T$, together with their $\ndd$-labeled ancestors.
	\end{lemmaOK}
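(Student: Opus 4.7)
The plan is to define $\tree{X}$ as a first-order formula (hence a \kl{WCMSO} formula) that describes, within $T$ alone, the ``resolution skeleton'' of some tree $T'\in\NT{T}$. The intuition is that each $T'\in\NT{T}$ arises from $T$ by choosing, at every $\ndd$-labeled node encountered during the derivation, one of its two children and contracting the corresponding edge; the set $X$ should then be exactly the union of (a) the images in $T$ of the nodes of $T'$ and (b) the $\ndd$-labeled nodes traversed by these contractions.

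Concretely, I will let $\tree{X}$ be the conjunction of the following conditions, each expressible in first-order logic using only the atomic predicates $a(x,x_1,\dots,x_r)$ and $y\in X$, together with a finite case analysis over $a\in\Alphabet$:
\begin{inparaenum}[(i)]
\item the root of $T$ belongs to $X$;
\item no node of $X$ carries the label $\bot$;
\item $X$ is ancestor-closed, i.e.\ for every non-root $v\in X$, the parent of $v$ lies in $X$;
\item every $\ndd$-labeled node of $X$ has exactly one of its two children in $X$;
\item every non-$\ndd$-labeled node of $X$ of rank $r$ has all $r$ of its children in $X$.
\end{inparaenum}
Both the ``parent of'' and ``$i$-th child of'' relations are first-order definable from the atoms $a(x,x_1,\dots,x_r)$, and since $\Alphabet$ is finite the required case analyses translate to finite disjunctions.

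For the forward direction of the equivalence, if $X$ is built from some $T'\in\NT{T}$ as described, then (i)--(v) hold by construction. For the backward direction, from a finite $X$ satisfying (i)--(v) I will recover $T'$ by taking its nodes to be the non-$\ndd$-labeled elements of $X$ and, for each such $v$ of rank $r$, defining the $i$-th child of $v$ in $T'$ to be the unique non-$\ndd$-labeled element of $X$ reached from the $i$-th child of $v$ in $T$ by iteratively descending through the unique $X$-child at every $\ndd$-node encountered. Condition (iv) makes this descent well-defined, finiteness of $X$ ensures that it terminates, and (ii) forbids landing on $\bot$; a straightforward induction on the depth of $T'$ then exhibits a sequence of $\ndd$-resolutions witnessing $T\rightarrow^{*}_{\ndd}T'$, so indeed $T'\in\NT{T}$.

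The only subtlety I foresee concerns the finiteness of $X$. In the intended use inside $\varphi_\Sigma$, the variable $X$ is bound by a weak second-order quantifier, so finiteness is enforced automatically by the semantics of \kl{WCMSO}; alternatively, finiteness can be made explicit inside $\tree{X}$ via a weak quantifier, for instance by asserting the existence of a maximal element of $X$ in the tree order. Beyond carefully writing out the finite case analyses over $\Alphabet$, I do not expect any genuine obstacle.
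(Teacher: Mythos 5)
Your proposal is correct and coincides with the paper's own proof: the paper defines $\tree X$ by exactly the same list of local conditions (your ancestor-closure clause is the contrapositive of its ``if $x\notin X$ then no child of $x$ belongs to $X$''), plus an explicit ``$X$ is finite'' conjunct, which in \kl{WCMSO} is expressible by a weak existential set quantifier and which you correctly identify as the only delicate point. Only your parenthetical alternative of enforcing finiteness via ``a maximal element of $X$'' would not actually work (an infinite $X$ can have maximal elements), but this aside is not load-bearing, and you additionally supply the correctness verification that the paper omits.
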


	\begin{proof}
		The formula simply says that
		\begin{compactitem}
		\item	$X$ is finite,
		\item	the \kl{root} of the \kl{tree} belongs to $X$,
		\item	no \kl{node} $x\in X$ is $\bot$-labeled,
		\item	for every $\ndd$-labeled \kl{node} $x\in X$, exactly one among the \kl{children} of $x$ belongs to $X$,
		\item	for every \kl{node} $x\in X$ with label other than $\ndd$, all \kl{children} of $x$ belong to $X$, and
		\item	if $x\not\in X$, then no \kl{child} of $x$ belongs to $X$.
		\end{compactitem}
		All the above statements can easily be expressed in \kl{WCMSO}.
	\end{proof}
	
	Using $\tree X$ we construct the desired formula $\varphi_\Sigma$, and thus we finish the proof of \cref{lem:diagonal-decidable}:
	
	\begin{lemmaOK}
		Given a set of \kl{letters} $\Sigma$, one can compute a \kl{WCMSO} \kl{sentence} $\varphi_\Sigma$ that, for every $n\in\Nat$, 
		\kl(logic){$n$-accepts} a \kl{tree} $T$ if and only if no \kl{tree} in $\NT T$ is \kl{$n$-large with respect to $\Sigma$}.
	\end{lemmaOK}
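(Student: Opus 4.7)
The plan is to write $\varphi_\Sigma$ as: ``for every finite set $X$ satisfying $\tree X$, there exists a branch $B$ of the tree encoded by $X$ along which some letter $a\in\Sigma$ occurs fewer than $\varN$ times''. By the previous lemma the sets $X$ with $\tree X$ are exactly the encodings of trees $T'\in\NT T$ (with their $\ndd$-labeled ancestors retained for connectivity), so this will directly express that no $T'\in\NT T$ is \kl{$n$-large with respect to $\Sigma$}.

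First I would write an auxiliary first-order formula $\mathit{branch}(B,X)$ saying that $B\subseteq X$ is a maximal descending chain: $B$ contains the root, every non-root element of $B$ has its parent in $B$, every element of $B$ has at most one child in $B$, and the minimal element of $B$ is a rank-$0$ leaf. Since $B\subseteq X$ and $X$ is finite, quantification over $B$ remains weak, as required by \kl{WCMSO}.

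Second, for each $a\in\Sigma$ I would express ``$B$ contains fewer than $\varN$ occurrences of $a$'' via an auxiliary finite set $Y$ of $a$-labeled nodes of $B$:
\[
    \mathit{few}_a(B) \;\equiv\; \exists Y.\; \bigl(\forall x.\; x\in Y \leftrightarrow (x\in B \land \mathit{lab}_a(x))\bigr) \land \size Y < \varN,
\]
where $\mathit{lab}_a(x)$ is the standard MSO predicate that $x$ carries label $a$. Finally I would assemble
\[
    \varphi_\Sigma \;\equiv\; \forall X.\; \tree X \to \exists B.\; \mathit{branch}(B,X) \land \bigvee_{a\in\Sigma} \mathit{few}_a(B).
\]

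Correctness is immediate from the construction: $\varphi_\Sigma$ \kl(logic){$n$-accepts} $T$ iff every $X$ with $\tree X$ (equivalently every $T'\in\NT T$) admits a branch $B$ with fewer than $n$ occurrences of some $a\in\Sigma$, i.e., iff no $T'\in\NT T$ is \kl{$n$-large with respect to $\Sigma$}. The only technical point to verify is that $\size Y<\varN$ occurs positively: after rewriting the implication $A\to C$ as $\neg A\lor C$, the cardinality atom lies inside an unbroken chain of $\exists$, $\land$, and $\lor$, hence under zero negations. There is no deep obstacle here — the construction is really just bookkeeping. The only mildly subtle point is recognising that ``branch of $T'$'' is correctly interpreted in terms of $X$ as a path from the root to a rank-$0$ leaf in $X$; since letters in $\Sigma$ do not appear in trees of $\NT T$ unless different from $\ndd$ and $\bot$ (and otherwise the formula is trivially correct), counting $a$-labeled nodes on this path in $X$ agrees with counting them on the corresponding branch of $T'$.
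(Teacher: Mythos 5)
Your proposal is correct and follows essentially the same route as the paper: universally quantify a finite set $X$ with $\tree X$, existentially exhibit a branch of the encoded tree, and witness ``fewer than $\varN$ occurrences of $a$'' by an existentially quantified finite set whose cardinality is bounded by the positively occurring predicate $\size{\cdot}<\varN$. The paper merely compresses your explicit branch set $B$ into its leaf $x$ (using the ancestor relation $z\leq x$ in place of membership in $B$, and a superset $Z$ of the $a$-labeled ancestors rather than the exact set), which is a cosmetic difference.
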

	
	\proof
		We can reformulate the property as follows: for every \kl{tree} $T'\in \NT T$ there is a \kl{letter} $a\in\Sigma$, 
		and a \kl{leaf} $x$ that has less than $n$ $a$-labeled ancestors.
		This is expressed by the following formula of \kl{WCMSO} 
		(where $\mathsf{leaf}(x)$ states that the \kl{node} $x$ is a \kl{leaf}, $a(x)$ that $x$ has label $a$, and $z\leq x$ that $z$ is an ancestor of $x$, all being easily expressible):
		\begin{align*}
			\forall X.\Big(\!\tree X \to\!\bigvee_{a\in\Sigma}
				\exists x\exists Z.\big(x\in X\land\mathsf{leaf}(x)\land\forall z.(z\leq x\land a(z) \to z\in Z)\land|Z|<\varN\big)\!\Big).
			\tag*{\QED}
		\end{align*}

\paragraph{Closure under transductions.}

	Finally, we show closure under \kl{linear FTT transductions}, which allows us to apply the results of the previous section
	to \kl{safe recursion schemes}:
	
	\begin{lemmaOK}
		\label{thm:HORS:transd}
		The class of languages of finite \kl{trees} \kl(scheme){recognized} by \kl{safe recursion schemes}
		is effectively closed under \kl{linear FTT transductions}.
	\end{lemmaOK}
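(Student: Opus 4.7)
The plan is to build, given a linear FTT $\Aa = (\Alphabet_\mathit{in}, \Alphabet_\mathit{out}, S, p^I, \Delta)$ and a safe scheme $\Gg$ over $\Alphabet_\mathit{in}$, a safe scheme $\Gg'$ over $\Alphabet_\mathit{out}$ such that $\lang{\Gg'} = \Aa(\lang\Gg)$. I would use a standard state-annotated product construction: for every nonterminal $X$ of $\Gg$ and every state $p \in S$, introduce a fresh nonterminal $X^p$ in $\Gg'$, and take $X_0^{p^I}$ as the initial nonterminal. Intuitively, $X^p$ generates precisely the output trees produced by runs of $\Aa$ started in state $p$ on a tree generated by $X$ (under any compatible instantiation of its arguments). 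Both the $\ndd$-nondeterminism inherent in $\Gg$ and the nondeterminism of $\Aa$ are captured by the $\ndd$ letter in $\Gg'$.

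For the type transformation, I replace every argument at any nesting depth by $|S|$ state-indexed copies: concretely, set $\hat\otyp = \otyp$ and $\hat{(\beta_1\arr\cdots\arr\beta_k\arr\otyp)} = \underbrace{\hat\beta_1\arr\cdots\arr\hat\beta_1}_{|S|}\arr\cdots\arr\underbrace{\hat\beta_k\arr\cdots\arr\hat\beta_k}_{|S|}\arr\otyp$, and give $X^p$ the type obtained from that of $X$ by replicating each argument type $|S|$ times under $\hat{\cdot}$. This preserves orders, so $\ord(\Gg')=\ord(\Gg)$. The rule $\Rr(X) = \lambda\varx_1.\cdots.\lambda\varx_k.K$ becomes $\lambda\varx_1^{q_1}.\cdots.\lambda\varx_k^{q_{|S|}}.\llbracket K\rrbracket^p$ (binding a fresh copy $\varx_i^q$ for every $i\in\{1,\dots,k\}$ and $q\in S$), where the translation $\llbracket\cdot\rrbracket^p$ is defined inductively: $\llbracket\varx\rrbracket^p = \varx^p$; for a letter $a$, $\llbracket a\applyto M_1\applyto\cdots\applyto M_r\rrbracket^p$ is a $\ndd$-branching over all rules $(p, a(\varx_1,\dots,\varx_r))\to V$ in $\Delta$, each branch being $V$ with every marker $(q,\varx_i)$ replaced by $\llbracket M_i\rrbracket^q$; and $\llbracket Y\applyto M_1\applyto\cdots\applyto M_n\rrbracket^p = Y^p\applyto \llbracket M_1\rrbracket^{q_1}\applyto\cdots\applyto \llbracket M_n\rrbracket^{q_{|S|}}$, supplying $|S|$ state-indexed translations of each argument. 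Rules of the form $(p,\varx)\to V$ in $\Delta$ are absorbed by taking an epsilon closure on states when defining the translation.

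Correctness $\lang{\Gg'} = \Aa(\lang\Gg)$ is established by a straightforward induction pairing each tree in $\NT{\BT{\Gg'}}$ with a tree in $\NT{\BT{\Gg}}$ together with a run of $\Aa$ on it; linearity of $\Aa$ is essential for the translation of letter-applications to be well-defined (each $\varx_i$ is consumed at most once in $V$). The main obstacle is verifying that $\Gg'$ is \emph{safe}, since this is the property that must be preserved. The key observations are: the type transformation preserves the order of every type, so each state-copy $\varx^q$ has the same order as the original $\varx$; and the body translation does not introduce free variables in any subterm beyond the state-indexed copies of the original free variables, preserving orders. As a consequence, for every applicative subterm $K\applyto L_1\applyto\cdots\applyto L_m$ arising in $\Lambda(\Gg')$, each of $K, L_1,\dots, L_m$ is the translation of a correspondingly-placed piece of $\Lambda(\Gg)$ whose free variables already satisfied the order condition imposed by safety, and a structural induction through the three cases of the translation (variable, letter-application, nonterminal-application) concludes that superficial safety of each component of every applicative subterm of $\Lambda(\Gg)$ transfers to $\Lambda(\Gg')$.
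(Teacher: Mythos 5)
There is a genuine gap in your argument, and it sits exactly where the paper has to work hardest. A \kl(ftt){linear} FTT is allowed to \emph{delete} subtrees: a rule $(p, a(\varx_1,\dots,\varx_r)) \rightarrow V$ may omit the marker for $\varx_i$ entirely (the paper's own \cref{fact:dwclosure-is-FTT} relies on such deleting rules). In your product scheme, when $\varx_i$ is deleted the translation $\llbracket M_i\rrbracket^q$ simply never appears in the output, so $\Gg'$ produces the output tree regardless of whether the deleted subterm $M_i$ could actually be completed to a finite tree without $\bot$ or leftover $\ndd$. If $\NT{\BT{M_i}}=\emptyset$ (e.g.\ $M_i$ generates only an infinite $\ndd$-free tree), there may be no input tree $T\in\lang\Gg$ on which $\Aa$ could have run, and your claimed ``straightforward induction pairing each tree in $\NT{\BT{\Gg'}}$ with a tree in $\NT{\BT{\Gg}}$ together with a run of $\Aa$'' breaks: the inclusion $\lang{\Gg'}\subseteq\Aa(\lang\Gg)$ fails. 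The standard fix is to first make the scheme \emph{productive} (every subterm of the B\"ohm tree has nonempty language), and this is precisely the step that is nontrivial under the safety constraint: the known productivity transformation goes through the \kl{reflection property}, so one must additionally prove that \emph{safe} schemes enjoy reflection (the paper does this in \cref{lem:reflection:safety} via the equivalence with higher-order pushdown automata and the Carayol--W\"ohrle closure under \mso-relabeling). Your proposal contains no substitute for this ingredient.

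Beyond that, your overall route differs from the paper's in a way worth noting: the paper does not reconstruct the product from scratch but adapts the unsafe-case construction of \cref{lem:unsafe-HORS:transd}, and its safety analysis is concentrated on the normal form used there (which does \emph{not} preserve safety and must be weakened to drop unused variables). Your direct construction sidesteps that normal-form issue, and your safety argument for the translation itself (order-preserving type transformation, state-copies of free variables keeping their orders) is essentially sound, modulo the missing case of applications headed by higher-order variables and a too-quick treatment of the $\e$-rules $(p,\varx)\to V$, whose closure can require extra recursive nonterminals rather than a mere ``epsilon closure on states.'' But without the productivity preprocessing and a safety-preserving way to achieve it, the construction as stated does not prove the lemma.
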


	Observe that \cref{thm:main-2} is a direct consequence of \cref{thm:reduction,lem:diagonal-decidable,thm:HORS:transd}.
	It thus remains to prove \cref{thm:HORS:transd}.
	A very similar result, albeit without the \kl(scheme){safety} assumption,
	has been proved by Clemente, Parys, Salvati, and Walukiewicz~\cite[Theorem 2.1]{ClementeParysSalvatiWalukiewicz:LICS:2016}:

	\begin{lemmaOK}\label{lem:unsafe-HORS:transd}
		The class of languages of finite \kl{trees} \kl(scheme){recognized} by \kl{recursion schemes} is effectively closed under \kl{linear FTT transductions}.
	\QED\end{lemmaOK}

	Notice that \cref{thm:HORS:transd} does not follow from \cref{lem:unsafe-HORS:transd},
	since we need to additionally show that applying a \kl{linear FTT transduction} to a language \kl(scheme){recognized} by a \kl{safe recursion scheme} preserves \kl(scheme){safety}.
	Essentially the same construction
	as in the proof of \cref{lem:unsafe-HORS:transd}~\cite[Appendix A]{DBLP:journals/corr/ClementePSW16}
	already achieves this, albeit some modifications are needed.
	We now argue how to modify the proof in three aspects:

	\begin{compactenum}
	\item	The proof uses the fact that \kl[recursion schemes]{higher-order recursion schemes} \emph{with states} (as introduced in that proof)
		are convertible to equivalent \kl[recursion schemes]{higher-order recursion schemes}
		by increasing the arity of \kl{nonterminals}.
		It is a simple observation that such a translation preserves \kl(scheme){safety}.

	\item	The proof of Clemente et al.~\cite[Appendix A]{DBLP:journals/corr/ClementePSW16} uses the notion of \emph{normalized} recursion schemes,
		wherein every \kl{rule} is assumed to be of the form
		\begin{align*}
			\Rr(X)=\lambda x_1\lamdots\lambda x_p. h\applyto (Y_1\applyto x_1\applyto \dots\applyto x_p) \applyto\dots\applyto (Y_r\applyto x_1\applyto \dots\applyto x_p),
		\end{align*}
		where $h$ is either a \kl(lambda){variable} $x_i$, or a \kl{nonterminal}, or a \kl{letter},
		and the $Y_j$'s are \kl{nonterminals}.
		This normal form is used only to simplify the presentation
		and is in no way essential.
		This is important since putting a recursion scheme in such a normal form does not preserve \kl(scheme){safety}.
		Indeed, a \kl{subterm} $Y_j\applyto x_1\applyto \dots\applyto x_p$ replaces some \kl{subterm} $M_j$ appearing originally in the \kl{rule} for $X$;
		if some \kl(lambda){variable} $x_i$ was not used in $M_j$, we could have $\ord(x_i)<\ord(M_j)$ (the latter equals $\ord(Y_j\applyto x_1\applyto \dots\applyto x_p)$),
		which violates \kl(lambda){safety} of the normalized \kl{rule}.
		On the other hand, by \kl(lambda){safety} we have $\ord(x_i)\geq\ord(M_j)$ if $x_i$ appeared in $M_j$.
		Therefore, we modify the definition of the normal form to allow removal of selected \kl(lambda){variables}, that is, to allow \kl{rules} of the form
		\begin{align*}
			\Rr(X)=\lambda x_1\lamdots\lambda x_p. h\applyto (Y_1\applyto x_{i_{1,1}}\applyto \dots\applyto x_{i_{1,k_1}}) \applyto\dots\applyto (Y_r\applyto x_{i_{r,1}}\applyto \dots\applyto x_{i_{r,{k_r}}}).
		\end{align*}
		By leaving in each \kl{subterm} $Y_j\applyto x_{i_{j,1}}\applyto \dots\applyto x_{i_{j,k_j}}$ only \kl(lambda){variables} used in the replaced \kl{subterm} $M_j$,
		we obtain a normalized recursion scheme that is \kl(scheme){safe}.

	\itemAP	The proof~\cite[Lemma A.3]{DBLP:journals/corr/ClementePSW16} uses also the \intro{MSO-reflection property} of recursion schemes.
		In order to define this property consider a \kl{tree} $T$ and an \kl{MSO} formula $\varphi(x)$ with one \kl(logic){free} \kl{first-order variable}.
		We define $T_\varphi$ to be the \kl{tree} obtained from $T$ by enhancing its labels:
		for every \kl{node} $v$ of $T$, we change its label from $a$ to $(a,b_{\varphi,v})$,
		where $b_{\varphi,v}\in\set{\mathbf{tt},\mathbf{ff}}$ says whether $\varphi(v)$ is true ($\mathbf{tt}$) or false ($\mathbf{ff}$) in $T$.
		The \kl{MSO-reflection property} says that given a recursion scheme $\Gg$ \kl{generating} a \kl{tree} $T$ and given an \kl{MSO} formula $\varphi(x)$
		one can compute a recursion scheme $\Gg'$ \kl{generating} the \kl{tree} $T_\varphi$.
		It was shown~\cite[Corollary 2]{reflection} that recursion schemes indeed have the \kl{MSO-reflection property}.
		
		While switching to \kl{safe recursion schemes} one needs a similar property, where both the input and the output recursion schemes are \kl(scheme){safe}
		(this way we have a stronger conclusion under stronger assumptions).
		It is a folklore result that such an \kl{MSO-reflection property} for \kl{safe recursion schemes} holds as well.
		Let us support this statement in three ways:
		\begin{compactitem}
		\item	It is remarked by Carayol and Serre~\cite[Remark 5]{reflection} that even a stronger property, called MSO-selection, holds for \kl{safe recursion schemes}.
		\item	To obtain a proof of the \kl{MSO-reflection property} for \kl{safe recursion schemes} one can take the original proof of this property for all schemes~\cite{reflection},
			and observe that the construction in this proof preserves \kl(scheme){safety}.
			The proof uses collapsible pushdown automata,
			where \kl(scheme){safety} corresponds to absence of collapse operations;
			it is thus enough to see that no collapse operations are introduced if no such operations were present on input.
		\item	Carayol and W\"ohrle~\cite{DBLP:conf/fsttcs/CarayolW03} prove
			that the class of \kl{trees} generated by deterministic higher-order pushdown automata
			is effectively closed under MSO-markings, which is essentially the same as \kl{MSO-reflection}.
			Although Carayol and W\"ohrle~\cite{DBLP:conf/fsttcs/CarayolW03} work with edge-labeled trees, it is a routine to transfer their results to our setting of node-labeled \kl{trees}
			(and to change MSO-markings into \kl{MSO-reflection}).
			Moreover, a \kl{tree} can be generated by a deterministic higher-order pushdown automaton
			if and only if it can be \kl{generated by} a \kl{safe recursion scheme}
			(see Knapik et al.~\cite[Theorems 5.1 and~5.3]{easy-trees}; note, however, that the authors use the word ``grammar'' for a recursion scheme).
			Thus, the result of Carayol and W\"ohrle~\cite{DBLP:conf/fsttcs/CarayolW03} implies the desired \kl{MSO-reflection property} for \kl{safe recursion schemes}.
		\end{compactitem}
	\end{compactenum}


\section{Conclusions}
\label{sec:conclusions}

A tantalising direction for further work is to drop the \kl(scheme){safety} assumption from \cref{thm:main-1},
that is, to establish decidability of the \kl{model-checking problem} against \kl{B-automata}
for \kl{trees generated by} (not necessarily safe) \kl{recursion schemes}.
We also leave open whether \kl{downward closures} are computable for this more expressive class.
Another direction for further work is to analyse the complexity of the considered \kl{diagonal problem}.
The related problem described in \cref{rem:diagonal} is $\EXP k$-complete for languages of finite \kl{trees}
\kl(scheme){recognized} by \kl{recursion schemes} of \kl(scheme){order} $k$~\cite{Parys:FSTTCS:2017},
and thus not harder than the nonemptiness problem \cite{Ong:LICS:2006}.
Does the same upper bound hold for the more general \kl{diagonal problem} that we consider in this paper?
Zetzsche~\cite{DBLP:conf/icalp/Zetzsche16} has shown that the \kl{downward closure} inclusion problem is $\coNEXP k$-hard
for languages of finite \kl{trees} \kl(scheme){recognized} by \kl{safe recursion schemes} of \kl(scheme){order} $k$.
Is it possible to obtain a matching upper bound?

\bibliographystyle{fundam}
\bibliography{bib}

\appendix
\section{Proof of Lemma~\ref{lem:new-scheme}}\label{appendix}

In this appendix, we provide a self-contained proof of \cref{lem:new-scheme}.
The proof in its essence comes from the papers of Knapik et al.~\cite{easy-trees,hyperalgebraic-trees}, up to some minor details.

\subsection{Preparatory steps}

\AP
A \kl{type} $\alpha_1\arr\dots\arr\alpha_k\arr\otyp$ is \intro(type){homogeneous} if $\ord(\alpha_1)\geq\dots\geq\ord(\alpha_k)$ and all $\alpha_1,\dots,\alpha_k$ are \kl(type){homogeneous}.
A \kl{recursion scheme} $\Gg=\tuple{\Alphabet,\Nn,X_0,\Rr}$ is \intro(scheme){homogeneous} if \kl{types of} all \kl{nonterminals} in $\Nn$ are \kl(type){homogeneous}.
Notice that then also the \kl{type of} every \kl{subterm} of $\Rr(X)$ is \kl(type){homogeneous}, for every \kl{nonterminal} $X\in\Nn$.
It is known that every (\kl(scheme){safe}) \kl{recursion scheme} can be made \kl(scheme){homogeneous}:

\begin{lemmaOK}[{\cite[Theorems 8 and 9]{homogeneous}}]\label{lem:homogeneous}
	For every \kl{safe recursion scheme} $\Gg$ one can construct a \kl(scheme){homogeneous} \kl{safe recursion scheme} $\Hh$ of the same \kl(scheme){order},
	such that $\BT{\Hh}=\BT{\Gg}$.
\QED\end{lemmaOK}

Thanks to \cref{lem:homogeneous} we may assume that the \kl{recursion scheme} $\Gg$ given in \cref{lem:new-scheme} is \kl(scheme){homogeneous}.
We remark that \kl(scheme){homogeneity} of $\Gg$ is not at all essential in the remainder of the proof; this assumption is just for technical convenience.
Namely, thanks to this assumption, the notion of \kl(lambda){order}-0 arguments coincides with the notion of arguments occurring after the last argument of positive \kl(lambda){order}.

It is also convenient to assume that every \kl{nonterminal} of positive \kl(lambda){order} takes some parameter of \kl(lambda){order} $0$.
Again, this can be achieved without loss of generality:

\begin{lemmaOK}
	For every \kl(scheme){homogeneous} \kl{safe recursion scheme} $\Gg$ one can construct a \kl(scheme){homogeneous} \kl{safe recursion scheme} $\Hh$ of the same \kl(scheme){order},
	such that $\BT{\Hh}=\BT{\Gg}$, and such that every \kl{nonterminal} of $\Hh$ having positive \kl(lambda){order} takes some parameter of \kl(lambda){order} $0$
	(i.e., there are no \kl{nonterminals} \kl{of type} $\alpha_1\arr\dots\arr\alpha_k\arr\otyp$ with $k\geq 1$ and $\ord(\alpha_k)\geq 1$).
\end{lemmaOK}


\begin{proof}
\AP
	We say that a \kl{type} is \intro{bad} if it is of the form $\alpha_1\arr\dots\arr\alpha_k\arr\otyp$ with $k\geq 1$ and $\ord(\alpha_k)\geq 1$.
	We add one additional \kl(lambda){order}-0 parameter to every \kl{lambda-term} of \kl{bad type}.
	More formally, recall that \kl{rules} of $\Gg$ are of the form $\Rr(X)=\lambda x_1\lamdots\lambda x_k.K$, where $K$ is an applicative term \kl{of type} $\otyp$.
	If the \kl{type of} $X$ is \kl{bad}, we replace this \kl{rule} by $\lambda x_1\lamdots\lambda x_k.\lambda \vary.M$ for a fresh \kl(lambda){variable} $\vary$ \kl{of type} $\otyp$.
	Note that in $\Lambda(\Gg)$ this inserts an additional \kl{lambda-binder} $\lambda \vary$
	between $\lambda x$ and $K$ in every \kl{subterm} of the form $\lambda x.K$ with $\ord(x)\geq 1$ and $\ord(K)=0$.
	Simultaneously, we replace every \kl{application} $K\applyto L$ with $\ord(L)\geq 1$ and $\ord(K\,L)=0$ by $K\applyto L\applyto\bot$:
	whenever the last argument is applied to a \kl{lambda-term} having a \kl{bad type}, we apply an additional \kl(lambda){order}-0 argument, which is chosen to be $\bot$
	(but can be any \kl{lambda-term of type} $\otyp$).
	This changes the \kl{types of} \kl{lambda-terms} as follows:
	every \kl{type} $\alpha=(\alpha_1\arr\dots\arr\alpha_k\arr\otyp)$ changes
	\begin{compactitem}
	\item	to $\alpha_1'\arr\dots\arr\alpha_k'\arr\otyp\arr\otyp$ if $\alpha$ was \kl{bad}, and
	\item	to $\alpha_1'\arr\dots\arr\alpha_k'\arr\otyp$ otherwise,
	\end{compactitem}
	where $\alpha_1',\dots,\alpha_k'$ are obtained by the same transformation applied to the \kl{types} $\alpha_1,\dots,\alpha_k$.
	It is tedious but straightforward to formally check that this way we obtain a valid \kl{recursion scheme} $\Hh$,
	and that it \kl{generates} the same \kl{tree} as $\Gg$.
\end{proof}

\subsection{Reification: Defining \texorpdfstring{$\reify\Gg$}{G*}}
\label{sec:reification}

\renewrobustcmd{\Xx}{{\kl[\Xx]{\mathcal{X}}}}
\knowledge\Xx{math notion}

\AP
Fix some \kl(scheme){normalizing} \kl(scheme){homogeneous} \kl{safe recursion scheme} $\Gg=\tuple{\Alphabet,\Nn,X_0,\Rr}$,
where every \kl{nonterminal} of positive \kl(lambda){order} takes some parameter of \kl(lambda){order} $0$.
Let $\intro*\Xx$ be the (finite) set of \kl(lambda){order}-$0$ \kl(lambda){variables} used for parameters in $\Gg$.

\AP
The \intro{maximal arity} of a \kl{type} $\alpha$, denoted $\intro*\mar(\alpha)$ is defined by induction:
\begin{align*}
	\mar(\alpha_1\arr\dots\arr\alpha_k\arr\otyp)=\max\big(\set{k}\cup\set{\mar(\alpha_i)\mid 1\leq i\leq k}\big).
\end{align*}
The \reintro{maximal arity} of a \kl{lambda-term} $M$, denoted $\mar(M)$, equals
\begin{align*}
	\mar(M)=\sup\set{\mar(\alpha)\mid\alpha\mbox{ is a \kl{type of} a \kl{subterm} of }M}.
\end{align*}
Finally, the \reintro{maximal arity} of a \kl{recursion scheme} $\Gg=\tuple{\Alphabet,\Nn,X_0,\Rr}$, denoted $\mar(\Gg)$, equals
\begin{align*}
	\mar(\Gg)=\max\set{\mar(X)\mid X\in\Nn}.
\end{align*}

Observe that $\mar(\Rr(X))\leq\mar(\Gg)$ for every \kl{nonterminal} $X$ of $\Gg$
(because the only \kl(lambda){variables} occurring in $\Rr(X)$ are \kl{nonterminals} of $\Gg$ and parameters of $X$).
It follows that $\mar(\Lambda(\Gg))\leq\mar(\Gg)$.

\AP
We say that $M$ is an \intro{input lambda-term} if
\begin{compactitem}
\item	$M$ uses \kl{letters} from the alphabet $\Alphabet$;
\item	all \kl(lambda){order}-$0$ \kl(lambda){variables} used in $M$, other than \kl{nonterminals} from $\Nn$, belong to $\Xx$;
\item	\kl{nonterminals} from $\Nn$ are not used in \kl{lambda-binders} in $M$;
\item	\kl{types of} all \kl{subterms} of $M$ are \kl(type){homogeneous};
\item	for every lambda-abstraction \kl{subterm} $\lambda x_1\lamdots\lambda x_k.K$ of $M$,
	where $k\geq 1$ and $K$ is not a lambda-abstraction,
	we have $\ord(x_k)=\ord(K)=0$;
\item	$\mar(M)\leq\mar(\Gg)$;
\itemAP	no \kl{subterm} of $M$ is an \intro{infinite application} $\cdots\applyto M_3\applyto M_2\applyto M_1$.
\end{compactitem}
Note that the above conditions are satisfied by $M=\Rr(X)$ for all \kl{nonterminals} $X\in\Nn$, as well as by $M=\Lambda(\Gg)$.
Moreover, every \kl{subterm} of an \kl{input lambda-term} is an \kl{input lambda-term}.

We additionally require that in a \kl{first-order} \kl{input lambda-term} $M$
every \kl(lambda){free variable} of $M$ belongs to $\Xx$ (i.e., no \kl{nonterminals}, even of \kl(lambda){order} $0$, may occur in $M$).

\AP
We define a function $\reify {(\_)}$, called \intro{reification};
it maps an \kl{input lambda-term} $M$ of a \kl{homogeneous type} $\alpha$
to a corresponding \kl{lambda-term} $\reify M$ of a \kl{homogeneous type} $\reify \alpha$,
using \kl{letters} from the alphabet $\Alphabet_\Xx$, defined in \cref{sec:model-checking}.
We also say that the \kl{lambda-term} $\reify M$
\intro{represents} the \kl{lambda-term} $M$.
Moreover, if $M$ is \kl{first-order},
then $\reify M$ is in fact a \kl{lambda-tree}, that is, it does not contain \kl(lambda){variables} nor \kl{lambda-binders} (cf.~\cref{lem:reification:basic}).

\AP
For every \kl{homogeneous type} $\alpha$, the \kl{type} $\intro*\reify \alpha$ is defined by induction on the structure of $\alpha$:
if $$\alpha \ =\ \alpha_1\arr\dots\arr\alpha_k\arr\otyp\arr\dots\arr\otyp\arr\otyp,$$
where $k=0$ or $\alpha_k\neq\otyp$,
then we take
$$\reify \alpha \ =\ \reify{\alpha_1}\arr\dots\arr\reify{\alpha_k}\arr\otyp.$$
In other words, \kl(lambda){order}-0 arguments are discarded
and the transformation is applied recursively to higher-order arguments.
For instance, $\reify \otyp = \otyp$,
$\reify {(\otyp \arr \otyp)} = \otyp$,
and $\reify {((\otyp \arr \otyp) \arr \otyp \arr \otyp)} = \otyp \arr \otyp$.
It is easy to see (by induction on the structure of $\alpha$) that $\ord(\reify \alpha)=\max(0,\ord(\alpha)-1)$.

We now define \kl{reification} of an \kl{input lambda-term} $M$.
First, to every \kl{nonterminal} $X\in\Nn$ \kl{of type} $\alpha$ we assign a unique \kl{nonterminal} $\reify X$ \kl{of type} $\reify \alpha$.
Likewise, to every \kl(lambda){variable} $x\not\in(\Xx\cup\Nn)$ \kl{of type} $\alpha$ we assign a unique \kl(lambda){variable} $\reify x$ \kl{of type} $\reify \alpha$.
Next, we proceed by coinduction on the structure of $M$:
\begin{compactenum}
\item	$\reify{(a)}=\tconst{a}$;
\item	$\reify{(X)}=\reify X$ if $X\in\Nn$ (i.e., the result of the $\reify {(\_)}$ operation for a \kl{nonterminal} $X$ is the \kl{nonterminal} denoted $\reify X$);
\item	$\reify{(x)}=\tvar{x}$ if $x\not\in\Nn$ and $\ord(x)=0$ (i.e., if $x\in\Xx$);
\item	$\reify{(x)}=\reify x$ if $x\not\in\Nn$ and $\ord(x)>0$;
\item	$\reify{(\lambda x.K)}=\tlambda{x}\applyto \reify K$ if $\ord(x)=0$ (i.e., if $x\in\Xx$);
\item	$\reify{(\lambda x.K)} = \lambda \reify x.\reify K$ if $\ord(x)>0$;
\item	$\reify{(K\applyto L)}=\tat\applyto \reify K\applyto \reify L$ if $\ord(L)=0$;
\item	$\reify{(K\applyto L)} =\reify K \applyto \reify L$ if $\ord(L)>0$.
\end{compactenum}
Observe (by coinduction) that if $M$ \kl{has type} $\alpha$ then $\reify M$ is a \kl{lambda-term of type} $\reify \alpha$.
This is immediate in Cases 2, 3, and 4.
In Case 1, a \kl{letter} $a$ \kl{has type} of the form $\alpha=(\otyp\arr\dots\arr\otyp\arr\otyp)$, while $\tconst{a}$ \kl{has type} $\reify \alpha=\otyp$.
In Case 5 we use the assumption that the \kl[type of]{type} $\otyp\arr\beta$ of $\lambda x.K$ is \kl(type){homogeneous}, which implies $\ord(K)=\ord(\beta)\leq 1$, that is, $\ord(\reify K)=0$.
Likewise in Case 7 we use the assumption that the \kl[type of]{type} $\otyp\arr\beta$ of $K$ is \kl(type){homogeneous}, which implies $\ord(K)=\ord(\otyp\arr\beta)\leq 1$, that is, $\ord(\reify K)=0$.
This is necessary, because the \kl{lambda-terms} $\tlambda{x}\applyto \reify K$ and $\tat\applyto \reify K\applyto \reify L$ make sense only if $\ord(\reify K)=0$.
In Cases 6 and 8 we observe that $\reify {(\beta\arr\gamma)}=\reify \beta \arr \reify \gamma$ if $\ord(\beta)>0$.

There is one delicate point of the definition above.
Namely, \kl{lambda-terms} are usually identified up to renaming bound \kl(lambda){variables} (\kl{alpha-conversion}).
The result of the \kl{reification operation} $\reify {(\_)}$, however, depends on particular names given to bound \kl(lambda){order}-$0$ \kl(lambda){variables}
(these names become written explicitly in the \kl{letters} (constants) $\tvar{x}$ and $\tlambda{x}$ ).
Thus, it is understood that no implicit renaming of bound \kl(lambda){order}-0 \kl(lambda){variables} is performed for \kl{lambda-terms} to which the $\reify {(\_)}$ operation is going to be applied.

When starting from a \kl{lambda-term} that is \kl{first-order}
(defined on \cpageref{page:first-order}),
we can see that Cases 2, 4, 6, and 8 can never occur.
In such a circumstance,
\kl{reification} produces a \kl{lambda-tree}.

\begin{lemmaOK}
	\label{lem:reification:basic}
	If an \kl{input lambda-term} $M$ is \kl{first-order} then $\reify M$ is a \kl{lambda-tree}.
\QED\end{lemmaOK}

Using the \kl{reification operation} $\reify {(\_)}$ for \kl{lambda-terms},
we can define the resulting \kl{recursion scheme} $\reify\Gg$:
we take
\begin{align}
	\label[equality]{eq:reified scheme}
	\reify \Gg=\tuple{\Alphabet_\Xx,\reify \Nn,\reify {X_0},\reify \Rr},
\end{align}
where $\reify \Nn=\set{\reify X\mid X\in\Nn}$
and $\reify \Rr(\reify X)=\reify {(\Rr(X))}$ for all $X\in\Nn$.

It is easy to see that $\reify \Gg$ is of \kl(scheme){order} $m-1$
if $\Gg$ was of \kl(scheme){order} $m\geq 1$: the \kl(lambda){order} of every \kl{nonterminal}, if positive, drops by one.
Let us now observe that $\reify\Gg$ is \kl(scheme){safe}:

\begin{lemma}
	\label{lem:represent:safety}
	If $\Gg$ is \kl(scheme){safe},
	then $\reify \Gg$ is \kl(scheme){safe}.
\end{lemma}

\begin{proof}
	Recall that, by definition, $\Gg$ is \kl(scheme){safe} when the \kl{lambda-term} $\Lambda(\Gg)$ is \kl(lambda){safe}; likewise for $\reify\Gg$ and $\Lambda(\reify\Gg)$.
	First, it is easy to see that $\Lambda(\reify \Gg)=\reify {(\Lambda(\Gg))}$.
	In order to ensure that $\reify \Gg$ is \kl(scheme){safe}, we thus need to ensure that every \kl{subterm} of $\reify {(\Lambda(\Gg))}$
	occurring in argument position of some \kl{application} is \kl{superficially safe}.
	Subterms occurring in argument position of an \kl{application} in $\reify {(\Lambda(\Gg))}$ are
	\begin{compactitem}
	\item	$\reify K$ in $\tlambda{x}\applyto \reify K$,
	\item 	$\reify K$ and $\reify L$ in $\tat\applyto \reify K\applyto \reify L$, and
	\item	$\reify L$ in $\reify K\applyto \reify L$.
	\end{compactitem}
	In the first two cases, the \kl{subterms} are of \kl(lambda){order} $0$, so they are automatically \kl{superficially safe}.
	In the last case, $L$ occurs in argument position of the \kl{application} $K\applyto L$ in $\Lambda(\Gg)$, which means that $L$ is \kl{superficially safe};
	we have $\ord(x)\geq\ord(L)$ for every \kl(lambda){free variable} $x$ of $L$.
	Every \kl(lambda){free variable} of $\reify L$ is of the form $\reify x$ for $x$ being a \kl(lambda){free variable} of $L$;
	we then have $\ord(\reify x)=\max(0,\ord(x)-1)\geq\max(0,\ord(L)-1)=\ord(\reify L)$, as required.
\end{proof}

The relation between $\reify \Gg$ and $\Gg$ is described by the following lemma:

\begin{lemmaOK}\label{lem:1}
	There exists a \kl{closed} \kl{first-order} \kl{input lambda-term} $M$ \kl{of type} $\otyp$ such that
	\begin{align*}
		\BT{\reify \Gg} = \reify M
			&&\text{and}&&
				\BT M = \BT\Gg.
	\end{align*}
\end{lemmaOK}

Notice that there is at most one \kl{lambda-term} $M$
such that $\BT{\reify \Gg}=\reify M$.
So the \kl{lambda-term} $M$ in the lemma above is in fact unique.
The remaining part of this subsection is devoted to the proof of \cref{lem:1}.

First, let us see that \kl(lambda){safety} is preserved by \kl{beta-reductions}:\footnote{%
	Blum and Ong~\cite{safe} write that safety is not preserved by arbitrary \kl{beta-reductions}, only by \kl{beta-reductions} of a special kind.
	Note, however, that they consider a slightly different definition of \kl(lambda){safe} \kl{lambda-terms} (leading to the same definition of \kl{safe recursion schemes}).}

\begin{lemmaOK}\label{lem:safety-preserved}
	If $M$ is \kl(lambda){safe} and $M\reducesto N$, then $N$ is \kl(lambda){safe}.
\end{lemmaOK}

\begin{proof}
	Before starting, let us state two inductive properties of \kl(lambda){safety}, following directly from its definition:
	\begin{compactenum}
	\item[\quad Inductive Property 1:] $\lambda x.P$ is \kl(lambda){safe} if, and only if, $P$ is \kl(lambda){safe};
	\item[\quad Inductive Property 2:] $P\applyto Q$ is \kl(lambda){safe} if, and only if, $P$ and $Q$ are \kl(lambda){safe}, and $Q$ is \kl{superficially safe}.
	\end{compactenum}
	Next, let us prove an auxiliary claim concerning \kl{substitution}:
	
	\begin{claimOK}\label{claim:safety-preserved}
		If $K$ and $L$ are \kl(lambda){safe}, and $L$ is \kl{superficially safe}, then $\subst K L x$ is \kl(lambda){safe}.
	\end{claimOK}
	
	We prove this claim by structural coinduction.
	When $x$ is not \kl(lambda){free} in $K$, or when $K=x$, then $\subst K L x$ equals $K$ or $L$, respectively, and the thesis holds by assumption.
	When $K=\lambda y.P$, the thesis is an immediate consequence of the coinduction hypothesis and Inductive Property 1.
	The only remaining case is that $K=P\applyto Q$.
	By the coinduction hypothesis we obtain that $\subst P L x$ and $\subst Q L x$ are safe.
	To conclude, we also need to know that $\subst Q L x$ is \kl{superficially safe} (cf.~Inductive Property 2).
	If $x$ is not \kl(lambda){free} in $Q$,
	then this is immediate: $\subst Q L x=Q$
	and the latter is superficially safe by assumption.
	Otherwise, every \kl(lambda){free variable} $y$ of $\subst Q L x$ is \kl(lambda){free} either in $Q$ or in $L$.
	In the former case we simply have that $\ord(y)\geq\ord(Q)=\ord(\subst Q L x)$, because $Q$ is \kl{superficially safe};
	in the latter case we have $\ord(y)\geq\ord(L)=\ord(x)\geq\ord(Q)={}\ord(\subst Q L x)$, because $L$ and $Q$ are \kl{superficially safe} and $x$ is \kl(lambda){free} in $Q$.
	It follows that $\subst Q L x$ is \kl{superficially safe} and thus $\subst K L x$ is \kl(lambda){safe}, as required.
	
	We can now come back to the proof of \cref{lem:safety-preserved},
	which we perform by induction on the depth of the considered redex.
	The base case, when $M=(\lambda x.K)\applyto L$ and $N=\subst K L x$, is provided directly by \cref{claim:safety-preserved}
	(note that $L$ occurs in argument position in $M$, so it is \kl{superficially safe} by \kl(lambda){safety} of $M$).
	For the induction step, we have three cases:
	\begin{compactenum}
	\item $M=\lambda x.P$ and $N=\lambda x.P'$, where $P\reducesto P'$;
	\item $M=P\applyto Q$ and $N=P'\applyto Q$, where $P\reducesto P'$;
	\item $M=P\applyto Q$ and $N=P\applyto Q'$, where $Q\reducesto Q'$.
	\end{compactenum}
	In the first two cases, we simply use the induction hypothesis for $P\reducesto P'$.
	In the last case, we also need to observe that $Q'$ is \kl{superficially safe},
	which holds because $Q$ is \kl{superficially safe}, and every \kl(lambda){free variable} of $Q'$ is \kl(lambda){free} already in $Q$.
\end{proof}

\paragraph{Beta-reductions of positive order.}

\AP
It is useful to consider \intro{beta-reductions of positive order}, denoted ``$\reducesplus$'':
We have $M\reducesplus N$ if $N$ is obtained from $M$ by replacing some \kl{subterm} $(\lambda x.K)\applyto L$ thereof with $\subst K L x$,
where we additionally require $\ord(x)\geq 1$.

We use the ``$\reducesplus$'' relation only for \kl(lambda){safe} \kl{lambda-terms},
and when writing $M\reducesplus N$ we implicitly assume that names of bound \kl(lambda){order}-0 \kl(lambda){variables} do not change.
Note that if $M$ is \kl(lambda){safe}, then the argument $L$ of the redex is {superficially safe}.
It follows that every \kl(lambda){free variable} $y$ of $L$ satisfies $\ord(y)\geq{}\ord(L)=\ord(x)\geq 1$.
In other words, $L$ has no \kl(lambda){free variables} of \kl(lambda){order} $0$.
Thus there is no danger that these \kl(lambda){variables} will conflict with bound \kl(lambda){order}-0 \kl(lambda){variables} in $K$;
there is never the need to rename bound \kl(lambda){order}-0 \kl(lambda){variables}.

Recall that the $(\_)^\bullet$ operation is defined only for \kl{input lambda-terms}, as defined at the beginning of the subsection.
With the above assumption in hand, we have that if $M$ is a \kl(lambda){safe} \kl{input lambda-term} and $M\reducesplus N$, then $N$ is also an \kl{input lambda-term}
(most importantly, all \kl(lambda){order}-$0$ \kl(lambda){variables} used in $N$, other than \kl{nonterminals}, belong to $\Xx$);
in particular, it makes sense to write $N^\bullet$.

Our next lemma connects the ``$\reducesplus$'' relation
with the ``$\reducesto$'' relation and \kl{reification}:

\begin{lemmaOK}\label{lem:rei:beta}
	Let $M$ be a \kl(lambda){safe} \kl{input lambda-term}.
	\begin{compactenum}
	\item	If $M\reducesplus N$, then $\reify M\reducesto\reify N$.
	\item	If $\reify M \reducesto O$, then $O = \reify N$ for a \kl{lambda-term} $N$ such that $M \reducesplus N$.
	\end{compactenum}
\end{lemmaOK}

In order to prove \cref{lem:rei:beta}, we first need to see that higher-order \kl{substitution} commutes with \kl{reification}:

\begin{lemmaOK}\label{lem:rei:subst}
	For every \kl{input lambda-term} of the form $\subst K L x$, where $\ord(x)\geq 1$,
	we have
	\begin{align*}
		\reify{(\subst K L x)} = \subst {\reify K} {\reify L} {\reify x}.
	\end{align*}
\end{lemmaOK}

\begin{proof}
	Follows directly from the definition of \kl{reification}.
\end{proof}

In \cref{lem:rei:subst} we implicitly assume that the \kl{substitution} $\subst K L x$ does not change names of bound \kl(lambda){order}-0 \kl(lambda){variables} in $K$.
As already said, it is never needed to rename them if $L$ does not have \kl(lambda)[free variables]{free \kl(lambda)[order]{order-0} variables}, that is,
when $(\lambda x.K)\applyto L$ is a \kl{subterm} of a \kl(lambda){safe} \kl{lambda-term}.
Note also that \cref{lem:rei:subst} does not make sense when $x$ has \kl(lambda){order} zero, because in that case there is no \kl(lambda){variable} $\reify x$
(the \kl(lambda){variable} $x$ is \kl{reified} to $\tconst x$, which is a \kl{letter}).

\begin{proofof}{\cref{lem:rei:beta}}
	For the first item, suppose that $N$ is obtained from $M$ by replacing a redex $(\lambda x.K)\applyto L$ with $\subst K L x$, where $\ord(x)=\ord(L)\geq 1$.
	Then in $\reify M$ we have a redex
	\begin{align*}
		\reify{((\lambda x.K)\applyto L)}=(\lambda \reify x.\reify K)\applyto \reify L,
	\end{align*}
	which \kl{beta-reduces} to $\subst {\reify K} {\reify L} {\reify x}=\reify{(\subst K L x)}$ (equality by \cref{lem:rei:subst}).
	We thus have $\reify M\reducesto\reify N$.
	
	For the second item, observe that the definition of \kl{reification} produces a \kl{lambda-binder} only in Case 6,
	and an \kl{application} whose operator is not a \kl{letter} only in Case 8.
	Thus the redex of $\reify M$ reduced in $\reify M\reducesto O$
	is necessarily of the form
	\begin{align}
		\label[formula]{eq:redex:base}
		\reify{((\lambda x.K)\applyto L)}
		= (\lambda \reify x.\reify K) \applyto \reify L
		\reducesto \subst {\reify K} {\reify L} {\reify x}
		= \reify{(\subst K L x)},
	\end{align}
	where $\ord(x)=\ord(L)\geq 1$,
	and where the second equality follows from \cref{lem:rei:subst}.
	Let $N$ be obtained from $M$ by reducing (the corresponding occurrence of) $(\lambda x.K)\applyto L$ to $\subst K L x$,
	and thus $M \reducesplus N$.
	A structural induction on the \kl{subterms} of $O$
	(the base case being provided by \cref{eq:redex:base})
	shows $O=\reify N$,
	as required.
\end{proofof}

One of consequences of \cref{lem:rei:beta} is the Church-Rosser property for $\reducesplus$:

\begin{lemmaOK}\label{lem:church-rosser}
	If $M\reducesplus^*N_1$ and $M\reducesplus^*N_2$ for a \kl(lambda){safe} \kl{input lambda-term} $M$,
	then $N_1\reducesplus^* P$ and $N_2\reducesplus^*P$ for some \kl{lambda-term} $P$.
\end{lemmaOK}

\begin{proof}
	By Item 1 of \cref{lem:rei:beta} (and using also \cref{lem:safety-preserved} to ensure that \kl{lambda-terms} under consideration are \kl(lambda){safe})
	we have $\reify M\reducesto^*\reify{N_1}$ and $\reify M\reducesto^*\reify{N_2}$.
	The Church-Rosser property for $\reducesto$ gives us a \kl{lambda-term} $O$ such that $\reify {N_1}\reducesto^*O$ and $\reify{N_2}\reducesto^*O$.
	Then, by Item 2 of \cref{lem:rei:beta} (and again by \cref{lem:safety-preserved})
	we obtain \kl{lambda-terms} $P_1$ and $P_2$ such that $O=\reify{P_1}=\reify{P_2}$, and $N_1\reducesplus^* P_1$ and $N_2\reducesplus^*P_2$.
	Observing that the \kl{reification operation} $\reify{(\_)}$ is injective, we actually have $P_1=P_2$, so this \kl{lambda-term} can be taken as $P$ in the thesis.
\end{proof}

Let $M$ be a (possibly infinite) \kl(lambda){safe} \kl{input lambda-term} of \kl(lambda){order} at most $1$ (we mean here the \kl(type){order} of the \kl{type of} $M$;
\kl{subterms} of $M$ may have higher \kl(lambda){order})
such that all \kl(lambda){free variables} thereof belong to $\Xx$.
\AP
We define the \kl{first-order} lambda-term obtained as the \intro{limit} of applying $\reducesplus$ reductions to $M$, denoted $\limbp M$,
analogously to how $\BT P$ is defined as the limit of applying the $\reducesto$ reductions to a \kl{closed} \kl{lambda-term} $P$ \kl{of type} $\otyp$.
The definition is coinductive:
\begin{compactitem}
\item	if $M\reducesplus^* a$ (for a \kl{letter} $a$), then $\limbp M=a$,
\item	if $M\reducesplus^* x$ (for a \kl(lambda){variable} $x\in\Xx$), then $\limbp M=x$,
\item	if $M\reducesplus^* \lambda x.N$ with $x\in\Xx$, then $\limbp M=\lambda x.(\limbp N)$, and
\item	if $M\reducesplus^* K\applyto L$ with $\ord(L)=0$, then $\limbp M=(\limbp K)\applyto(\limbp L)$.
\end{compactitem}
Clearly $\limbp M$ is a \kl{first-order} \kl{input lambda-term} of the same \kl[of type]{type} as $M$.


Observe that the above definition covers all possibilities (i.e., some of the above conditions holds for every $M$).
To this end consider all possible forms of $M$.
If $M=a$ or $M=K\applyto L$ with $\ord(L)=0$, we have the first or the last case of the definition, respectively.
If $M=x$, then $x\in\Xx$ by the assumption that all \kl(lambda){free variables} of $M$ belong to $\Xx$; we have the second case.
If $M=\lambda x.N$, then $\ord(x)=0$ by the assumption that $\ord(M)\leq 1$, hence $x\in\Xx$ (because $M$ is an \kl{input lambda-term});
we have the third case.
The only remaining case is that $M$ is an \kl{application} with an argument of positive \kl(lambda){order}.
Because $M$ is an \kl{input lambda-term}, it cannot be an \kl{infinite application}.
Thus, $M$ can be written as $H\applyto M_1\applyto\dots\applyto M_r$, where $H$ is not an \kl{application}, $r\geq 1$, and $\ord(M_r)\geq 1$.
Then $H$ cannot be a \kl{letter} (arguments of a \kl{letter} are all \kl{of type} $\otyp$) nor a \kl(lambda){variable} (all \kl(lambda){free variables} of $M$ are \kl{of type} $\otyp$, because they belong to $\Xx$);
$H$ has to start with a sequence of \kl{lambda-binders}: $H=\lambda x_1\lamdots\lambda x_k.K$, where $k\geq 1$ and $K$ does not start with a \kl{lambda-binder}. 
One of the assumptions for being an \kl{input lambda-term} implies that $\ord(x_k)=\ord(K)=0$.
Then necessarily $k\geq r$
(each of the provided arguments corresponds to some \kl{lambda-binder}),
and $\ord(x_r)={}\ord(M_r)\geq 1$ implies that $k>r$.
Moreover, because $M$ is of \kl(lambda){order} (at most) $1$, the \kl(lambda){variables} $x_{r+1},\dots,x_k$ are \kl{of type} $\otyp$.
On the other hand, $x_1,\dots,x_r$ are of \kl(lambda){order} at least $1$, by \kl(type){homogeneity}.
Thus $M\reducesplus^*\lambda x_{r+1}\lamdots\lambda x_k.K[M_1/x_1,\dots,M_r/x_r]$; we obtain the third case.

Moreover, thanks to \cref{lem:church-rosser}, the resulting \kl{lambda-term} $\limbp M$ is uniquely defined.

We now use \cref{lem:rei:beta} to show
a kind of commutativity property between \kl{reification} and \kl{Böhm trees}:

\begin{lemmaOK}\label{lem:reify-bt}
	Let $M$ be a \kl(lambda){safe} \kl{input lambda-term} of \kl(lambda){order} at most $1$,
	all \kl(lambda){free variables} of which belong to $\Xx$.
	Then $\BT{\reify M}=\reify{(\limbp M)}$.
\end{lemmaOK}

\proof 
	We proceed by coinduction.
	At every step we use \cref{lem:rei:beta} (and \cref{lem:safety-preserved} to obtain \kl(lambda){safety} of intermediate \kl{lambda-terms})
	to deduce $\reify M\reducesto^*\reify N$ from $M\reducesplus^* N$.
	According to the definition of $\limbp M$ we have four cases:
	\begin{compactitem}
	\item	If $M\reducesplus^* a$, then $\reify M\reducesto^*\reify{a}=\overline a$, so $\reify{(\limbp M)}=\reify a=\overline a=\BT{\reify M}$.
	\item	If $M\reducesplus^* x$ with $\ord(x)=0$, then $\reify M\reducesto^*\reify{x}=\overline x$, so $\reify{(\limbp M)}=\reify x=\overline x=\BT{\reify M}$.
	\item	If $M\reducesplus^* \lambda x.N$ with $\ord(x)=0$,
		then $\reify M\reducesto^*\reify{(\lambda x.N)}=\tlambda{x}\applyto N$,
		so $\reify{(\limbp M)}
			=\reify{(\lambda x.(\limbp N))}
			=\tlambda{x}\applyto\reify{(\limbp N)}
			=\tlambda{x}\applyto(\BT{\reify N})
			=\BT{\reify M}$,
		where the third equality is by the coinductive hypothesis.
	\item	If $M\reducesplus^* K\applyto L$ with $\ord(L)=0$,
		then $\reify M\reducesto^*\reify{(K\applyto L)}=\tat\applyto\reify K\applyto\reify L$,
		so
		\begin{align*}
		\reify{(\limbp M)}
			&=\reify{((\limbp K)\applyto(\limbp L))}
			=\tat\applyto\reify{(\limbp K)}\applyto\reify{(\limbp L)}\\
			&=\tat\applyto(\BT{\reify K})\applyto(\BT{\reify L})
			=\BT{\reify M},
		\end{align*}
		where the third equality is by the coinductive hypothesis.
	\QED\end{compactitem}
\endtrivlist

The other important property of $\limbp{\cdot}$
is that all higher-order reductions can be performed first,
followed by all (necessarily) order-zero reductions.
This is formally stated in the next lemma:%
\footnote{
	We remark that \cref{lem:pos-middle} can be generalized to say that $\BT{\limbp M}=\BT M$ for any \kl{closed} \kl{input lambda-term} $M$ of \kl(lambda){order} 0, not necessarily for $M=\Lambda(\Gg)$.
	The lemma can even be further generalized to say that $\BT N=\BT M$ whenever $N$ is obtained as an (appropriately defined) limit of applying any finite or infinite sequence of \kl{beta-reductions} to $M$.
	Nevertheless, we prove only the specific statement written above---in \cref{lem:order-0-cut} we explicitly use the fact that the \kl{lambda-term} is of the form $\Lambda(\Gg)$.
}

\begin{lemmaOK}\label{lem:pos-middle}
	$\BT{\limbp{\Lambda(\Gg)}}=\BT\Gg$.
\end{lemmaOK}

Before proving \cref{lem:pos-middle},
let us see how \cref{lem:1} follows from \cref{lem:reify-bt,lem:pos-middle}:

\begin{proofof}{\cref{lem:1}}
	We take $N = \Lambda(\Gg)$ and $M=\limbp{N}$.
	It is easy to check that $M$ is a \kl{closed} \kl{first-order} \kl{input lambda-term} \kl{of type} $\otyp$.
	We have $\BT{\reify\Gg}=\BT{\Lambda(\reify\Gg)}=\BT{\reify N}=\reify{(\limbp N)}$ by \cref{lem:reify-bt},
	and $\BT M=\BT{\limbp N}=\BT N=\BT{\Gg}$ by \cref{lem:pos-middle}.
\end{proofof}

It remains to prove \cref{lem:pos-middle}.
Our proof strategy is to show that the two \kl{Böhm trees} mentioned in the lemma are equal by showing that they agree on every finite prefix.
To this end, we have to define finite \kl{cuts} of a \kl{lambda-term}.

\paragraph{Finite \kl{cuts}.}

\AP
For every \kl{type} $\alpha$ let us fix a fresh \kl(lambda){variable} $\varx_\bot^\alpha$ \kl{of type} $\alpha$, not occurring anywhere in $\Lambda(\Gg)$, and called a \intro{cut variable}.
We say that $F$ is a \intro{cut} of $M$ if $F$ is obtained from $M$ by replacing some of its \kl{subterms} with \kl{cut variables} (of appropriate \kl[of type]{type}).
For example, $\lambda\vary.\varx_\bot^{\otyp\arr\otyp}$ is a \kl{cut} of $\lambda\vary.\lambda\varz.\leta\applyto\vary\applyto\varz$:
we have replaced the \kl{subterm} $\lambda\varz.\leta\applyto\vary\applyto\varz$
\kl{of type} $\otyp\arr\otyp$ with the \kl(lambda){variable} $\varx_\bot^{\otyp\arr\otyp}$.
We are particularly interested in finite \kl{cuts}, that is, \kl{cuts} that are finite \kl{lambda-terms}.

\AP
We say that a \kl{cut} $F$ is an \intro{order-0 cut} if the only \kl{cut variable} occurring in $F$ is $\varx_\bot^{\otyp}$ (i.e., only \kl{subterms} \kl{of type} $\otyp$ are cut off).
We have the following nice property of the \kl{lambda-term} $\Lambda(\Gg)$:

\begin{lemmaOK}\label{lem:order-0-cut}
	For every finite \kl{cut} $F$ of $\Lambda(\Gg)$ there exists a finite \kl{order-0 cut} $F_0$ of $\Lambda(\Gg)$ such that $F$ is a \kl{cut} of $F_0$.
\end{lemmaOK}

\begin{proof}
	Consider a \kl{subterm} of $\Lambda(\Gg)$ that was replaced by $\varx_\bot^\alpha$.
	It is necessarily of the form $K[M_1/X_1,\dots,\allowbreak M_k/X_k]$, where $K$ is a \kl{subterm} of $\Rr(X)$ for some \kl{nonterminal} $X$, hence $K$ is finite.
	Every \kl{lambda-term} $M_i$, substituted for the \kl{nonterminal} $X_i$,
	is obtained by further substituting \kl{lambda-terms} in $\Rr(X_i)$, hence it is of the form $\lambda x_{i,1}\lamdots\lambda x_{i,n_i}.K_i$, where $K_i$ is \kl{of type} $\otyp$.
	Instead of cutting off the whole $K[M_1/X_1,\dots,M_k/X_k]$, we can rather cut off at every occurrence of $K_i$.
	Our \kl{cut} remains finite, but all \kl{cut variables} are \kl{of type} $\otyp$.
\end{proof}

The next lemma says that the relation of being a \kl{cut}
is a simulation with respect to ``$\reducesto$'' and ``$\reducesplus$'' reductions.


\begin{lemmaOK}\label{lem:cut-beta}
	Let $F$ be a \kl{cut} of $M$.
	\begin{compactenum}
		\item If $F\reducesto G$, then $G$ is a \kl{cut} of a \kl{lambda-term} $N$ such that $M\reducesto N$.
		\item Likewise, if $F\reducesplus G$, then $G$ is a \kl{cut} of a \kl{lambda-term} $N$ such that $M\reducesplus N$.
	\end{compactenum}
\end{lemmaOK}

\begin{proof}
	We just reduce the redex of $M$ whose \kl{cut} was reduced in $F\reducesto G$ (in $F\reducesplus G$, respectively).
	It is easy to check that $G$ is indeed a \kl{cut} of the resulting \kl{lambda-term} $N$.
\end{proof}

\AP
We also need to state formally in which sense a \kl{lambda-term} agrees with a finite prefix of a \kl{tree}.
Let $n\in\Nat$, let $M$ be a \kl{lambda-term}, and let $T$ be a \kl{tree}.
We define when $M$ \intro[agrees]{agrees with $T$ up to level $n$}, by induction on $n$:
\begin{compactitem}
\item	every $M$ agrees with every $T$ up to level $0$;
\item	$M$ agrees with $T$ up to level $n+1$ if $M=a\applyto M_1\applyto\dots\applyto M_r$, $T=a\applyto T_1\applyto\dots\applyto T_r$,
	and $M_i$ agrees with $T_i$ up to level $n$, for every $i\in\set{1,\dots,r}$.
\end{compactitem}
The next lemma says that every finite prefix of $\BT M$ depends only on some finite prefix of $M$:

\begin{lemmaOK}\label{lem:finite-is-enough}
	Let $M$ be a \kl{closed} \kl(scheme){normalizing} \kl{lambda-term of type} $\otyp$.
	For every $n\in\Nat$ there exists a finite \kl{cut} $F$ of $M$, and a \kl{lambda-term} $G$ such that $F\reducesto^*G$ and $G$ \kl[agrees]{agrees with $\BT M$ up to level $n$}.
\QED\end{lemmaOK}

We skip the proof of \cref{lem:finite-is-enough},
which is a standard fact.
A very similar lemma is shown for instance in Parys~\cite[Lemma 4.2]{Parys:FSTTCS:2017}.
In \cref{lem:finite-is-enough} it is important that $M$ is \kl(lambda){normalizing}, so that every \kl{node} of $\BT M$ is created after finitely many reductions from $M$.
When $M=\Lambda(\Gg)$, we can strengthen \cref{lem:finite-is-enough} as follows:

\begin{lemmaOK}\label{lem:finite-order-0-is-enough}
	For every $n\in\Nat$ there exists a finite \kl{order-0 cut} $F_0$ of $\Lambda(\Gg)$,
	and a \kl{lambda-term} $G_0$ such that $F_0\reducesto^*G_0$ and $G_0$ \kl[agrees]{agrees with $\BT\Gg$ up to level $n$}.
\end{lemmaOK}

\begin{proof}
	First, from \cref{lem:finite-is-enough} we obtain a finite \kl{cut} $F$ of $\Lambda(\Gg)$,
	and a \kl{lambda-term} $G$ such that $F\reducesto^*G$ and $G$ \kl[agrees]{agrees with $\BT\Gg=\BT{\Lambda(\Gg)}$ up to level $n$}.
	It is not necessarily an \kl{order-0 cut}, but by \cref{lem:order-0-cut} we can extend it to a finite \kl{order-0 cut} $F_0$ (such that $F$ is a \kl{cut} of $F_0$).
	Then, by \cref{lem:cut-beta} we know that $G$ is a \kl{cut} of some $G_0$ such that $F_0\reducesto^*G_0$.
	It is easy to see that if $G$ \kl[agrees]{agrees with} some \kl{tree} (in particular, with $\BT\Gg$) \kl[agrees]{up to some level $n$},
	and $G$ is a \kl{cut} of $G_0$, then also $G_0$ \kl[agrees]{agrees with this \kl{tree} up to the same level $n$}.
\end{proof}

\begin{lemmaOK}\label{lem:equality-by-cut}
	Let $H$ be a \kl{cut} of two \kl{trees}, $T_1$ and $T_2$.
	If $H$ \kl[agrees]{agrees with $T_1$ up to some level $n$}, then both $H$ and $T_1$ \kl[agrees]{agree with $T_2$ up to level $n$}.
\end{lemmaOK}

\begin{proof}
	Straightforward: if $H$ \kl[agrees]{agrees with $T_1$ up to some level $n$}, then \kl{cut variables} may appear in $H$ only below this level.
\end{proof}

Recall that a \kl{lambda-term} is in beta-normal form
if it does not contain any redex.

\begin{lemmaOK}\label{lem:cut-bt}
	Let $H$ be a finite \kl{order-0 cut} of a \kl{closed} \kl{lambda-term} $M$ \kl{of type} $\otyp$.
	If $H$ is in beta-normal form, then $H$ is also a \kl{cut} of $\BT M$.
\end{lemmaOK}

\begin{proof}
	By induction on the size of $H$.
	Let us write $H=H_0\applyto H_1\applyto\dots\applyto H_r$, where $H_0$ is not an \kl{application}.
	If $H_0$ is a \kl{letter} $a$, then $M=a\applyto M_1\applyto\dots\applyto M_r$, where for every $i\in\set{1,\dots,r}$ the \kl{lambda-term} $M_i$ is \kl{closed} and \kl{of type} $\otyp$,
	and $H_i$ is a finite \kl{order-0 cut} of $M_i$, and is in beta-normal form.
	By the induction hypothesis, every $H_i$ is also a \kl{cut} of $\BT{M_i}$, which gives the thesis due to $\BT{M}=a\applyto(\BT{M_1})\applyto\dots\applyto(\BT{M_r})$.
	If $H_0$ is a \kl(lambda){variable}, then necessarily $H_0=\varx_\bot^{\otyp}$ (because $M$ is \kl{closed}) and $r=0$; $\varx_\bot^{\otyp}$ is a \kl{cut} of every \kl{lambda-term}.
	Finally, if $H_0$ is a lambda-abstraction, then necessarily $r\geq 1$ (because the \kl{type of} the whole term $H$ is $\otyp$), which contradicts the assumption that $H$ is in beta-normal form.
\end{proof}

\begin{lemmaOK}\label{lem:cut-limbp}
	Let $M$ be a \kl(lambda){safe} \kl{input lambda-term} $M$ of \kl(lambda){order} at most $1$ such that all \kl(lambda){free variables} thereof belong to $\Xx$,
	and let $G$ be a finite \kl{order-0 cut} of $M$.
	If no ``$\reducesplus$'' reduction can be executed from $G$, then $G$ is also a \kl{cut} of $\limbp M$.
\end{lemmaOK}

\begin{proof}
	By induction on the size of $G$.
	If $G=\varx_\bot^{\otyp}$, then it is a \kl{cut} of every \kl{lambda-term}.
	If $G$ is a \kl(lambda){variable} $x$ other than $\varx_\bot^{\otyp}$, but necessarily from $\Xx$ (by assumption), then also $M=x=\limbp M$, and the thesis is clear.
	Likewise, if $G$ is a \kl{letter} $a$, then also $M=a=\limbp M$, and the thesis is clear.

	Suppose that $G=\lambda x.G'$.
	We have $M=\lambda x.M'$, where $G'$ is a finite \kl{order-0 cut} of $M'$.
	Then necessarily $x\in\Xx$ (because $M$ is an \kl{input lambda-term} and $\ord(M)\leq 1$).
	The induction hypothesis can be applied to $G'$ and $M'$, implying that $G'$ is a \kl{cut} of $\limbp{M'}$.
	Then $G$ is a \kl{cut} of $\limbp M={}\lambda x.(\limbp{M'})$.

	Next, suppose that $G=G_0\applyto G_1$ with $\ord(G_1)=0$.
	Then $M=M_0\applyto M_1$, where $G_0$ and $G_1$ are finite \kl{order-0 cuts} of $M_0$ and $M_1$, respectively.
	The induction hypothesis implies that $G_0$ and $G_1$ are also \kl{cuts} of $\limbp{M_0}$ and $\limbp{M_1}$, respectively.
	Then $G$ is a \kl{cut} of $\limbp M=(\limbp{M_1})\applyto(\limbp{M_2})$.
	
	Finally, suppose that $G=G_0\applyto G_1\applyto\dots\applyto G_r$, where $G_0$ is not an \kl{application}, $r\geq 1$, and $\ord(G_r)\geq 1$.
	Note that $G_0$ cannot be a \kl{letter} nor a \kl(lambda){variable} (\kl{of type} $\otyp$, by assumption), because they do not take arguments of positive \kl(lambda){order}.
	So $G_0$ is a lambda-abstraction.
	But $\ord(G_1)\geq\ord(G_r)\geq 1$ by \kl(type){homogeneity}, which means that ``$\reducesplus$'' can be applied to the redex $G_0\applyto G_1$, contrary to the assumption;
	thus this case is actually impossible.
\end{proof}

\begin{proofof}{\cref{lem:pos-middle}}
	In order to prove that $\BT{\limbp{\Lambda(\Gg)}}=\BT\Gg$,
	it is enough to prove that $\BT{\limbp{\Lambda(\Gg)}}$ \kl[agrees]{agrees with $\BT\Gg$ up to every level $n\in\Nat$}.
	Fix some $n\in\Nat$, and consider a finite \kl{order-0 cut} $F$ of $\Lambda(\Gg)$ such that $F\reducesto^*H$ and $H$ \kl[agrees]{agrees with $\BT\Gg$ up to level $n$}.
	The \kl{cut} $F$ exists by \cref{lem:finite-order-0-is-enough}.
	Observe also that if $H$ \kl[agrees]{agrees with $\BT\Gg$ up to level $n$},
	and $H\reducesto H'$, then $H'$ also \kl[agrees]{agrees with $\BT\Gg$ up to level $n$}.
	Recall that finite \kl[type]{simply-typed} \kl{lambda-terms} are strongly normalizing,
	which in particular means that no infinite sequence of \kl{beta-reductions} can start in $H$.
	We can thus assume from this point on, without loss of generality, that $H$ is in beta-normal form.
	
	Let also $G$ be a \kl{lambda-term} such that $F\reducesplus^*G$, but no further ``$\reducesplus$'' reductions can be executed from $G$
	(i.e., $G$ is in $\reducesplus$-normal form).
	Using strong normalization again, we have that $G\reducesto^*H$.
	Recall that $F$ is a (finite, order-0) \kl{cut} of $\Lambda(\Gg)$.
	Due to $F\reducesto^*H$, by \cref{lem:cut-beta} we know that $H$ is a \kl{cut} of a \kl{lambda-term} $M$ such that $\Lambda(\Gg)\reducesto^* M$;
	then \cref{lem:cut-bt} implies that $H$ is also a \kl{cut} of $\BT M=\BT{\Lambda(\Gg)}=\BT\Gg$.
	Likewise,
	due to $F\reducesplus^*G$, by \cref{lem:cut-beta} we know that $G$ is a (finite, order-0) \kl{cut} of a \kl{lambda-term} $P$ such that $\Lambda(\Gg)\reducesplus^* P$;
	then \cref{lem:cut-limbp} implies that $G$ is also a \kl{cut} of $\limbp P=\limbp{\Lambda(\Gg)}$.
	Having this, and due to $G\reducesto^*H$, by \cref{lem:cut-beta} we know that $H$ is a \kl{cut} of a \kl{lambda-term} $Q$ such that $\limbp{\Lambda(\Gg)}\reducesto^* Q$;
	then \cref{lem:cut-bt} implies that $H$ is also a \kl{cut} of $\BT Q=\BT{\limbp{\Lambda(\Gg)}}$.

	We thus know that $H$ is a \kl{cut} of both $\BT\Gg$ and $\BT{\limbp{\Lambda(\Gg)}}$, and that it \kl[agrees]{agrees with $\BT\Gg$ up to level $n$}.
	In such a situation \cref{lem:equality-by-cut} implies that the two \kl{trees} \kl[agrees]{agree up to level $n$}, as required.
\end{proofof}

\subsection{From the \kl{B\"ohm tree} to the \kl{derived tree}}

We have already defined a \kl{safe recursion scheme} $\reify\Gg$, being of \kl(scheme){order} smaller by one than the \kl(scheme){order} of $\Gg$,
and such that
\begin{align}\label[equalities]{eq1}
	\BT{\reify\Gg}=\reify M && \mbox{and} && \BT{M}=\BT{\Gg}
\end{align}
for some \kl{closed} \kl{first-order} \kl{input lambda-term} $M$ \kl{of type} $\otyp$ (cf.~\cref{lem:1}).
For \cref{lem:new-scheme} we rather need the equality
\begin{align}\label[equality]{eq2}
	\semdt{\Xx,\mar(\Gg)}{\BT {\reify\Gg}} = \BT \Gg.
\end{align}
Because $\Gg$ is \kl(scheme){normalizing}, $M$ is \kl(lambda){normalizing} as well
(recall that $\Gg$, resp., $M$, is \kl(lambda){normalizing} if $\BT{\Gg}$, resp., $\BT{M}$, does not contain the special \kl{letter} $\bot$).
Thus, \cref{eq2} follows immediately from \cref{eq1} and from the following lemma:

\begin{lemmaOK}
	\label{lem:reify:sucessor:BT}
	Let $M$ be a \kl{closed} \kl(lambda){normalizing} \kl{first-order} \kl{input lambda-term} \kl{of type} $\otyp$,
	and let $s=\mathit{mar}(\Gg)$.
	Then $\reify M$ is a \kl{lambda-tree} and moreover
	\begin{align*}
		\semdt{\Xx, s}{\reify M} = \BT M.
	\end{align*}
\end{lemmaOK}

While proving \cref{lem:reify:sucessor:BT},
we identify a \kl{node} with a finite sequence of numbers from $\set{1, 2, \dots}$,
which denote directions when going down the tree
($1$ for the first \kl{child}, $2$ for the second \kl{child}, and so on).
Thus, $\varepsilon$ is the \kl{root},
and the $i$-th \kl{child} of a \kl{node} $v \in \set{1, 2, \dots}^*$ is the \kl{node} $v\cdot i$.

\AP
It is convenient to consider a more restrictive notion of beta\nobreakdash-reduction,
namely head {beta\nobreakdash-reduction}.
We say that $M$ \intro{head beta-reduces} to $N$, written $M \headreducesto N$,
if $M$ can be written as
\begin{align*}
	M \;=\; (\lambda x.K)\applyto L\applyto L_1\applyto\dots\applyto L_j&& \text{(for some $j\geq 0$)},
\end{align*}
and $N = \subst K L x \applyto L_1\applyto\dots\applyto L_j$.%
\footnote{
	The usual definition of head beta-reduction allows additionally a sequence of \kl{lambda-binders} outside the two \kl{lambda-terms},
	i.e., $M$ is of the form
	$\lambda y_1\lamdots y_k. ((\lambda x.K)\applyto L\applyto L_1\applyto\dots\applyto L_j)$.
	In our case, we consider head beta-reductions only for \kl{lambda-terms of type} $\otyp$,
	and thus such a sequence of \kl{lambda-binders} does not exist, i.e., $k = 0$.	
}
When writing $M \headreducesto N$, we implicitly assume that names of bound \kl(lambda){variables} do not change.
Note that when $M$ as above is \kl{closed},
then $L$ is \kl{closed} as well,
and thus indeed there is no need to rename bound \kl(lambda){variables} in $K$ while performing \kl{head beta-reductions}.
The following is a known fact (c.f.~\cite[Paragraph 11.4.7,~``Standardization theorem'']{Barendregt:1984},
where it is attributed to Curry and Feys \cite{CurryFeys:1958}):

\begin{lemmaOK}[Standardization theorem]\label{standardization}
	The \kl{Böhm tree} can be constructed using only \kl{head beta-reductions}
	(instead of arbitrary \kl[beta-reductions]{beta-re\-duc\-tions}).
	In other words, for every \kl{closed} \kl(lambda){normalizing} \kl{lambda-term} $M$ \kl{of type} $\otyp$ we have
	\begin{align*}
		\BT M=a\applyto(\BT{M_1})\applyto\dots\applyto(\BT{M_r})
	\end{align*}
	for some \kl{lambda-term} $N=a\applyto M_1\applyto\dots\applyto M_r$ such that $M\headreducesto^* N$.
\end{lemmaOK}

The next lemma states that \kl{derived trees} are invariant under \kl{head beta-reductions}:

\begin{lemmaOK}
	\label{lem:reify:beta:derived}
	If $M \headreducesto N$,
	where $M,N$ are \kl{closed} \kl{first-order} \kl{input lambda-terms} \kl{of type} $\otyp$
	and $\reify N$ is \kl(lambda-tree){normalizing},%
	\footnote{%
		The lemma holds also when $\reify N$ is not \kl(lambda-tree){normalizing},
		but then some additional arguments are needed in the proof.
		In the following,
		we need only the version when $\reify N$ is \kl(lambda-tree){normalizing},
		for which we provide an easier argument.
	}
	then $\semdt{\Xx, s}{\reify M} = \semdt{\Xx, s}{\reify N}$.
\end{lemmaOK}

Before proving \cref{lem:reify:beta:derived}
we show immediately how it is used in the proof of \cref{lem:reify:sucessor:BT}:

\begin{proofof}{\cref{lem:reify:sucessor:BT}}
	We have already proved in \cref{lem:reification:basic}
	that \kl{reification} for a \kl{first-order} \kl{input lambda-term} results in a \kl{lambda-tree}.
	In order to prove that $\semdt{\Xx, s}{\reify M} = \BT M$, we proceed by coinduction on the \kl{Böhm tree}.
	By \cref{standardization}, we have
	\begin{align*}
		\BT M = a \applyto (\BT {M_1}) \applyto \dots\applyto(\BT {M_r})
	\end{align*}
	for some \kl{lambda-term} $N = a \applyto M_1 \applyto \dots\applyto M_r$ such that $M \headreducesto^* N$.
	By the definition of \kl{reification}, we have
	\begin{align*}
		\reify N = \tat\applyto(\dots\applyto(\tat \applyto (\tat \applyto \tconst a \applyto \reify {M_1}) \applyto \reify {M_2})\applyto\dots)\applyto\reify{M_r}.
	\end{align*}
	Let us now compute the \kl{derived tree} $\semdt{\Xx, s}{\reify N}$.
	Following the definition for several \kl{successor steps},
	we arrive at
	\begin{align*}
		\semdt{\Xx, s}{\reify N} = \semdt{\Xx, s}{\reify N,\Rdown,\varepsilon} = a \applyto
			\semdt{\Xx, s}{\reify N,\Rup_1,1^r} \applyto\dots\applyto
			\semdt{\Xx, s}{\reify N,\Rup_r,1^r},
	\end{align*}
	where $1^r$ is the \kl{node} labeled by $\overline a$
	(i.e., the \kl{node} reached by going $r$ times left from the \kl{root}).
	Performing a few more \kl{successor steps} from $(\Rup_i,1^r)$ we see, for every $i\in\set{1,\dots,r}$,
	that
	\begin{align*}
		\semdt{\Xx, s}{\reify N,\Rup_i,1^r}=\semdt{\Xx, s}{\reify N,\Rdown,1^{r-i}\cdot 2},
	\end{align*}
	where $1^{r-i}\cdot 2$ is the \kl{root} of the \kl{subtree} $\reify M_i$
	(i.e., the \kl{node} reached by going $r-i$ times left and then one time right from the \kl{root}).
	By the coinductive assumption applied to $M_i$
	we have $\semdt{\Xx, s}{\reify {M_i}} = \BT {M_i}$.
	%
	By assumption these \kl{trees} do not contain the special \kl{letter} $\bot$ (i.e., $M_i$ is \kl(lambda){normalizing}),
	so the sequence of \kl{successors} used to define $\semdt{\Xx, s}{\reify {M_i}}$ never tries to go up from the \kl{root} of $\reify {M_i}$.
	It follows that
	\begin{align*}
		\semdt{\Xx, s}{\reify N,\Rdown,1^{r-i}\cdot\nobreak 2}=\semdt{\Xx, s}{\reify {M_i}}.
	\end{align*}
	Putting the pieces together yields
	\begin{align*}
		\semdt{\Xx, s}{\reify N}
			= a \applyto
			\semdt{\Xx, s}{\reify {M_1}} \applyto\dots\applyto
			\semdt{\Xx, s}{\reify {M_r}}
			= a \applyto (\BT {M_1})\applyto\dots \applyto (\BT {M_r})
			= \BT M.
	\end{align*}
	In particular, we now know that the \kl{lambda-tree} $\reify N$ is \kl(lambda-tree){normalizing}.
	Recalling that $M \headreducesto^* N$,
	we can conclude with the equality $\semdt{\Xx, s}{\reify M} = \semdt{\Xx, s}{\reify N}$
	obtained by a repeated use of \cref{lem:reify:beta:derived}.
	More precisely, consider the sequence of \kl{head beta-reductions}
	\begin{align*}
		M = N_0 \headreducesto N_1 \headreducesto \cdots \headreducesto N_k = N
	\end{align*}
	leading from $M$ to $N$.
	We can prove by induction on $i \in \set{0, \dots, k}$
	that 
	$\semdt{\Xx, s}{\reify N_{k-i}} = \semdt{\Xx, s}{\reify N}$.
	The base case $i = 0$ holds trivially.
	For the inductive case $i > 0$,
	we apply \cref{lem:reify:beta:derived} to $N_{k-i}$ and $N_{k-i+1}$; 
	we know that $\reify N_{k-i+1}$ is \kl(lambda-tree){normalizing} due to the induction hypothesis $\semdt{\Xx, s}{\reify N_{k-i+1}} = {}\semdt{\Xx, s}{\reify N}$.
	The required equality $\semdt{\Xx, s}{\reify M} = \semdt{\Xx, s}{\reify N}$
	follows by taking $i = k$.
\end{proofof}

\AP
Heading towards the proof of \cref{lem:reify:beta:derived}, we introduce some notions.
Let $M$ be a \kl{closed} \kl{first-order} \kl{input lambda-term} \kl{of type} $\otyp$,
$d \in \Dirs_{\Xx,s}$ a direction,
and $v$ a \kl{node} in the \kl{reified} \kl{lambda-tree} $\reify M$.
We call a triple $\tuple{\reify M, d, v}$ a
\intro{configuration}.
For two \kl{configurations} $c, b$
let $c \successor {\Xx, s} b$
if $b$ is the \kl[successor]{$(\Xx, s)$\nobreakdash-successor} of $c$
(recall that the \kl{successor} is unique, if defined).


\AP
Consider a \kl{configuration} $\tuple{\reify M, d, v}$.
If $v$ has a \kl{child}, let $K$ be such that the \kl{subtree} of $\reify M$ starting in the \kl{node} $v\cdot 1$ equals $\reify K$ 
(checking the definition of $\reify M$, where $M$ is \kl{first-order}, we see that all \kl{subtrees} of $\reify M$ are of this form).
We say that $\tuple{\reify M, d, v}$ is \intro{valid} if either
\begin{compactitem}
\item $d=\Rdown$,
\item $d=\Rup_x$, $v$ has a \kl{child}, and $x$ is \kl(lambda){free} in $K$, or
\item $d=\Rup_i$, $v$ has a \kl{child}, and $K$ requires at least $i$ arguments (i.e., $K$ \kl{has type} $\otyp^k\arr\otyp$ with $k\geq i$).
\end{compactitem}
We have the following lemma:

\begin{lemma}
	All \kl{configurations} $\tuple{\reify M, d, v}$ reached while computing $\semdt{\Xx, s}{\reify M}$ are \kl{valid}.
\end{lemma}

\begin{proof}
	By a case-by-case analysis of the definition of \kl[successor]{$(\Xx, s)$\nobreakdash-successor},
	we immediately see that if $c \successor {\Xx, s} b$ and $c$ is \kl{valid}, then $b$ is \kl{valid} as well.
	Additionally, if a \kl{configuration} $\tuple{\reify M, \Rdown, v}$ is \kl{valid} and the \kl{node} $v$ is labelled by $\overline a$,
	then the \kl{configurations} $\tuple{\reify M, \Rup_i, v}$ for $i\in\set{1,\dots,r}$, where $r$ is the \kl{rank} of $a$, are \kl{valid} as well.
\end{proof}

\AP
Let ``$\simul$'' be a binary relation between \kl{valid configurations}.
We say that ``$\simul$'' is a \intro{weak simulation} if,
whenever $c \simul b$ holds for two \kl{valid configurations} $b, c$, we then have
\begin{compactenum}
	\item	if $b \successor {\Xx, s} b'$,
		then there exists a \kl{valid configuration} $c'$
		such that $c \tsuccessor {\Xx, s} c'$ and $c' \simul b'$, and
	\item	if $b=\tuple{\reify N,\Rdown,v}$ and $v$ has label $\overline a$, then $c=\tuple{\reify M,\Rdown,u}$ and $u$ has the same label $\overline a$,
		and $\tuple{\reify N,\Rup_i,v}\simul\tuple{\reify M,\Rup_i,u}$ for all $i\in\set{1,\dots,r}$, where $r$ is the \kl{rank} of $a$.
\end{compactenum}
The following lemma shows that \kl{weak simulation} preserves \kl{derived trees}:
\begin{lemmaOK}
	\label{lem:sim:der}
	If ``$\simul$'' is a \kl{weak simulation}, and $\semdt {\Xx, s} b$ does not contain the special \kl{letter} $\bot$, then
	\begin{align*}
		c \simul b
			\quad\text{implies}\quad
				\semdt {\Xx, s} c = \semdt {\Xx, s} b.
	\end{align*}
\end{lemmaOK}

\begin{proof}
	We proceed by coinduction on \kl{derived trees}.
	Let $c=\tuple{\reify M,d,u}$ and $b=\tuple{\reify N,e,v}$.
	By the definition of the \kl{derived tree} $\semdt {\Xx, s} {\reify N, e, v}$
	there is a maximal sequence of \kl{successors}
	\begin{align*}
		\tuple{\reify N, e, v}
			\tsuccessor {\Xx, s}
				\tuple{\reify N, \Rdown, v'},
	\end{align*}
	where the \kl{node} $v'$ in $\reify N$ is labelled with $\tconst a$,
	and such that
	\begin{align*}
		\semdt {\Xx, s} {\reify N, e, v}
			&= a \applyto \semdt {\Xx, s} {\reify N, \Rup_1, v'} \applyto \cdots \applyto \semdt {\Xx, s} {\reify N, \Rup_r, v'},
	\end{align*}
	where $r$ is the \kl{rank} of the \kl{letter} $a$.
	By assumption $\tuple{\reify M, d, u} \simul \tuple{\reify N, e, v}$.
	Recall that ``$\simul$'' is a \kl{weak simulation},
	thus a repeated \kl{application} of the first item in the definition of a \kl{weak simulation} shows
	\begin{align*}
		\tuple{\reify M, d, u}
			\tsuccessor {\Xx, s}
				\tuple{\reify M, d', u'}
	\end{align*}
	with $\tuple{\reify M, d', u'} \simul \tuple{\reify N, \Rdown, v'}$.
	By the second item in the definition of a \kl{weak simulation} we have that $d'=\Rdown$ and $u'$ is labeled by $\overline a$;
	in particular $\tuple{\reify M, d', u'}$ does not have a \kl{successor} (cf.~the definition of a \kl{successor}).
	By the definition of a \kl{derived tree} we thus have
	\begin{align*}
		\semdt {\Xx, s} {\reify M, d, u}
		 &= a \applyto \semdt {\Xx, s} {\reify M, \Rup_1, u'} \applyto \cdots \applyto \semdt {\Xx, s} {\reify M, \Rup_r, u'}.
	\end{align*}
	This shows that the \kl{derived trees} of
	$\tuple{\reify M, d, u}$ and $\tuple{\reify N, e, v}$
	agree on the label of their \kl{root}.
	Moreover, the second item in the definition of a \kl{weak simulation} also says that
	\begin{align*}
		\tuple{\reify M, \Rup_i, u'} \simul \tuple{\reify N, \Rup_i, v'}&&\mbox{for all }i\in\set{1,\dots,r}.
	\end{align*}
	By coinduction on \kl{derived trees} we thus have
	\begin{align*}
		\semdt {\Xx, s} {\reify M, \Rup_i, u'}  = \semdt {\Xx, s} {\reify N, \Rup_i, v'}&&\mbox{for all }i\in\set{1,\dots,r}.
	\end{align*}
	This shows that all relevant \kl{subtrees} also agree,
	thus concluding the proof.
\end{proof}

Recall that our goal is to prove \cref{lem:reify:beta:derived}, saying that \kl{derived trees} are invariant under \kl{head beta-reductions}.
Fix thus two \kl{closed} \kl{first-order} \kl{input lambda-terms} $M,N$ \kl{of type} $\otyp$, such that $M\headreducesto N$.
Then
\begin{align*}
	M = (\lambda x.K)\applyto L\applyto L_1\applyto\dots\applyto L_j
	&&\mbox{and}&&
	N = \subst K L x \applyto L_1\applyto\dots\applyto L_j.
\end{align*}
\AP
We now define a \emph{concrete} \kl{weak simulation},
denoted by overloading the same symbol ``$\tsimul$'',
between \kl{valid configurations} involving $\reify M$ and $\reify N$.
Define $\tuple{\reify M,d,u}\intro*\tsimul\tuple{\reify N,d,v}$ if either
(for $j$ being the same number as above, i.e., the number of arguments $L_1,\dots,L_j$ in $M$ and $N$)
\begin{compactenum}
\item	$v$ is not of the form $1^j\cdot v'$ and $u=v$
	(i.e., $v$ is outside of $\reify{(\subst K L x)}$, and $u$ is the same \kl{node} in the other \kl{lambda-tree}),
\item	$v=1^j\cdot v'$, and $u=1^{j+2}\cdot v'$, and $u$ is not labeled by $\overline x$ in $\reify M$
	(i.e., $v$ is inside the ``$\reify K$ part'' of $\reify{(\subst K L x)}$,
	and $u$ is the corresponding \kl{node} of $\reify K$ in $\tat\applyto(\tlambda{x}\applyto\reify K)\applyto\reify L$), or
\item	$v$ can be written as $v=1^j\cdot v'\cdot v''$, where $1^{j+2}\cdot v'$ has label $\overline x$ in $\reify M$, and $u=1^j\cdot 2\cdot v''$
	(i.e., $v$ is inside some $\reify L$ in $\reify{(\subst K L x)}$, 
	and $u$ is the corresponding \kl{node} of $\reify L$ in $\tat\applyto(\tlambda{x}\applyto\reify K)\applyto\reify L$).
\end{compactenum}
\begin{figure*}
	\centering
	\def\svgscale{0.5}\import{pics/}{simulation.pdf_tex_ok}\vspace{-1ex}
	\caption{An illustration of the \kl{weak simulation} relation ``$\tsimul$''}
	\label{fig:simulation}
\end{figure*}
See \cref{fig:simulation} for an illustration of the definition above for $j = 2$.
As a special case of the last condition,
the \kl{root} of $\reify L$ in $\reify M$ is in relation with the \kl{root} of some copy of $\reify L$ in $\reify N$.
Note that $\tuple{\reify M,d,u}\tsimul\tuple{\reify N,d,v}$ holds only for \kl{configurations} with the same direction $d$.
Note also that for every \kl{node} $v$ of $\reify N$ we can find a (unique) corresponding \kl{node} $u$ in $\reify M$,
but it is not the case that for every \kl{node} $u$ of $\reify M$ there is a corresponding \kl{node} $v$ in $\reify N$.
In particular, in ``$\tsimul$'' we do not have any pairs with $u=1^j$ nor $u=1^{j+1}$
(i.e.,~with $u$ pointing to the ``$\tat$'' or ``$\tlambda{x}$'' at the top of $\tat\applyto(\tlambda{x}\applyto\reify K)\applyto\reify L$);
also, \kl{nodes} labelled with $\tvar x$ in $\reify K$ from $\reify M$ (if any) are not in relation with any \kl{node} of $\reify N$;
finally, if $x$ does not occur in $K$,
then additionally no \kl{node} in $\reify L$
is in relation with a \kl{node} in $\reify N$.

\begin{lemmaOK}
	\label{lem:beta:successor}
	The binary relation ``$\tsimul$'' is a \kl{weak simulation}.
\end{lemmaOK}

Before proving \cref{lem:beta:successor}, let us see how \cref{lem:reify:beta:derived} follows from it: 

\begin{proofof}{\cref{lem:reify:beta:derived}}
	Recall that $\semdt {\Xx, s} {\reify N} = \semdt {\Xx, s} b$
	with \kl{configuration} $b = \tuple{\reify N, \Rdown, v}$,
	where $v=\varepsilon$ is the \kl{root} of the \kl{lambda-tree} $\reify N$.
	In order to ensure $c \tsimul b$, we have to be a bit careful while choosing the \kl{configuration} $c$ for $\reify M$.
	We take $c=\tuple{\reify M, \Rdown,u}$, where $u$ is chosen as follows:
	\begin{compactitem}
	\item	If $j\geq 1$, we can simply take $u=\varepsilon$ (where $j$ is the same number as previously, i.e., the number of arguments $L_1,\dots,L_j$ in $M$ and $N$).
	\item	If $j=0$ and $K\neq x$, we rather take $u=1\cdot 1$ (which is the \kl{node} where $\reify K$ starts).
		Note that $\tuple{\reify M, \Rdown, \varepsilon}\successor {\Xx, s}^*c$ in two steps.
	\item	Finally, if $j=0$ and $K=x$, we take $u=2$ (which is the \kl{node} where $\reify L$ starts).
		This time we also have $\tuple{\reify M, \Rdown, \varepsilon}\successor {\Xx, s}^*c$.
	\end{compactitem}
	Thus, in any case $\semdt {\Xx, s} {\reify M} = \semdt {\Xx, s} c$.
	Moreover, we have $c \tsimul b$ by definition.
	By \cref{lem:sim:der} we have $\semdt{\Xx, s} b = \semdt{\Xx, s}c$, as required.
\end{proofof}

What remains is to prove \cref{lem:beta:successor}:
\begin{proofof}{\cref{lem:beta:successor}}
	Consider two \kl{valid configurations} $c=\tuple{\reify M,d,u}$ and $b=\tuple{\reify N,d,v}$
	such that $c\tsimul b$.
	Observe first that then necessarily $u$ and $v$ have the same label,
	and that $\tuple{\reify M,e,u}\tsimul\tuple{\reify N,e,v}$ for every other direction $e$ for which the \kl{configurations} are \kl{valid}.
	This immediately implies the second item in the definition of a \kl{weak simulation}.

	Let us check the first item.
	To this end, consider $b'=\tuple{\reify N,e,v'}$ such that $b \successor {\Xx, s} b'$; we have to find $c'$ such that $c \tsuccessor {\Xx, s} c'$ and $c' \tsimul b'$.
	By the definition of a \kl{successor}, $v'$ is either a \kl{child} of $v$, or a \kl{parent} of $v$.
	A natural candidate for $c'$ is the unique \kl{$(\Xx,s)$-successor} of $c$.
	Because $c$ and $b$ have the same direction and the same \kl{node} label, 
	the \kl{successor} indeed exists, and it is of the form $\tuple{\reify M,e,u'}$,
	where $u'$ is in the same relation to $u$ as $v'$ to $v$ (i.e., $u'=\parent(u)$ if $v'=\parent(v)$, and $u'=\child_i(u)$ if $v'=\child_i(v)$).
	If $v$ and $v'$ are in the same ``part'' of $\reify N$, that is, both in $\reify L$, both in $\reify{(\subst K L x)}$ but outside of $\reify L$,
	or both outside of $\reify{(\subst K L x)}$, then we have $c' \tsimul b'$, and we are done.
	
	The situation is more complicated only when $v$ and $v'$ are in different parts.
	Let us first consider the border of $\reify{(\subst K L x)}$:
	\begin{compactenum}
		\begin{figure*}
			\centering
			\def\svgscale{0.5}\import{pics/}{simulation_1.pdf_tex_ok}\vspace{-1ex}
			\caption{Illustration of Case 1 in the proof of \cref{lem:beta:successor}}
			\label{fig:simulation:1}
		\end{figure*}
		
	\item
		Suppose that $v=1^{j-1}$, $v'=1^j$ (recall that $j$ is the number of arguments following the redex, i.e., $N=\subst K L x \applyto L_1\applyto\dots\applyto L_j$,
		so $v'$ is the \kl{root} of $\reify{(\subst K L x)}$, and $v$ its \kl{parent}).
		Then $u=1^{j-1}$ and $e=\Rdown$.
		If $K\neq x$, in $\reify M$ we can make three \kl{successor steps},
		going through the \kl{nodes} labeled by $\tat$ and $\tlambda{x}$ to the \kl{root} of $\reify K$ in $\tat\applyto(\tlambda{x}\applyto\reify K)\applyto\reify L$;
		for $c'=\tuple{\reify M,\Rdown,1^{j+2}}$ we have $c'\tsimul b'$;
		see \cref{fig:simulation:1},
		where the thick arrow on the right
		is simulated by the three dashed arrows on the left,
		denoting \kl{successor steps}.
		
		If $K=x$, we need three more \kl{successor steps}: from the $\overline{x}$-labelled \kl{root} of $\reify K$ we go up to the $\tlambda{x}$-labelled \kl{node} with direction $\Rup_x$,
		then to the $\tat$-labelled \kl{node} with direction $\Rup_1$, and finally we go down to the \kl{root} of $\reify L$ with direction $\Rdown$;
		for $c'=\tuple{\reify M,\Rdown,1^j\cdot 2}$ we have $c'\tsimul b'$.
		
		\begin{figure*}
			\centering
			\def\svgscale{0.5}\import{pics/}{simulation_2.pdf_tex_ok}\vspace{-1ex}
			\caption{Illustration of Cases 2 and 3 in the proof of \cref{lem:beta:successor}}
			\label{fig:simulation:2:3}
		\end{figure*}
		
	\item	Suppose that $v=1^j$, $v'=1^{j-1}$, and $e=\Rup_i$.
		Then $u=1^{j+2}$ is the \kl{root} of $\reify K$ in $\tat\applyto(\tlambda{x}\applyto\reify K)\applyto\reify L$.
		Note that $\subst K L x$ \kl{has type} $\otyp^j\arr\otyp$, so because $b'$ is \kl{valid}, we have $i\leq j$.
		It is important that $\lambda x.K$ is also a \kl{subterm} of $M$ and \kl{has type} $\otyp^{j+1}\arr\otyp$;
		because $M$ is an \kl{input lambda-term}, we have $j+1\leq\mar(M)\leq\mar(\Gg)=s$.
		We can thus make three \kl{successor steps} in $\reify M$
		(c.f.~\cref{fig:simulation:2:3}),
		going up through the \kl{nodes} labeled by $\tlambda x$ and $\tat$:
		\begin{align*}
			\tuple{\reify M,d,1^{j+2}}
			\successor{\Xx,s}\tuple{\reify M,\Rup_i,1^{j+1}}
			\successor{\Xx,s}\tuple{\reify M,\Rup_{i+1},1^j}
			\successor{\Xx,s}\tuple{\reify M,\Rup_i,1^{j-1}}=c'.
		\end{align*}
	\item	Finally, suppose that $v=1^j$, $v'=1^{j-1}$, and $e$ is not of the form $\Rup_i$.
		Then necessarily $e=\Rup_y$ and again $u=1^{j+2}$.
		Recall that $M$ is \kl{closed}, implying that $L$ is \kl{closed}.
		Thus $x$ is not \kl(lambda){free} in $\subst K L x$, so $y\neq x$ because $b'$ is \kl{valid}.
		This allows us to make three \kl{successor steps} in $\reify M$
		(c.f.~\cref{fig:simulation:2:3}),
		going up through the \kl{nodes} labeled by $\tlambda{x}$ and $\tat$; we take $c'=\tuple{\reify M,\Rup_y,1^{j-1}}$.
	\end{compactenum}
	
	\begin{figure*}
		\centering
		\def\svgscale{0.5}\import{pics/}{simulation_4.pdf_tex_ok}\vspace{-1ex}
		\caption{Illustration of the last case in the proof of \cref{lem:beta:successor}}
		\label{fig:simulation:4}
	\end{figure*}
	
	Next, we consider the border of $\reify L$.
	Recall that $M$ is a \kl{first-order} \kl{lambda-term}, implying that $L$ is \kl{of type} $\otyp$.
	Thus, because $b'$ is \kl{valid}, we can never leave $\reify L$ with direction $\Rup_i$.
	Likewise, because $L$ is \kl{closed} and $b'$ is \kl{valid}, we can never leave $\reify L$ with direction $\Rup_y$, for any \kl(lambda){variable} $y$.
	It remains to consider the case when we enter $L$ from above.
	This means that $v'=1^j\cdot w$ is the \kl{root} of some copy of $\reify L$ in $\reify N$, while the \kl{node} $1^{j+2}\cdot w$ in $\reify M$ is labelled by $\overline x$.
	We then have $e=\Rdown$.
	The case of $K=x$ is already covered by Item 1 above; we may assume that $K\neq x$.
	Then $u$ is the \kl{parent} of the $\overline x$-labelled \kl{node} $1^{j+2}\cdot w$,
	and the \kl{successor} of $c$ is $\tuple{\reify M,\Rdown,1^{j+2}\cdot w}$.
	Although $x$ may occur in some \kl{lambda-binders} in $K$, 
	we know that for the considered occurrence of $x$ we have \kl{substituted} $L$, so it is not under the scope of $\lambda x$ inside $K$
	(i.e., no ancestor of the \kl{node} $1^{j+2}\cdot w$ inside $\reify K$ is labelled by $\tlambda{x}$).
	Thus the sequence of \kl{successors} from $\tuple{\reify M,\Rdown,1^{j+2}\cdot w}$ goes up with direction $\Rup_x$ until it reaches the $\tlambda{x}$-labelled \kl{node} $1^{j+1}$.
	The \kl{successor} of $\tuple{\reify M,\Rup_x,1^{j+1}}$ is $\tuple{\reify M,\Rup_1,1^j}$, and its \kl{successor} is $\tuple{\reify M,\Rdown,1^j\cdot 2}$
	(whose \kl{node} is the \kl{root} of $\reify L$;
	c.f.~\cref{fig:simulation:4});
	taking this \kl{configuration} as $c'$, we have $c'\tsimul b'$, as required.
\end{proofof}

\end{document}